\documentclass[11pt,a4paper]{article}
\usepackage{graphicx}
\usepackage{amssymb,amsmath,color,mathtools,comment}
\usepackage{url}
\usepackage{listings,enumitem}
\usepackage[mathcal]{eucal}

\DeclareMathAlphabet\mathbfcal{OMS}{cmsy}{b}{n}

\usepackage[colorlinks=true, allcolors=blue]{hyperref}

\RequirePackage{natbib}

\usepackage[a4paper,margin=1in]{geometry}
\newcommand{\BEAS}{\begin{eqnarray*}}
\newcommand{\EEAS}{\end{eqnarray*}}
\newcommand{\BEA}{\begin{eqnarray}}
\newcommand{\EEA}{\end{eqnarray}}
\newcommand{\BEQ}{\begin{equation}}
\newcommand{\EEQ}{\end{equation}}
\newcommand{\BIT}{\begin{itemize}}
\newcommand{\EIT}{\end{itemize}}
\newcommand{\BNUM}{\begin{enumerate}}
\newcommand{\ENUM}{\end{enumerate}}
\newcommand{\BA}{\begin{array}}
\newcommand{\EA}{\end{array}}

\newcommand{\corr}{{\mathop { \rm corr{}}}}
\newcommand{\Var}{\mathop{ \rm var{}}}

\newcommand{\argmax}{\mathop{\rm argmax}}

\newcommand{\sign}{\mathop{ \rm sign}}

\newcommand{\rb}{\mathbb{R}}

\newcommand{\BlackBox}{\rule{1.5ex}{1.5ex}}

\newenvironment{proof}[1][Proof]{\par\noindent{\bf #1\ }}{\hfill\BlackBox\par\medskip}
\newtheorem{Lemma}{Lemma}
\newtheorem{Theorem}{Theorem}
\newtheorem{Condition}{Condition}
\newtheorem{Definition}{Definition}

\newtheorem{Proposition}{Proposition}

\def \E{{\mathbb E}}
\def \P{{\mathbb P}}

\def \X{{\mathcal X}}
\def \Y{{\mathcal Y}}

\def \P{{\mathbb P}}

\usepackage{stackengine}
\usepackage{scalerel}
\usepackage{xcolor}
\usepackage{booktabs,array,tabularx,threeparttable}

\usepackage[linesnumbered,ruled,vlined]{algorithm2e}
\SetKwInput{KwInput}{Input}
\SetKwInput{KwReturn}{Return}

\title{Generalized Engression Models}

\author{Xinwei Shen$^{1}$, Zijian Guo$^{2}$, and Francis Bach$^{3}$\\[1ex]
{\normalsize $^{1}$Department of Statistics, University of Washington}\\
{\normalsize $^{2}$Center for Data Science, Zhejiang University}\\
{\normalsize $^{3}$Inria, \smash{\'E}cole Normale Sup\'erieure, PSL Research University}}

\date{}

\begin{document}
\maketitle

\begin{abstract}
We consider estimating the conditional distribution of a multivariate outcome given covariates when its coordinates may be continuous, binary, categorical, ordinal or rankings, and are conditionally dependent on one another. Different statistical methods have been developed for each outcome type, and most of them target a summary of the conditional distribution, such as the mean of each coordinate, rather than the joint distribution of the outcome vector. We develop generalized engression models, a unified nonparametric distributional regression framework for outcomes of any type. The proposed method builds upon engression, a scoring-rule-based deep generative model, and introduces a data-type-specific link function and a stochastic perturbation that smooths the loss, enabling gradient-based training even with discontinuous links. We establish universal representation results for continuous, discrete and mixed outcomes. In simulations and in two applications, 242 species in a community ecology benchmark and a 17-dimensional mixed-type health outcome, the method matches type-specific models on marginal scores, improves on them on the joint distribution, and matches or exceeds purpose-built state-of-the-art joint species distribution models. Software is available in Python.

\end{abstract}

\section{Introduction}\label{sec:intro}

Statistics typically has a different regression model for every type of outcome. A continuous response is fitted by least squares or by quantile regression; a binary response by logistic or probit regression; a categorical one by the multinomial logit; an ordinal one by the proportional-odds model; a ranking by Plackett--Luce. Generalized linear models (GLMs) \citep{nelder1972generalized,mccullagh1989generalized} showed that several of these are one model,
\begin{equation}\label{eq:glm}
\E[Y\mid X]=h(\beta^\top X),
\end{equation}
in which the response $Y$ given the covariates $X$ follows a distribution in an exponential family, continuous, binary or categorical, $\beta^\top X$ is a linear predictor and $h$ a fixed function chosen for the type, the inverse of the link. Outcomes of different types share the linear predictor and the maximum likelihood estimation principle. However, outcomes outside the exponential family are beyond the scope of classical GLMs. The linear predictor, in turn, has been extended by generalized additive models \citep{hastie1990generalized}, tree-based models \citep{breiman2001random}, and neural networks \citep{tran2020bayesian}.

Whatever the type, the estimand is often the conditional mean or the class probabilities of $Y$ given $X$. Such summaries can miss much of what matters. For a single discrete response the class probabilities do characterize the distribution, but for multiple ones they say nothing about how the coordinates depend on one another; for a continuous response the mean can be a poor summary, for instance when the distribution is bimodal and the mean falls between the modes. The joint conditional distribution of the outcome vector carries all of this information, the uncertainty of each coordinate as well as the dependence between them. Distributional regression estimates the whole conditional distribution rather than a summary of it, through its quantiles \citep{koenker1978regression}, through all parameters of a response family \citep{rigby2005generalized} or through a transformation of the response \citep{hothorn2014conditional}; see \citet{kneib2023rage} for a review of the field. It has been developed mainly for a single response, usually a continuous one.

For a multivariate response, the dependence between the coordinates is either confined to a fixed family, as in the standard joint models and in copula models, or ignored altogether when each coordinate is fitted on its own. The standard joint models, the multivariate probit \citep{chib1998analysis} and the Plackett--Luce model \citep{luce1959individual,plackett1975analysis} among them, read the outcome off a latent vector with a linear index and either correlated Gaussian noise, whose dependence is determined by pairwise correlations, or independent Gumbel noise across items, which implies the independence of irrelevant alternatives; in their standard form, this latent dependence does not vary with the covariates.
For multivariate discrete outcomes, one might instead treat the whole outcome as a single categorical variable, but the number of categories grows too fast: $k$ binary labels give $2^k$ patterns, so that the $242$ species of Section~\ref{sec:jsdm} give $2^{242}$ patterns to be learned from $600$ training plots, and a ranking of $k$ items gives $k!$ orderings, far beyond what is computationally and statistically feasible with a moderate $k$.

An alternative to distributional regression is generative modeling.  
Generative models that generate images, text and molecules have become everyday technology. 
At their core, all of them solve a statistical problem, that of learning a probability distribution, usually a conditional one, from a sample, and they solve it by a route that classical statistics has rarely taken. Instead of specifying a density or a likelihood, one fits a map $g$ that turns covariates $x$ and a noise variable $\varepsilon$ into an outcome, so that the conditional distribution is represented by the draws $g(x,\varepsilon)$, and any functional of it, a mean, a quantile, the probability of an event defined by $Y$, can be estimated by simple Monte Carlo \citep{shen2024engression}. The representation is nonparametric, scales to high-dimensional outcomes and large samples, and needs no analytic form for the distribution. What has kept it from discrete data is that $g$ has to be differentiable, so that its output is usually continuous. Generative models have been extended to discrete data by methods specific to each type \citep{jang2017categorical,austin2021structured,kotelnikov2023tabddpm}, so that the type-specific habit of statistics reappears there. This paper handles outcomes of any type in one model, and in doing so puts the classical type-specific models and generative models under one roof. Section~\ref{sec:background} reviews both bodies of work.

Engression \citep{shen2024engression} is the generative model this paper builds on. It fits the generator by minimizing the energy score \citep{gneiting2007strictly}, a proper scoring rule by which probabilistic forecasts are judged \citep{gneiting2014probabilistic}, computed from draws of the model alone: no adversary to balance, no density to evaluate, no likelihood to integrate, only one loss minimized by gradient descent, which makes the fit cheap and stable. Engression has since been applied across the sciences, from climate emulation \citep{meinshausen2026gcmagicc,schillinger2026enscale} to epidemic forecasting \citep{pathak2026spatiotemporal}. These applications make the practical case for engression as a statistical tool, yet all of them model continuous outcomes; this paper extends the method to outcomes of any type.

This paper proposes \emph{generalized engression models} (GEMs), a unified framework for estimating the conditional distribution of an outcome of any type, continuous, binary, categorical, ordinal, a ranking, or a vector mixing these. A GEM is a conditional generative model of the form
\begin{equation*}
	Y=h\big(g(X,\varepsilon)+\sigma(X)\eta\big),
\end{equation*}
where $g$ is a neural network generating a continuous vector, $h$ a fixed link mapping that vector into the outcome space, $\varepsilon$ a standard Gaussian noise, and $\eta$ a perturbation with a learned scale $\sigma(X)$. As a GLM adapts linear regression to the type of the response through its link, a GEM extends engression through the link $h$. The link is the only ingredient that depends on the type: the identity for a continuous coordinate, the sign for a binary one, the indicator of the largest coordinate for a categorical one, rounding for an ordinal one, the rank vector for a ranking, and these block by block for a mixed outcome (Table~\ref{tab:links}). Everything else, the generator, the loss and the fitting algorithm, is the same for every type; with the identity link the model is engression. 

Set against GLMs in \eqref{eq:glm}, $g(X,\varepsilon)+\sigma(X)\eta$ in GEMs takes the place of the linear predictor $\beta^\top X$, and the outcome $Y$ itself that of its conditional mean. GEMs are thus generative models that produce samples of $Y$ given $X$ from random noise, by which the full conditional distribution of $Y$ is estimated. A neural network $g$ also makes the model flexible: it imposes no parametric form on the response distribution, which, unlike in GLMs, need not belong to an exponential family. The links also differ: the link of a GLM is a smooth, invertible transformation of the mean, and $h$ in \eqref{eq:glm} is its inverse; ours acts on a continuous latent vector, as in the latent-variable form of the probit model, and is often neither smooth nor invertible. 

With a linear generator and a constant scale, GEMs reduce to classical latent-variable models such as the multivariate probit (see Section~\ref{sec:model}). Beyond these special cases, Section~\ref{sec:universality} establishes universal representation of GEMs: the conditional distribution of an outcome of any type can be written as a GEM with a measurable generator $g$, and, under positivity and continuity conditions, with a continuous $g$, which a neural network can approximate.

For fitting GEMs, we use the energy score as the loss, as in engression, and the perturbation $\eta$ is what makes a discrete link tractable. For discontinuous links, the gradient of the empirical risk is zero almost everywhere. Averaging over $\eta$ restores it: the outcome still takes discrete values, but the probability of each value varies smoothly with $g$, so the expected loss is differentiable even though the link is not, in the spirit of perturbed optimizers \citep{berthet2020learning}. The gradient is an expectation over the perturbation that involves only the values of the loss, not its derivatives (Proposition~\ref{prop:grad}), so it can be estimated by Monte Carlo.

GEM is implemented in the Python package \texttt{engression}\footnote{\url{https://github.com/xwshen51/engression}}. A fit takes one call, and the same call serves every outcome type; only the \texttt{data\_type} argument changes:
\makeatletter\lst@AddToHook{OnEmptyLine}{\vspace{-.6\baselineskip}}\makeatother
\begin{lstlisting}[basicstyle=\ttfamily\small]
from engression import engression

fit = engression(x, y)                                   # continuous outcome
fit = engression(x, y, data_type="multilabel")           # several binary labels
fit = engression(x, y, data_type="ranking")              # ranking of columns of y
fit = engression(x, y, data_type={"continuous": 0,       # mixed outcome: one link
                                  "multilabel": 7,       # per block, each given by
                                  "multiclass": 12})     # its first column

y_draws = fit.sample(x_new, sample_size=1000)            # draws of Y given x_new
\end{lstlisting}
The draws in \texttt{y\_draws} approximate the conditional distribution of $Y$ given $X=x_{\mathrm{new}}$ and can be used to estimate any functional of it. 

\begin{figure}[!t]
\centering
\begin{tabular}{@{}c@{}}
	(a) Multi-label outcome: three binary labels with strong co-occurrence\\
	\includegraphics[width=.38\textwidth]{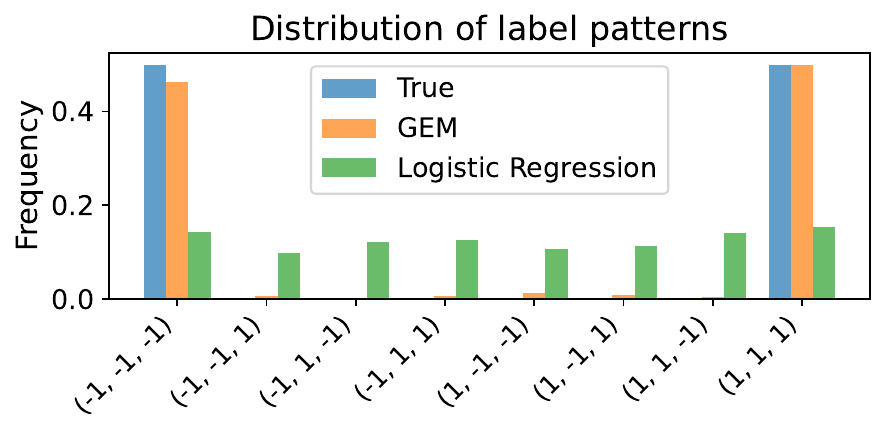}~~~~
	\includegraphics[width=.52\textwidth]{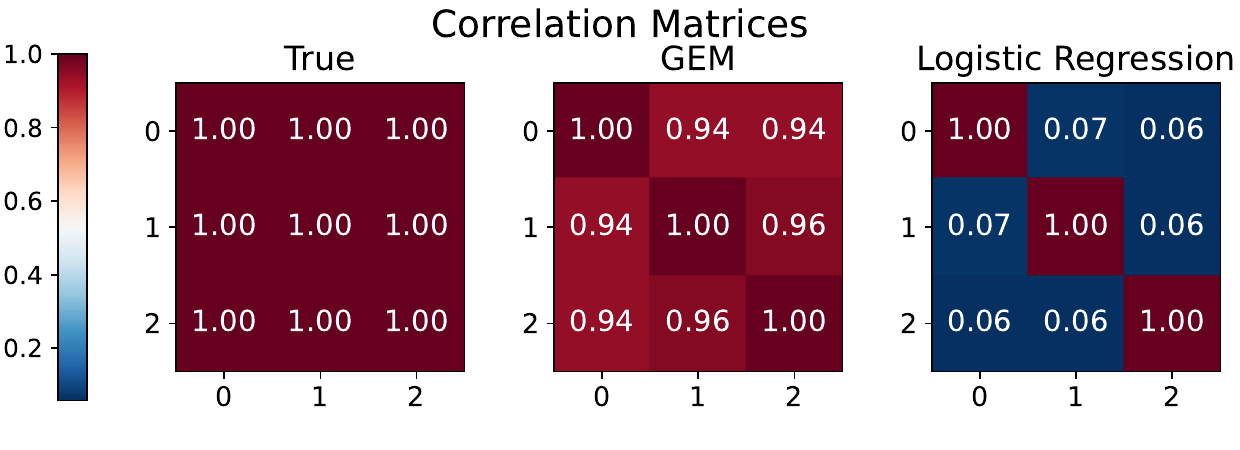}\\[2pt]
	(b) Mixed-type outcome: one continuous and one categorical variable with four levels\\
	\includegraphics[width=.96\textwidth]{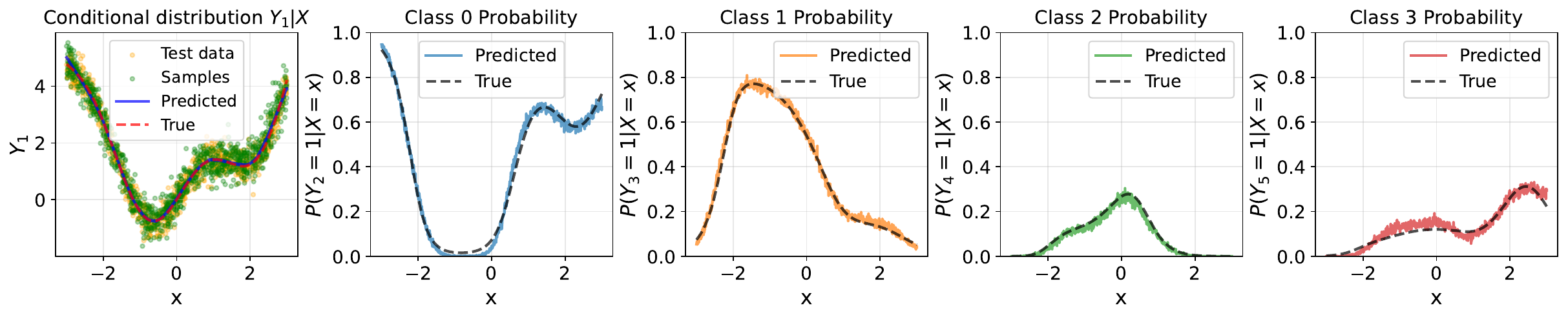}\\[2pt]
	(c) Ranking outcome: an ordering of three items\\
	\includegraphics[width=.9\textwidth]{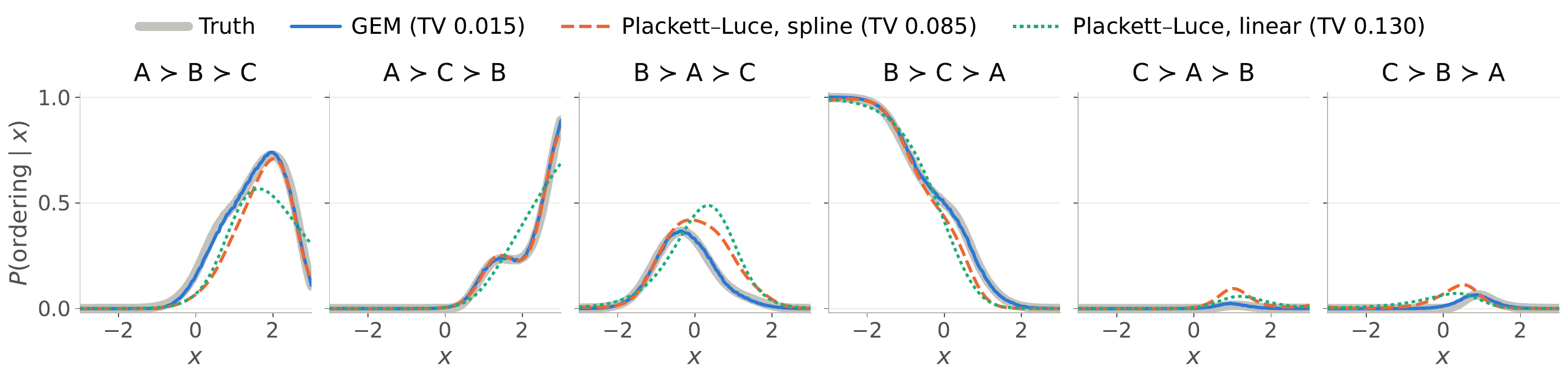}
\end{tabular}
\caption{GEMs on various outcome types, standard type-specific models. (a) Distribution of the eight label patterns and correlation between the labels: logistic regressions have the correct marginals but, drawing the labels independently, spread mass over all patterns and miss the correlation; GEM recovers both. (b) Conditional distribution and mean of the continuous coordinate and the four class probabilities as functions of $x$, recovered from one model of the joint outcome. (c) Probability of each of the six orderings as a function of $x$: the Plackett--Luce model cannot match them with linear or with spline scores, whereas GEM does, with a total-variation (TV) error of $0.015$. GEMs are fitted with the same algorithm in all three panels; only the link differs.}
\label{fig:illustration}
\end{figure}

Figure~\ref{fig:illustration} gives a first impression of what GEM does on various outcome types, against standard type-specific models. For three co-occurring binary labels, GEM recovers the distribution of label patterns and the correlation between the labels, which independent logistic regressions, correct on the marginals, cannot. For an outcome with a continuous and a categorical coordinate it recovers both conditional laws from a single model of the pair. For a ranking of three items it matches the probability of every ordering as a function of $x$, where the Plackett--Luce model cannot, with linear or with spline scores, because the data violate the independence of irrelevant alternatives. The simulations of Section~\ref{sec:simulation} conduct the comparison at scale, where GEM remains competitive.

The range of outcomes the model accepts opens up many applications; Section~\ref{sec:applications} develops two. The first is from community ecology, where the outcome is the vector of presences and absences of many species at a site. Joint species distribution modeling is a large field \citep{pollock2014understanding,warton2015many,ovaskainen2017hmsc,tikhonov2020joint,pichler2021new}, motivated by questions only a joint model can answer: the distribution of the number of species a site will hold, with a prediction interval that can be trusted, and which species co-occur beyond what their shared environmental factors explain \citep{galiana2024power}. The leading implementations model the joint presence--absence vector through a multivariate probit or logistic regression with a low-rank Gaussian latent factor structure to induce species correlations. While they have advanced the field considerably, they share important limitations: the linear--Gaussian latent factor structure captures only pairwise, Gaussian correlations between species, while higher-order or nonlinear co-occurrence patterns---such as multi-species guilds or mutually exclusive competitors---cannot be represented. Moreover, inference by Markov chain Monte Carlo (MCMC), as in \citet{ovaskainen2017hmsc}, is computationally expensive for large species richness or many sites, which has motivated faster gradient-based implementations \citep{pichler2021new}. GEM with the sign link offers a nonparametric generative model that can represent higher-order co-occurrence dependencies and is trained efficiently via gradient descent without MCMC. On a published benchmark of arctic vegetation \citep{niittynen2018importance,norberg2019comprehensive}, $242$ species on $1{,}200$ plots with half held out, GEM matches or exceeds the purpose-built joint models on the joint predictive distribution and is far better calibrated for species richness than models fitting the species one at a time. It also has a clear computational advantage: one fit takes seconds on one GPU, against minutes to hours for the other joint models.

The second application is a health survey, in which each of $27{,}983$ participants contributes seven continuous measurements, five diagnosis and medication binary indicators and an ordinal self-rated health score. The quantities of clinical interest are functionals of their joint law, such as the probability of undiagnosed disease, a laboratory value above a diagnostic threshold together with the absence of a diagnosis; a block-by-block model reaches them only by assuming the blocks conditionally independent or through a copula, and a point prediction not at all. GEM fits all three blocks jointly in one model, improving on stacked-marginal, Gaussian-copula and tree-based alternatives on the joint distribution on every split and on the continuous and binary marginals on almost every split, and coming close on the ordinal one.

The rest of the paper is organized as follows. Section~\ref{sec:background} reviews the type-specific models and the generative models and recalls engression. Section~\ref{sec:method} sets out the model, the links, the closed-form gradient and the fitting algorithm. Section~\ref{sec:universality} establishes that the model can represent any conditional distribution of a continuous, discrete or mixed outcome. Sections~\ref{sec:simulation} and~\ref{sec:applications} report the simulations and the two applications, and Section~\ref{sec:discussion} concludes. Proofs and additional results and experimental details are in the appendix.

\section{Background}\label{sec:background}

This section reviews the two bodies of work, the type-specific regression models of statistics and the generative models of machine learning, and recalls engression, on which the proposal builds.

\subsection{Regression models for outcomes of a given type}\label{sec:classical}

Most classical regression models can be written in a latent-variable form,
\begin{equation*}
	Y=h(BX+\xi),
\end{equation*}
where $BX\in\rb^k$ is a vector of linear predictors, $\xi\in\rb^k$ is noise independent of $X$, and $h$ reads the outcome off the latent vector by a rule fixed by the type of $Y$, as the links of Table~\ref{tab:links} do, block by block for a mixed outcome \citep{sammel1997latent}, except that the cut points of an ordinal outcome are estimated along with $B$. The identity link with Gaussian noise gives the linear model. For discrete outcomes \citep[see, e.g.,][]{agresti2013categorical}, independent standard logistic or Gumbel noise gives the logit models, among them the proportional-odds model \citep{mccullagh1980regression}, the multinomial logit \citep{mcfadden1974conditional} and, for rankings, the rank-ordered logit \citep{beggs1981assessing,hausman1987specifying}; Gaussian noise gives the probit models and the model of \citet{thurstone1927law}. For a single binary or categorical response, averaging the rule over the noise gives the smooth inverse link of the GLM~\eqref{eq:glm}. For several coordinates, the dependence between them comes from the covariance of a Gaussian $\xi$: the multivariate probit \citep{ashford1970multi,chib1998analysis} leaves it unrestricted, joint species distribution models \citep{warton2015many,ovaskainen2017hmsc,niku2019gllvm} build it from a few latent factors, $\xi=\Lambda\varepsilon+\eta$, and the Gaussian copula reads mixed continuous and ordered coordinates off $\xi$ through monotone maps that set their marginal laws, left free by \citet{hoff2007extending} and given by GLMs in $X$ by \citet{song2009joint}.

These models are fitted through their likelihood; for a discrete outcome it is the probability that the latent vector falls in the set $h^{-1}(y)$, a closed form in the logit models and otherwise an integral over the noise, which calls for simulation, Markov chain Monte Carlo or variational approximations. GEM has the same form, with a neural network $g(X,\varepsilon)$ in place of $BX+\Lambda\varepsilon$ and a scale $\sigma(X)$ on $\eta$, and needs no likelihood, being fitted from its draws. Multivariate distributional regression makes the marginal laws flexible but keeps the dependence in a parametric copula, whose parameters may depend on $X$ \citep[see][]{klein2022multivariate,kneib2023rage}; distributional random forests \citep{cevid2022drf} are fully nonparametric, but their draws are confined to the observed outcome vectors, which they reweight.

\subsection{Generative models}
Generative models differ in the neural network that serves as the generator $g$ and in the criterion by which it is fitted. 
The multilayer perceptron, which alternates affine maps and coordinate-wise nonlinearities, is the basic neural network and the one we use throughout; generative models of images and text use convolutional networks and transformers \citep{vaswani2017attention} instead, and either could serve as $g$ here. As to the criterion, generative adversarial networks \citep{goodfellow2014generative} train $g$ through a minimax game against a discriminator, variational autoencoders \citep{kingma2014auto} by a variational lower bound on the likelihood, normalizing flows \citep{papamakarios2021normalizing} by the explicit likelihood of an invertible $g$, diffusion models \citep{ho2020denoising} by a denoising loss, and engression, recalled below, by a proper scoring rule.

For a discrete outcome, the common route is to model probabilities rather than draws, so that a likelihood is available: autoregressive models, large language models among them \citep{brown2020language}, take the coordinates one at a time, each given the previous ones, as classifier chains \citep{read2011classifier} do for binary labels, and discrete diffusion \citep{austin2021structured} reverses, step by step, a Markov chain that corrupts the categories. Generative models of mixed-type tables handle each discrete column by a continuous relaxation of its categorical draw \citep{xu2019modeling} or by discrete diffusion \citep{kotelnikov2023tabddpm}.

\subsection{Engression and the energy score}

Engression \citep{shen2024engression} is a generative model fitted by a proper scoring rule. Let $X\in\X$ be the covariates and $Y\in\Y\subseteq\rb^k$ the outcome. The target is the conditional distribution of $Y$ given $X=x$, denoted $P^*_{Y|X=x}$; $X$ may be absent, in which case the target is the marginal distribution of $Y$. The model is a map $g:\X\times\rb^m\to\rb^k$ applied to the covariates and a noise vector $\varepsilon\sim\mathcal N(0,I_m)$ independent of $X$, and for each $x$ the law of $g(x,\varepsilon)$ is the model's conditional distribution of $Y$ given $X=x$. Engression fits $g$ by minimizing the expected energy score \citep{gneiting2007strictly}: for a predictive law $P$ and an observation $y$ the energy score is $\E_P\|\hat Y-y\|^\beta-\frac12\E_P\|\hat Y-\hat Y'\|^\beta$, with $\hat Y,\hat Y'$ independent draws from $P$ and an exponent $\beta\in(0,2)$, and it is strictly proper on the distributions with a finite moment of order $\beta$. Throughout the paper we use the default $\beta=1$, for which the energy score of a scalar outcome is the continuous ranked probability score of the predictive distribution function $F$, $\mathrm{CRPS}(F,y)=\E_F|\hat Y-y|-\frac12\E_F|\hat Y-\hat Y'|$ \citep{gneiting2007strictly}.\footnote{\citet{shen2024engression} define the energy score and the CRPS with the opposite sign, as in \citet{gneiting2007strictly}, so that higher values are better, and minimize the negative energy score, which they call the energy loss; in this paper we use negatively oriented scores so that lower values are better.} The population objective is therefore
\begin{equation*}
	\E\bigg[\|Y-g(X,\varepsilon)\| - \frac12\|g(X,\varepsilon) - g(X,\varepsilon')\|\bigg],
\end{equation*}
where $(X,Y)$ follows the data distribution, $\varepsilon,\varepsilon'\sim\mathcal N(0,I_m)$ are independent, and the norm is the Euclidean norm. Since the score is strictly proper, if a class $\mathcal G$ contains a map $g$ with $g(x,\varepsilon)\sim P^*_{Y|X=x}$ for almost every $x$, the minimizers of the objective over $\mathcal G$ are exactly such maps \citep[Proposition~1]{shen2024engression}. In practice the population objective is replaced by its empirical counterpart over a sample, $\mathcal G$ is a class of neural networks, and the minimization is by stochastic gradient descent with fresh draws of $\varepsilon,\varepsilon'$ at each step.

With a neural network as its generator, engression is a model for a continuous outcome: its draws $g(x,\varepsilon)$ lie in $\rb^k$, so for a binary, categorical or ranked outcome they are not valid values of $Y$, and the fitted law is a continuous distribution spread around the discrete values rather than a distribution on them. What it does offer the discrete case is the energy score itself, which remains strictly proper for discrete and mixed outcomes. To extend it to outcomes of any type, we constrain the output of the generator to the support of the outcome, keep the energy score as the loss, and work out how to compute the gradient of the resulting objective with discontinuous links. The next section sets out the procedure.

\section{Generalized engression}\label{sec:method}
We define the model (Section~\ref{sec:model}), derive the gradient of its energy-score objective, which exists even when the link is discontinuous (Section~\ref{sec:gradient}), and give the fitting algorithm (Section~\ref{sec:algorithm}); Section~\ref{sec:kernel} replaces the energy score by other kernel scores.

\subsection{Model and links}\label{sec:model}

With $X\in\X$ and $Y\in\Y\subseteq\rb^k$ as in Section~\ref{sec:background}, the outcome may now be of any of the types of Table~\ref{tab:links}, or a vector of several of them. GEM represents the conditional distribution of $Y$ given $X$ by
\begin{equation}\label{eq:gem_model}
	Y=h\big(g(X,\varepsilon)+\sigma(X)\odot\eta\big),
\end{equation}
where $g:\X\times\rb^m\to\rb^k$, with $\varepsilon\sim\mathcal N(0,I_m)$, is a neural network generator as in engression, whose output is a continuous vector; $h:\rb^k\to\Y$ is the link, a fixed map from this vector to the outcome space; $\eta\sim\mathcal N(0,I_k)$ is a perturbation of the generator's output, introduced for the computational reason detailed in Section~\ref{sec:gradient}, whose scale $\sigma:\X\to(0,\infty)^k$ is a second neural network with a positive output layer; and $\odot$ and $\oslash$ denote the coordinate-wise product and quotient. We recommend a scale common to all coordinates, $\sigma(x)=\sigma_0(x)\mathbf 1_k$, as the default, and coordinate-wise scales for harder cases, such as a mixed outcome whose blocks are on different scales, which a single scale cannot suit at once. With the identity link the model recovers engression.

The two noises play different parts. The noise $\varepsilon$ is what makes the model distributional: passed through the network $g$ together with $X$, it yields a rich class of conditional distributions of $Y$ given $X$, both in the marginal laws of the coordinates and in the dependence between them. The perturbation $\eta$, with its learned scale, is what makes a discontinuous link tractable.
Both noises are independent of $X$ and standard Gaussian. Neither choice is essential: any atomless real-valued noise can play the part of $\varepsilon$ without changing the conditional laws the model can represent (Lemma~\ref{lem:noise_outsourcing}), and any continuously differentiable density whose value and gradient decay fast enough can replace the Gaussian $\eta$ (Appendix~\ref{app:proof_grad}). We sample from the conditional distribution of $Y|X=x$ by $h(g(x,\varepsilon)+\sigma(x)\odot\eta)$ with draws of $(\varepsilon,\eta)$, and any functional of the conditional law, such as the mean or quantiles, is estimated from them by Monte Carlo. 

With a linear generator $g(x,\varepsilon)=Bx+\Lambda\varepsilon$ and a constant scale $\sigma(x)=\sigma_0\mathbf 1_k$, GEM takes the latent-variable form of Section~\ref{sec:classical} with Gaussian noise $\xi=\Lambda\varepsilon+\sigma_0\eta$, whose covariance is unrestricted when $m\ge k$ and has the factor structure of joint species distribution models when $m<k$. The sign link then gives the multivariate probit, the argmax link the multinomial probit and the ranking link Thurstone's model; the ordinal link, whose cut points are fixed, gives the ordered probit with equally spaced cut points. With $\Lambda=0$ and a logistic or Gumbel $\eta$ in place of the Gaussian one, the same links give the logit models, again with equally spaced cut points for an ordinal outcome, and a nonzero $\Lambda$ adds Gaussian random effects to them. With a neural network $g$, GEM given $\varepsilon$ is still one of these models, with index $g(x,\varepsilon)$ and scale $\sigma(x)$, and mixing over $\varepsilon$ gives a much larger class (Section~\ref{sec:universality}).

The generator and the scale are learned, the link is fixed by the type of $Y$ (Table~\ref{tab:links}), and the noise dimension $m$ is a modeling choice, like the width of the networks. The next two sections show how the model is fitted, and Section~\ref{sec:universality} which conditional distributions it can represent.

\begin{table}[htp!]
	\centering
	\setlength{\belowcaptionskip}{4pt}
	\caption{Links by data type. $e_i$ is the $i$th unit vector, an ordinal level is coded by its value, and a ranking by its rank vector, whose $i$th entry is the position of item $i$. The ordinal link rounds $z$ to the nearest level. Ties, which have probability zero, are broken by a fixed rule.}
	\label{tab:links}
	\small
	\begin{tabular}{@{}llll@{}}
		\toprule
		data type & $\Y$ & $h(z)$& representation \\
		\midrule
		continuous, $k$ variables & $\rb^k$ & $z$& Theorem~\ref{thm:continuous}\\
		multiple binary, $k$ labels & $\{-1,+1\}^k$ & $\sign(z)$, coordinate-wise& Theorem~\ref{thm:discrete_continuous}\\
		categorical, $k$ classes & $\{e_1,\dots,e_k\}$ & $e_{i^*}$ with $i^*=\argmax_i z_i$& Theorem~\ref{thm:discrete_continuous}\\
		ordinal, $L$ levels & $\{1,\dots,L\}$ & $1+\sum_{l=1}^{L-1}\mathbf 1\{z>l+\frac12\}$& Theorem~\ref{thm:discrete_continuous}\\
		ranking of $k$ items & permutations of $(1,\dots,k)$ & $h(z)_i=1+\#\{l:z_l>z_i\}$& Theorem~\ref{thm:discrete_continuous}\\
		mixed & product of the above & concatenation of the above & Theorem~\ref{thm:mixed}\\
		\bottomrule
	\end{tabular}
\end{table}

\subsection{Objective and gradient}\label{sec:gradient}

The model is fitted by minimizing the expected energy score of the model's conditional distribution at the observed outcome. Writing $\hat Y=h(g(X,\varepsilon)+\sigma(X)\odot\eta)$ and $\hat Y'=h(g(X,\varepsilon')+\sigma(X)\odot\eta')$ for two draws given the same $X$, the objective, at the population level, is
\begin{equation}\label{eq:gem_objective}
	\mathcal{L}(g,\sigma)=\E\Big[\|Y-\hat Y\|-\frac12\|\hat Y-\hat Y'\|\Big],
\end{equation}
where $(X,Y)$ follows the data distribution, $\varepsilon,\varepsilon'\sim\mathcal N(0,I_m)$ and $\eta,\eta'\sim\mathcal N(0,I_k)$ are independent, and the expectation is over all of these. Given $X=x$, the conditional expectation of the integrand is the expected energy score of the model's conditional law $P^{g,\sigma}_{Y|X=x}$ at an observation drawn from $P^*_{Y|X=x}$. Since the energy score is strictly proper, minimizing $\mathcal L$ recovers the conditional distribution whenever the model class can express it, as the following lemma states.
\begin{Lemma}\label{lem:minimizers}
	Let the model class be a set of pairs $(g,\sigma)$, and assume that $\E\|Y\|<\infty$ and that $\E\|\hat Y\|<\infty$ for every pair in the class, as is the case for any discrete outcome. If the class contains a pair with $P^{g,\sigma}_{Y|X=x}=P^*_{Y|X=x}$ for almost every $x$ under the law of $X$, then the minimizers of $\mathcal L$ over the class are exactly the pairs with this property.
\end{Lemma}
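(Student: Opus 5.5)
The plan is to condition on $X$ and reduce the claim to the strict propriety of the energy score with $\beta=1$, as in \citet[Proposition~1]{shen2024engression}. For a pair $(g,\sigma)$ and a point $x$, write $P=P^{g,\sigma}_{Y|X=x}$ and $Q=P^*_{Y|X=x}$. Define the conditional expected score
\[
S(P,Q)=\E_{Y\sim Q,\hat Y\sim P}\|Y-\hat Y\|-\tfrac12\E_{\hat Y,\hat Y'\sim P}\|\hat Y-\hat Y'\|.
\]

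\textbf{Step 1: disintegrate the objective.} First I will check that the moment assumptions make every term finite. Since $\E\|Y\|<\infty$ and $\E\|\hat Y\|<\infty$, the bounds $\|Y-\hat Y\|\le\|Y\|+\|\hat Y\|$ and $\|\hat Y-\hat Y'\|\le\|\hat Y\|+\|\hat Y'\|$ give integrability. Tower property and Fubini then yield $\mathcal L(g,\sigma)=\E_X[S(P^{g,\sigma}_{Y|X},P^*_{Y|X})]$. Two independence facts are needed here. Given $X$, the observed $Y$ is independent of $(\varepsilon,\eta,\varepsilon',\eta')$, because the noises are independent of the data. Also, $\hat Y$ and $\hat Y'$ are conditionally i.i.d.\ with law $P^{g,\sigma}_{Y|X=x}$. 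Integrability also gives, for $P_X$-a.e.\ $x$, that $Q$ and $P$ have finite first moments. For $P$ this holds for a.e.\ $x$ for each fixed pair, which is all that is needed.

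\textbf{Step 2: pointwise strict propriety.} Next I will use the identity
\[
S(P,Q)-S(Q,Q)=\tfrac12 D(P,Q),
\]
where $D(P,Q)=2\E\|\hat Y-Y\|-\E\|\hat Y-\hat Y'\|-\E\|Y-Y'\|$ is the energy distance, valid for first-moment distributions. The key fact is that $D(P,Q)\ge0$, with equality iff $P=Q$. I will cite \citet{gneiting2007strictly} for this, or prove it via the Fourier representation
\[
D(P,Q)=c_k\int_{\rb^k}\frac{|\varphi_P(t)-\varphi_Q(t)|^2}{\|t\|^{k+1}}\,dt,
\]
whose integrand vanishes identically only if $\varphi_P=\varphi_Q$. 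This covers discrete and mixed laws, since nothing beyond finite first moments is used.

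\textbf{Step 3: conclude.} Let $(g^*,\sigma^*)$ be the pair in the class that reproduces $P^*$. Then $\mathcal L(g^*,\sigma^*)=\E_X[S(Q_X,Q_X)]$, which is finite. For any pair, $\mathcal L(g,\sigma)-\mathcal L(g^*,\sigma^*)=\tfrac12\E_X[D(P^{g,\sigma}_{Y|X},P^*_{Y|X})]\ge0$. So $(g^*,\sigma^*)$ is a minimizer, and the minimum value is $\E_X S(Q_X,Q_X)$. A pair attains this value iff $\E_X D=0$. Since $D\ge0$, this holds iff $D(P^{g,\sigma}_{Y|X=x},P^*_{Y|X=x})=0$ for a.e.\ $x$. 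By Step 2, that is exactly the stated property.

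The only delicate point is Step 1. It needs the measurability of $x\mapsto S(\cdot,\cdot)$ and the finiteness that justifies subtracting the $S(Q,Q)$ terms. Both follow from the assumed moments and the joint measurability of $(x,\varepsilon,\eta)\mapsto h(g(x,\varepsilon)+\sigma(x)\odot\eta)$. For discrete $\Y$ the outcome space is bounded, so integrability is immediate.
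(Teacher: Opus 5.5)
Your proof is correct and is essentially the paper's argument. The paper simply invokes Proposition~1 of \citet{shen2024engression}, which is the same reduction you carry out: disintegrate $\mathcal L$ over $X$, then use that the excess expected energy score $S(P,Q)-S(Q,Q)=\tfrac12 D(P,Q)$ is nonnegative and vanishes only when $P=Q$ (strict propriety for laws with finite first moment). You have written out the integrability and conditional-independence checks that the citation leaves implicit.
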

The lemma is a direct implication of Proposition~1 of \citet{shen2024engression}. Whether such a model exists is the subject of Section~\ref{sec:universality}.

Engression minimizes its objective by stochastic gradient descent, differentiating the integrand through the generator directly. With a discrete link this fails: for fixed $(x,y,\varepsilon,\varepsilon',\eta,\eta')$ the integrand of \eqref{eq:gem_objective} is piecewise constant in the outputs of $g$ and $\sigma$, so that its derivative is zero almost everywhere. Averaged over the perturbations $\eta$ and $\eta'$, however, the loss is continuously differentiable in the outputs of $g$ and $\sigma$. To state its gradient, fix $(X,Y,\varepsilon,\varepsilon')$ and write $\mu=g(X,\varepsilon)$, $\mu'=g(X,\varepsilon')$ and $s=\sigma(X)$ for the outputs, so that $\hat Y=h(\mu+s\odot\eta)$ and $\hat Y'=h(\mu'+s\odot\eta')$, and let
\begin{equation*}
	c=\|Y-\hat Y\|-\|\hat Y-\hat Y'\|,\qquad c'=\|Y-\hat Y'\|-\|\hat Y-\hat Y'\|
\end{equation*}
be the costs of the two draws.

\begin{Proposition}\label{prop:grad}
	Let $\bar\ell(\mu,\mu',s)$ be the expectation over $(\eta,\eta')$ of $\frac12\|Y-\hat Y\|+\frac12\|Y-\hat Y'\|-\frac12\|\hat Y-\hat Y'\|$, so that $\mathcal L=\E\,\bar\ell(\mu,\mu',s)$ since the two draws are exchangeable. Then $\bar\ell$ is continuously differentiable on $\rb^k\times\rb^k\times(0,\infty)^k$, with
	\begin{align*}
		G&=\nabla_\mu\bar\ell=\tfrac12\E[c\,\eta]\oslash s,\qquad G'=\nabla_{\mu'}\bar\ell=\tfrac12\E[c'\,\eta']\oslash s,\\
		S&=\nabla_s\bar\ell=\tfrac12\E\big[c\,(\eta\odot\eta-\mathbf 1_k)+c'\,(\eta'\odot\eta'-\mathbf 1_k)\big]\oslash s,
	\end{align*}
	where the expectations are over $(\eta,\eta')$ at fixed $(X,Y,\varepsilon,\varepsilon')$.
\end{Proposition}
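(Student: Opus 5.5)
The plan is to move the parameters off the discontinuous link and onto the Gaussian density by a change of variables, then differentiate under the integral; this is the standard score-function (likelihood-ratio) argument used for perturbed optimizers. Write $z=\mu+s\odot\eta$ and $z'=\mu'+s\odot\eta'$. Then
\begin{equation*}
\bar\ell(\mu,\mu',s)=\int\!\!\int F(z,z')\,p_{\mu,s}(z)\,p_{\mu',s}(z')\,dz\,dz',\qquad F(z,z')=\tfrac12\|Y-h(z)\|+\tfrac12\|Y-h(z')\|-\tfrac12\|h(z)-h(z')\|,
\end{equation*}
where $p_{\mu,s}$ is the density of $\mathcal N(\mu,\mathrm{diag}(s\odot s))$. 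Now $F$ does not depend on $(\mu,\mu',s)$, and it is measurable. It is also dominated by $\|Y\|+\|h(z)\|+\|h(z')\|$, so it is integrable against Gaussians under the moment assumption. For discrete links it is in fact bounded, and for the identity link it has linear growth. All the parameter dependence sits in the smooth densities.

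Next I compute the log-derivatives of $p_{\mu,s}$. With $\eta=(z-\mu)\oslash s$,
\begin{equation*}
\nabla_\mu \log p_{\mu,s}=\eta\oslash s,\qquad \nabla_s\log p_{\mu,s}=(\eta\odot\eta-\mathbf 1_k)\oslash s.
\end{equation*}
Differentiating under the integral, which I justify below, gives
\begin{itemize}
\item $\nabla_\mu\bar\ell=\E[F\,\eta]\oslash s$;
\item $\nabla_s\bar\ell=\E[F\,(\eta\odot\eta-\mathbf 1_k)+F\,(\eta'\odot\eta'-\mathbf 1_k)]\oslash s$, where both factors depend on $s$ and the product rule applies.
\end{itemize}

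It remains to replace $F$ by the costs. Since $\E[\eta]=0$ and $\E[\eta\odot\eta-\mathbf 1_k]=0$, any term of $F$ that does not depend on $\eta$ can be dropped from the $\eta$-weighted expectation. By independence of $\eta$ and $\eta'$, the term $\tfrac12\|Y-h(z')\|$ drops from $\E[F\eta]$. This leaves $\E[(\tfrac12\|Y-\hat Y\|-\tfrac12\|\hat Y-\hat Y'\|)\eta]$. That equals $\tfrac12\E[c\,\eta]$, because $c=\|Y-\hat Y\|-\|\hat Y-\hat Y'\|$ is exactly twice the retained integrand. The same cancellation with $c'$ gives $G'$, and applied to both factors in the $s$-derivative it gives $S$.

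The main obstacle is justifying the interchange of differentiation and integration, and continuity of the resulting gradient. I would use dominated convergence on a compact neighbourhood $K$ of $(\mu,\mu',s)$ with $s$ bounded away from $0$. On $K$, the derivatives of $p_{\mu,s}(z)p_{\mu',s}(z')$ are bounded by a polynomial in $(z,z')$ times a Gaussian envelope $C\exp(-c\|z\|^2-c\|z'\|^2)$, uniformly in $K$. Multiplied by $|F|$, which is bounded or of linear growth, this envelope is integrable. The mean value theorem then licenses differentiation under the integral. Continuity of the gradient follows from dominated convergence again, since the integrand $F\cdot\partial p$ is continuous in the parameters for each fixed $(z,z')$ and dominated by the same envelope. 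This also covers the more general perturbation densities mentioned earlier in the paper, provided their value and gradient decay fast enough.
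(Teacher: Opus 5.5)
Your proposal is correct and follows essentially the same route as the paper. Both arguments use a change of variables that moves $(\mu,\mu',s)$ off the link and into the perturbation density, justify differentiation under the integral by dominated convergence on compacts with $s$ bounded away from zero, and then drop the term not involving $\eta$ (resp.\ $\eta'$) using independence and $\E[\eta_j]=\E[\eta_j^2-1]=0$. The paper merely packages this as a general lemma for any $C^1$ density with polynomial decay, applied in dimension $2k$ with tied scales $\tilde s=(s,s)$, which is exactly your product rule over $p_{\mu,s}(z)\,p_{\mu',s}(z')$.
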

The proof is in Appendix~\ref{app:proof_grad} and follows the same principle as Stein's lemma. 
The price of never differentiating the link is that Monte Carlo estimates of $G$, $G'$ and $S$ can have a large variance: it grows like $s_j^{-2}$ as $s_j$ shrinks, and it is inflated by the variation of the one scalar cost $c$ that multiplies the weights $\eta_j$ of all coordinates. Two refinements in Section~\ref{sec:algorithm} reduce it for continuous and for binary coordinates. 

Let $\theta_g$ and $\theta_\sigma$ be the parameters of the networks $g$ and $\sigma$, so that $\mu=g(X,\varepsilon;\theta_g)$, $\mu'=g(X,\varepsilon';\theta_g)$ and $s=\sigma(X;\theta_\sigma)$. Whenever differentiation and expectation may be exchanged, which holds under the usual conditions on the networks provided $\sigma$ stays bounded away from zero, the chain rule gives
\begin{equation*}
	\nabla_{\theta_g}\mathcal L=\E\big[J_{\theta_g}(\mu)^\top G+J_{\theta_g}(\mu')^\top G'\big],\qquad\nabla_{\theta_\sigma}\mathcal L=\E\big[J_{\theta_\sigma}(s)^\top S\big],
\end{equation*}
where $J_\theta$ is the Jacobian of a network output with respect to $\theta$; a product $J^\top G$ is computed without forming the Jacobian, by backpropagating $G$ through the network.

\subsection{Algorithm}\label{sec:algorithm}

With a sample $\{(X_i,Y_i):i=1,\dots,n\}$, the expectations in $\nabla_{\theta_g}\mathcal L$ and $\nabla_{\theta_\sigma}\mathcal L$ are replaced by averages over a mini-batch, with fresh draws of $\varepsilon,\varepsilon',\eta,\eta'$ at every step and, as in engression, two draws per observation, the fewest that the energy score allows; more draws per observation can be used, but we use two by default. Algorithm~\ref{alg:gem} gives the procedure. The quantities $\widehat G_i$, $\widehat G'_i$ and $\widehat S_i$ are unbiased for $G$, $G'$ and $S$ at $(X_i,Y_i,\varepsilon_i,\varepsilon'_i)$, so that, when the mini-batch is drawn uniformly, $\Delta_g$ and $\Delta_\sigma$ are unbiased for the gradients of the empirical objective. Any first-order method can use these estimates in place of the plain update, and the number of iterations, which acts as a regularization parameter, is chosen by the energy score on held-out data. Under the usual conditions for stochastic gradient methods, the algorithm converges to stationary points of the empirical objective \citep{bottou2018optimization}. 

Two refinements of Algorithm~\ref{alg:gem} reduce the variance of its gradient estimator. The first concerns continuous coordinates, on which the link is the identity and can be differentiated directly. On the coordinates $j$ of a continuous block, where $\hat Y_{ij}=\mu_{ij}+s_{ij}\eta_{ij}$, $\widehat G_{ij}$ and $\widehat G'_{ij}$ are replaced by pathwise derivatives, as engression computes them: $\widehat G_{ij}=\frac12\partial c_i/\partial\mu_{ij}$, obtained by differentiating the joint norms through $\mu_i+s_i\odot\eta_i$ with the coordinates of the discrete blocks of both draws held fixed, that is,
\begin{equation*}
	\widehat G_{ij}=\frac12\bigg(\frac{\hat Y_{ij}-Y_{ij}}{\|\hat Y_i-Y_i\|}-\frac{\hat Y_{ij}-\hat Y'_{ij}}{\|\hat Y_i-\hat Y'_i\|}\bigg),
\end{equation*}
where a fraction with a zero denominator, whose numerator is then zero as well, is set to $0$, and likewise $\widehat G'_{ij}=\frac12\partial c'_i/\partial\mu'_{ij}$. These estimates are also unbiased, because the norms are Lipschitz in these coordinates, and their variance is typically much smaller: each is at most $1$ in absolute value, whereas $\frac12c_i\eta_{ij}/s_{ij}$ grows as $s_{ij}$ shrinks. The scale can be differentiated pathwise as well, with $\widehat S_{ij}=\eta_{ij}\widehat G_{ij}+\eta'_{ij}\widehat G'_{ij}$. For mixed-type outcomes, the norms must be the joint ones.

\begin{algorithm}[!t]
\KwInput{Sample $\{(X_i,Y_i):i=1,\dots,n\}$, link $h$, learning rate $\alpha$, iterations $T$, initial $\theta_g,\theta_\sigma$}
\For{$t=1,\dots,T$}{
    Sample a mini-batch $\mathcal{B} \subseteq \{1,\dots,n\}$\;
    Sample $\varepsilon_i, \varepsilon'_i \sim \mathcal{N}(0, I_m)$ and $\eta_i, \eta'_i\sim\mathcal N(0,I_k)$, independently, for each $i\in \mathcal{B}$\;
    \For{each $i \in \mathcal{B}$}{
        Compute the network outputs $\mu_i\gets g(X_i,\varepsilon_i;\theta_g)$,\quad $\mu'_i\gets g(X_i,\varepsilon'_i;\theta_g)$,\quad $s_i\gets\sigma(X_i;\theta_\sigma)$\;
        Generate two draws $\hat Y_i\gets h(\mu_i+s_i\odot\eta_i)$,\quad $\hat Y'_i\gets h(\mu'_i+s_i\odot\eta'_i)$\;
        Compute the costs $c_i\gets\|Y_i-\hat Y_i\|-\|\hat Y_i-\hat Y'_i\|$,\quad $c'_i\gets\|Y_i-\hat Y'_i\|-\|\hat Y_i-\hat Y'_i\|$\;
        Estimate the gradients in $\mu_i$ and $\mu'_i$: $\widehat G_i\gets\frac12c_i\,\eta_i\oslash s_i$,\quad $\widehat G'_i\gets\frac12c'_i\,\eta'_i\oslash s_i$\;
        Estimate the gradient in $s_i$: $\widehat S_i\gets\frac12\big(c_i(\eta_i\odot\eta_i-\mathbf 1_k)+c'_i(\eta'_i\odot\eta'_i-\mathbf 1_k)\big)\oslash s_i$\;
    }
    Estimate the gradient in $\theta_g$ by backpropagation: $\Delta_g\gets|\mathcal B|^{-1}\sum_{i\in\mathcal B}\big(J_{\theta_g}(\mu_i)^\top\widehat G_i+J_{\theta_g}(\mu'_i)^\top\widehat G'_i\big)$\;
    Estimate the gradient in $\theta_\sigma$ by backpropagation: $\Delta_\sigma\gets|\mathcal B|^{-1}\sum_{i\in\mathcal B}J_{\theta_\sigma}(s_i)^\top\widehat S_i$\;
    Take a gradient step $\theta_g\gets\theta_g-\alpha\Delta_g$,\quad $\theta_\sigma\gets\theta_\sigma-\alpha\Delta_\sigma$\;
}
\KwReturn{Fitted parameters $\theta_g,\theta_\sigma$}
\caption{Generalized engression 
}
\label{alg:gem}
\end{algorithm}

The second refinement concerns binary coordinates. In $\widehat G_{ij}=\frac12c_i\eta_{ij}/s_{ij}$, only the part of the cost $c_i$ that changes with $\hat Y_{ij}$ contributes to the mean, since $\eta_{ij}$ has mean zero and is independent of everything else in $c_i$; but the whole of $c_i$ contributes to the variance, which is of the order of $\E[c_i^2]/s_{ij}^2$. Under the sign link with many labels, $c_i$ is a difference of norms of vectors in $\{-1,+1\}^k$, set by all the labels together, whereas the part of it that depends on label $j$, the change of the norms when $\hat Y_{ij}$ flips, is small when many other labels differ. Most of the variance then comes from the other labels, and a control variate removes most of it.

For a coordinate $j$ under the sign link, write $\hat y$ for a draw, $w$ for the point it is compared with, the observation or the other draw, and $A_j=\sum_{l\ne j}(\hat y_l-w_l)^2$, which does not involve coordinate $j$. Since $\hat y_j$ and $w_j$ are $\pm1$, $(\hat y_j-w_j)^2$ is $0$ or $4$, so the norm takes one of two values,
\begin{equation*}
	\|\hat y-w\|=\sqrt{A_j}\quad\text{if }\hat y_j=w_j,\qquad\|\hat y-w\|=\sqrt{A_j+4}\quad\text{if }\hat y_j\ne w_j.
\end{equation*}
Only the difference between the two values reflects coordinate $j$; their common level, of the order of $\sqrt{A_j}$, is set by the other coordinates. The control variate subtracts from the norm the midpoint $b_j=\frac12(\sqrt{A_j+4}+\sqrt{A_j})$ of the two values, which leaves
\begin{equation*}
	\|\hat y-w\|-b_j=\pm\tfrac12\big(\sqrt{A_j+4}-\sqrt{A_j}\big)=\pm\frac{2}{\sqrt{A_j+4}+\sqrt{A_j}},
\end{equation*}
with the plus sign exactly when $\hat y_j\ne w_j$. The remainder is half the change of the norm when coordinate $j$ flips; it is at most $1$ in absolute value and of the order of $1/\sqrt{A_j}$ when many other coordinates differ, whereas the norm itself is of the order of $\sqrt{A_j}$. This subtraction is the classical control variate of Monte Carlo, a quantity of known mean subtracted from an estimator with which it is correlated, so it does not bias the estimator: $b_j$ is a function of the other coordinates of $\hat y$ and of $w$, which are independent of $\eta_j$, so $\E[b_j\eta_j]=0$ and $\E[b_j(\eta_j^2-1)]=0$, and the means of $\widehat G_{ij}$ and $\widehat S_{ij}$ are unchanged. In the terminology of \citet{williams1992simple}, $b_j$ is a baseline, chosen separately for each coordinate. Both norms in each cost are treated in this way: with $b_j(\hat y,w)$ the midpoint above, coordinate $j$ of the first draw gets the centered cost
\begin{equation*}
	c_{ij}=\big(\|Y_i-\hat Y_i\|-b_j(\hat Y_i,Y_i)\big)-\big(\|\hat Y_i-\hat Y'_i\|-b_j(\hat Y_i,\hat Y'_i)\big),
\end{equation*}
and $c'_{ij}$ is defined in the same way with $\hat Y'_i$ as the draw. The costs $c_i$ and $c'_i$ thus become vectors that multiply the weights coordinate by coordinate, $\widehat G_{ij}=\frac12c_{ij}\eta_{ij}/s_{ij}$, $\widehat G'_{ij}=\frac12c'_{ij}\eta'_{ij}/s_{ij}$ and $\widehat S_{ij}=\frac12\big(c_{ij}(\eta_{ij}^2-1)+c'_{ij}((\eta'_{ij})^2-1)\big)/s_{ij}$, with the plain cost kept on the coordinates that are not binary. Each centered cost is at most $2$ in absolute value, for any number of labels. Computing the midpoints takes one pass over the coordinates per norm. On the species data of Section~\ref{sec:jsdm}, with $242$ labels, the control variate reduces the variance of the gradient estimator by two orders of magnitude (Appendix~\ref{app:jsdm}); we recommend it whenever the outcome has many binary coordinates.

\subsection{Other kernel scores}\label{sec:kernel}

GEM can be fitted with any kernel score \citep{gneiting2007strictly} in place of the energy score. For a symmetric function $\kappa$ on $\Y\times\Y$ that vanishes on the diagonal, the kernel score is
\begin{equation*}
	S_\kappa(P,y)=\E_P\big[\kappa(\hat Y,y)\big]-\frac12\E_P\big[\kappa(\hat Y,\hat Y')\big],\qquad \hat Y,\hat Y'\sim P\text{ independent},
\end{equation*}
of which the energy score of Section~\ref{sec:background} is the case $\kappa(v,w)=\|v-w\|^\beta$. Other examples are $\kappa(v,w)=1-\exp\{-\|v-w\|^2/(2\gamma^2)\}$ with a bandwidth $\gamma>0$, which corresponds to the Gaussian kernel; and, on a finite outcome space, $\kappa(v,w)=\mathbf 1\{v\ne w\}$, for which $S_\kappa$ is half the Brier score of $Y$ taken as one categorical variable. If the observation is drawn from $Q$, the expected score of $P$ exceeds that of $Q$ by $-\frac12\iint\kappa\,d(P-Q)\,d(P-Q)$, so $S_\kappa$ is proper when $\kappa$ is conditionally negative definite, $\sum_{v,w}a_va_w\kappa(v,w)\le0$ whenever $\sum_va_v=0$, and, on a finite outcome space, strictly proper when the inequality is strict for every such $a\ne0$, as it is for all the examples above. The excess score is then half the squared maximum mean discrepancy between $P$ and $Q$ for the positive definite kernel $\kappa(v,v_0)+\kappa(w,v_0)-\kappa(v,w)$, with $v_0$ fixed \citep{gretton2012kernel,sejdinovic2013equivalence}. Replacing the norm by $\kappa$ in \eqref{eq:gem_objective} and in the costs of Algorithm~\ref{alg:gem} therefore fits the model by minimizing the average over $X$ of the squared discrepancy between the conditional laws of the model and of the data. Proposition~\ref{prop:grad} and Algorithm~\ref{alg:gem} carry over unchanged when $\kappa$ grows at most linearly, as it does on a discrete outcome, because they use the norm only through its values at the draws. The control variate carries over, with the two values of $\kappa$ in place of those of the norm.

\section{Universality theory}\label{sec:universality}

In this section, we establish universality results, in the sense of exact distributional representation, for continuous, discrete, and mixed-type outcomes. The results below concern the special case $\sigma(x)\equiv0$, in which the model reduces to $Y=h(g(X,\varepsilon))$.
This case is not itself part of model~\eqref{eq:gem_model}, which requires $\sigma>0$, but is its limit. For each continuous generator $g$ constructed in Sections~\ref{sec:continuous}--\ref{sec:mixed_type}, the law of $g(x,\varepsilon)$ has a Lebesgue density for every $x$, and every link of Table~\ref{tab:links} is continuous outside a Lebesgue-null set, so $h(g(X,\varepsilon)+\sigma_0\eta)\to h(g(X,\varepsilon))$ almost surely as the constant scale $\sigma_0\downarrow0$. Hence $(X,h(g(X,\varepsilon)+\sigma_0\eta))$ converges to $(X,Y)$ in distribution, and in total variation when $Y$ is discrete. The perturbation $\sigma(X)\odot\eta$ is needed for fitting (Section~\ref{sec:gradient}), not for representation.

The link $h$ may be discontinuous, to accommodate different types of outcomes, whereas we seek a continuous generator $g$ to facilitate neural network approximation. We work with the simplest qualitative regularity conditions and leave quantitative guarantees based on norms of partial derivatives~\citep{gyorfi2002distribution} to future work.

We first give the results for a measurable map $g$ (Section~\ref{sec:measurable}) and then, in the rest of this section, for a continuous one. The proofs are collected in Appendix~\ref{app:proofs}.

\subsection{Measurable map \texorpdfstring{$g$}{g}}
\label{sec:measurable}

We start with the following lemma \citep[Proposition~8.20; see also Theorem~8.17]{Kallenberg1997}, which gives a measurable map $g$ in great generality. It applies to random elements in arbitrary standard Borel spaces, beyond the Euclidean outcomes considered in this paper, and in particular covers continuous, discrete, and mixed-type outcomes without requiring a conditional density. We state it with standard Gaussian noise for consistency with our model. For measurable representations, any atomless real-valued noise can be used instead, since its distribution function transforms it into a uniform random variable.

\begin{Lemma}[Noise outsourcing, Gaussian form]
\label{lem:noise_outsourcing}
	Let $\X$ and $\Y$ be standard Borel spaces, and let $X$ and $Y$ be random elements in $\X$ and $\Y$, respectively. There exist a measurable function $g:\X\times\rb\to\Y$ and a random variable $\varepsilon\sim\mathcal N(0,1)$ such that $\varepsilon$ is independent of $X$ and
	\[
		(X,Y)\overset{d}{=}(X,g(X,\varepsilon)).
	\]
\end{Lemma}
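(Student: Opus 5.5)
The plan is to reduce to the classical uniform-noise form of noise outsourcing and then compose with the Gaussian distribution function. First I will reduce to $\Y=[0,1]$. Every standard Borel space is Borel isomorphic to a Borel subset of $[0,1]$ (Kuratowski's theorem). So I will fix a Borel isomorphism $\phi:\Y\to B\subseteq[0,1]$ with measurable inverse, and work with the real random variable $\phi(Y)$. Once I have a measurable $\tilde g:\X\times\rb\to[0,1]$ with $(X,\phi(Y))\overset{d}{=}(X,\tilde g(X,\varepsilon))$, I still have to land back in $\Y$. Since $\P(\tilde g(X,\varepsilon)\in B)=1$, I will redefine $\tilde g$ to be a fixed point $b_0\in B$ off the measurable set $\{\tilde g\in B\}$. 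This changes it only on a null set, so the joint law is unchanged. Then $g=\phi^{-1}\circ\tilde g$ works.

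For a real-valued $Y$ I will use a regular conditional distribution. Standard Borel spaces admit a probability kernel $\mu(x,\cdot)$ with $\P(Y\in\cdot\mid X)=\mu(X,\cdot)$ a.s. I will set $F(x,y)=\mu(x,(-\infty,y])$ and take its generalized inverse
\[
F^{-1}(x,u)=\inf\{y\in\mathbb Q: F(x,y)\ge u\},\qquad u\in(0,1).
\]
Restricting the infimum to rationals makes joint measurability in $(x,u)$ explicit: $\{F^{-1}(x,u)\le t\}$ is a countable combination of the measurable sets $\{F(x,q)\ge u\}$, using right-continuity of $F$ in $y$. For $U\sim\mathrm{Unif}(0,1)$ independent of $X$, the quantile transform gives $F^{-1}(x,U)\sim\mu(x,\cdot)$ for each $x$. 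Fubini then yields $(X,F^{-1}(X,U))\overset{d}{=}(X,Y)$.

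For the Gaussian form I will take $\varepsilon\sim\mathcal N(0,1)$ independent of $X$, enlarging the probability space by a product if necessary, and set $U=\Phi(\varepsilon)$, which is uniform because $\Phi$ is continuous and strictly increasing. The function $g(x,e)=\phi^{-1}(F^{-1}(x,\Phi(e)))$, after the null-set correction above, is then measurable and has the required property. The same argument shows that any atomless real noise works, since its distribution function also produces a uniform variable.

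I expect the main obstacle to be the measurability bookkeeping: existence of the regular conditional kernel, joint measurability of the quantile map, and the correction onto $B$. All of these are standard, and for them I will cite Kallenberg, Theorems 6.3 and A1.2.
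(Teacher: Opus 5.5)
Your proof is correct and takes essentially the same route as the paper. The paper gives no argument beyond citing Kallenberg's noise-outsourcing (transfer) result for uniform noise and noting that $\Phi$ turns $\varepsilon\sim\mathcal N(0,1)$ into a uniform variable, and you unpack that citation by the standard argument (Borel isomorphism onto $B\subseteq[0,1]$, regular conditional distribution, jointly measurable quantile transform, and the null-set correction onto $B$, which also covers the fact that $F^{-1}(x,\Phi(e))$ need not lie in $B$ for the exceptional $x$).
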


If $h$ is measurable and has a measurable right inverse $\bar h:\Y\to\rb^k$, that is, $h\circ\bar h=\mathrm{Id}_{\Y}$, we can apply the lemma above to $(X,\bar h(Y))$. Then the distribution of $h(g(x,\varepsilon))$ is exactly the conditional distribution of $Y$ given $X=x$ for $P_X$-almost every $x$, where $P_X$ denotes the distribution of $X$.

\begin{Proposition}[Noise outsourcing through the link $h$, Gaussian form]
\label{prop:noise_outsourcing_h}
	Let $\X$ and $\Y$ be standard Borel spaces, and let $X$ and $Y$ be random elements in $\X$ and $\Y$, respectively. Let $h:\rb^k\to\Y$ be measurable.

	\smallskip\noindent (i) If there exists a measurable map $\bar h:\Y\to\rb^k$ such that $h(\bar h(Y))=Y$ almost surely, then there exist a random variable $\varepsilon\sim\mathcal N(0,1)$, independent of $X$, and a measurable map $g:\X\times\rb\to\rb^k$ such that
	\[
		(X,Y)\overset{d}{=}(X,h(g(X,\varepsilon))).
	\]
	\smallskip\noindent (ii) In particular, the conclusion of (i) holds whenever $h$ is a surjective Borel map.
\end{Proposition}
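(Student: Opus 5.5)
For part (i), the plan is to apply Lemma~\ref{lem:noise_outsourcing} to the pair $(X,Z)$ with $Z=\bar h(Y)$, and then push the resulting generator through $h$. Since $\bar h$ is measurable, $Z$ is a random element of $\rb^k$, which is standard Borel. The lemma then gives a measurable $g:\X\times\rb\to\rb^k$ and $\varepsilon\sim\mathcal N(0,1)$, independent of $X$, with
\[
(X,\bar h(Y))\overset{d}{=}(X,g(X,\varepsilon)).
\]
Apply the measurable map $(x,z)\mapsto(x,h(z))$ to both sides; equality in distribution is preserved under measurable maps, so
\[
(X,h(\bar h(Y)))\overset{d}{=}(X,h(g(X,\varepsilon))).
\]
Because $h(\bar h(Y))=Y$ almost surely, the left side equals $(X,Y)$ almost surely and hence has the same law, which proves (i). The only points to check are that $\rb^k$ is standard Borel and that the product map is measurable; both are routine.

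For part (ii), it suffices to build a measurable right inverse $\bar h$ with $h\circ\bar h=\mathrm{Id}_\Y$ on all of $\Y$. This is the main obstacle, since surjectivity alone gives only a set-theoretic section. The tool I would invoke is a measurable selection theorem for Borel maps between Polish/standard Borel spaces. Apply the Jankov--von Neumann uniformization theorem to the Borel graph $\{(y,z):h(z)=y\}\subseteq\Y\times\rb^k$, whose projection onto $\Y$ is all of $\Y$ by surjectivity. This yields a section $\bar h:\Y\to\rb^k$ that is universally measurable, and hence measurable with respect to the completion of the law of $Y$.

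Modify $\bar h$ on a $P_Y$-null set to get a Borel map $\tilde h$ with $\tilde h=\bar h$ $P_Y$-a.s. Then $h(\tilde h(Y))=Y$ almost surely, which is exactly the hypothesis of (i). This is why (i) was stated with an almost-sure condition.

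For the links of Table~\ref{tab:links} no selection theorem is needed, since explicit inverses exist. For the sign link take $\bar h(y)=y$; for the argmax link take $\bar h(e_i)=e_i$; for the ordinal link take $\bar h(l)=l$; for rankings take $\bar h(r)_i=k-r_i$; and for mixed outcomes concatenate these block by block. This gives a self-contained alternative for the cases the paper uses.
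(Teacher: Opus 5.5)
Your proof is correct and matches the paper's. Part (i) is the same pushforward of the outsourced representation of $(X,\bar h(Y))$ through $(x,z)\mapsto(x,h(z))$. For part (ii), the paper cites Arveson's cross-section theorem, which directly gives a Borel section of $h$ off a $P_Y$-null Borel set, while you obtain the same object from Jankov--von Neumann uniformization followed by Borel modification on a null set, the selection-plus-null-set step that theorem packages. Your explicit right inverses for the links of Table~\ref{tab:links} (for example $\bar h(r)_i=k-r_i$ for rankings) are also correct and make the selection theorem unnecessary for the links the paper uses.
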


Even a univariate $\varepsilon$ is thus enough to generate a high-dimensional random vector $Y$, but such a $g$ can be highly discontinuous. In practice, for tractable optimization and computation, we may use $\varepsilon\sim\mathcal N(0,I_m)$, where $m\ge1$ is a modeling choice, for example $m=k$. Continuity of $g$, which the measurable representation does not guarantee, facilitates uniform neural network approximation on compact subsets of Euclidean input spaces. The rest of this section gives conditions under which $g$ can be chosen continuous. Table~\ref{tab:links} lists, for each data type, the link and the theorem that gives its representation: continuous outcomes are treated in Section~\ref{sec:continuous}, and discrete and mixed-type outcomes in Sections~\ref{sec:discrete} and~\ref{sec:mixed_type}, respectively.

\subsection{Continuous map \texorpdfstring{$g$}{g} for continuous outcomes}
\label{sec:continuous}

Let $(\X,d_{\X})$ be a metric space equipped with a $\sigma$-finite Borel measure $\nu$, and let $\Y=\mathbb{R}^k$. Suppose that the distribution of $(X,Y)$ is absolutely continuous with respect to the product measure $\nu\otimes\lambda_k$, where $\lambda_k$ denotes Lebesgue measure on $\mathbb{R}^k$. Let $p(x,y)$ denote the corresponding joint density, so that, for all Borel sets $A\subseteq\X$ and $B\subseteq\mathbb{R}^k$,
\[
	\mathbb{P}(X\in A,Y\in B)=\int_A\int_B p(x,y)\,dy\,\nu(dx),
\]
where $dy=dy_1\cdots dy_k$ denotes integration with respect to Lebesgue measure on $\mathbb{R}^k$.

All continuity statements refer to the product topology induced by $d_{\X}$ on $\X$ and the Euclidean metric on $\mathbb{R}^k$; in particular, continuity of $p$ means that
\[
	d_{\X}(x_n,x)\to0\quad\text{and}\quad\|y_n-y\|_2\to0\quad\Longrightarrow\quad p(x_n,y_n)\to p(x,y).
\]

For $0\leq j\leq k$, we define the marginal density
\[
	p_{\leq j}(x,y_{\leq j})=\int_{\mathbb{R}^{k-j}}p(x,y_1,\ldots,y_k)\,dy_{j+1}\cdots dy_k,
\]
so that $p_{\leq0}=p_X$ is the density of $X$ and $p_{\leq k}=p$; we also write $p_{<j}=p_{\leq j-1}$. The following condition on $p$ ensures that $g$ can be chosen continuous.
\begin{Condition}[Positive and continuous density]
\label{cond:regular}
	The density $p:\X\times\mathbb{R}^k\to(0,\infty)$ satisfies:
	\begin{enumerate}
		\item The joint density $p$ is strictly positive and continuous.
		\item For every $0\leq j\leq k$, the marginal density $p_{\leq j}(x,y_{\leq j})$ is finite and continuous.
	\end{enumerate}
\end{Condition}

Define the sequential conditional density and distribution function by
\[
	f_j(s\mid x,y_{<j})=\frac{p_{\leq j}(x,y_{<j},s)}{p_{<j}(x,y_{<j})},\qquad F_j(t\mid x,y_{<j})=\int_{-\infty}^{t}f_j(s\mid x,y_{<j})\,ds,
\]
and the corresponding conditional quantile function by
\[
	Q_j(\tau\mid x,y_{<j})=\inf\left\{t\in\mathbb{R}:F_j(t\mid x,y_{<j})\geq\tau\right\},\qquad 0<\tau<1.
\]

The following lemma shows that the conditional quantile function $Q_j(\tau\mid x,y_{<j})$ is continuous in all of its arguments.
\begin{Lemma}
	Suppose that Condition~\ref{cond:regular} holds. Then, for $1\leq j\leq k$, the distribution function $F_j(t\mid x,y_{<j})$ is jointly continuous in $(t,x,y_{<j})$ and strictly increasing in $t$. Moreover, $Q_j(\tau\mid x,y_{<j})$ is jointly continuous in $(\tau,x,y_{<j})$ on $(0,1)\times\X\times\mathbb{R}^{j-1}$.
	\label{lem:continuity}
\end{Lemma}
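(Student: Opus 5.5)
The plan is to prove joint continuity of $F_j$ by Scheffé's lemma, then get strict monotonicity from positivity, and deduce continuity of $Q_j$ from these two facts.

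\textbf{Step 1: continuity of $F_j$.} Fix $1\le j\le k$ and a sequence $(t_n,x_n,y^{(n)}_{<j})\to(t,x,y_{<j})$. By Condition~\ref{cond:regular}(ii), $p_{<j}$ is finite and continuous. It is also strictly positive, since it is the integral of the strictly positive $p$ over $\rb^{k-j+1}$. Hence $p_{<j}(x_n,y^{(n)}_{<j})\to p_{<j}(x,y_{<j})\in(0,\infty)$, and it suffices to show that
\[
\int_{-\infty}^{t_n}p_{\le j}(x_n,y^{(n)}_{<j},s)\,ds\;\to\;\int_{-\infty}^{t}p_{\le j}(x,y_{<j},s)\,ds .
\]
Pointwise convergence of the integrands follows from continuity of $p_{\le j}$. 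To pass to the limit I would use Scheffé's lemma. The functions $s\mapsto p_{\le j}(x_n,y^{(n)}_{<j},s)$ are nonnegative and converge pointwise. Their total integrals are $p_{<j}(x_n,y^{(n)}_{<j})$, which converge to the total integral of the limit. Scheffé therefore gives $L^1(ds)$ convergence. Together with $\mathbf 1\{s\le t_n\}\to\mathbf 1\{s\le t\}$ for $s\ne t$ and boundedness of the indicators, this yields the displayed convergence.

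This step is the main obstacle: domination is not available directly, and the finiteness and continuity of the lower marginal $p_{<j}$ in Condition~\ref{cond:regular} is exactly what supplies convergence of the total mass.

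\textbf{Step 2: strict monotonicity.} Since $p$ is strictly positive, so is $p_{\le j}$. Thus $F_j(t'\mid\cdot)-F_j(t\mid\cdot)>0$ for $t<t'$, and $F_j(\cdot\mid x,y_{<j})$ is a continuous bijection from $\rb$ onto $(0,1)$. Consequently $Q_j(\tau\mid x,y_{<j})$ is the unique $t$ with $F_j(t\mid x,y_{<j})=\tau$.

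\textbf{Step 3: continuity of $Q_j$.} Take $(\tau_n,x_n,y^{(n)}_{<j})\to(\tau,x,y_{<j})$ with $\tau\in(0,1)$, and set $t^*=Q_j(\tau\mid x,y_{<j})$. Fix $\delta>0$. By Step 2,
\[
F_j(t^*-\delta\mid x,y_{<j})<\tau<F_j(t^*+\delta\mid x,y_{<j}).
\]
By the joint continuity of Step 1, and since $\tau_n\to\tau$, for all large $n$ we have
\[
F_j(t^*-\delta\mid x_n,y^{(n)}_{<j})<\tau_n<F_j(t^*+\delta\mid x_n,y^{(n)}_{<j}).
\]
Monotonicity then places $Q_j(\tau_n\mid x_n,y^{(n)}_{<j})$ in $(t^*-\delta,t^*+\delta)$. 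Since $\delta$ is arbitrary, this proves joint continuity of $Q_j$ on $(0,1)\times\X\times\rb^{j-1}$.

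Sequences suffice throughout for a metric product space. If $\X$ were only a topological space one would use nets, but $\X$ is metric here.
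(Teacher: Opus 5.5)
Your proof is correct and follows essentially the same route as the paper's: a Scheffé-type $L^1$ convergence argument plus an interval term for the joint continuity of $F_j$, strict positivity for strict monotonicity, and the same $\delta$-bracketing argument for continuity of $Q_j$. The only difference is cosmetic. You work with the unnormalized $p_{\le j}$ and get convergence of total mass from the continuity of $p_{<j}$, whereas the paper first passes to the conditional density $f_j$, whose integrals all equal one.
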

The above lemma and the sequential quantile construction underlie the following representation of Euclidean outcomes with a continuous map $g$. For a one-dimensional outcome, the conditional quantile function $Q_1(\cdot\mid x)$ is itself a continuous generator driven by uniform noise on $(0,1)$. For a multivariate outcome, we use the inverse Rosenblatt transformation \citep{Rosenblatt1952}, which generates each coordinate in turn from its conditional quantile function given $X$ and the coordinates generated before it.

\begin{Theorem}[Euclidean representation]\label{thm:continuous}
	Suppose that Condition~\ref{cond:regular} holds. Then, for $\varepsilon\sim\mathcal N(0,I_k)$ independent of $X$, there exists a continuous map $g:\X\times\mathbb{R}^k\to\mathbb{R}^k$ such that
	\[
		\left(X,g(X,\varepsilon)\right)\overset{d}{=}(X,Y).
	\]
\end{Theorem}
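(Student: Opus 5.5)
The plan is to build $g$ as an inverse Rosenblatt transformation driven by Gaussian noise pushed to uniforms. Let $\Phi$ denote the standard normal distribution function and $\varepsilon=(\varepsilon_1,\dots,\varepsilon_k)\sim\mathcal N(0,I_k)$ independent of $X$. Set $U_j=\Phi(\varepsilon_j)$, so that $U_1,\dots,U_k$ are i.i.d.\ uniform on $(0,1)$, independent of $X$, and $\Phi$ is a homeomorphism $\rb\to(0,1)$. Define recursively
\[
g_1(x,\varepsilon)=Q_1\big(\Phi(\varepsilon_1)\mid x\big),\qquad g_j(x,\varepsilon)=Q_j\big(\Phi(\varepsilon_j)\mid x,g_1(x,\varepsilon),\dots,g_{j-1}(x,\varepsilon)\big),\quad 2\le j\le k,
\]
and take $g=(g_1,\dots,g_k)$.

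Continuity of $g$ follows by induction on $j$ from Lemma~\ref{lem:continuity}. The map $Q_j$ is jointly continuous on $(0,1)\times\X\times\rb^{j-1}$, and $\Phi(\varepsilon_j)$ always lies in the open interval $(0,1)$. The earlier coordinates $g_1,\dots,g_{j-1}$ are continuous by the induction hypothesis, and $Q_j$ takes finite real values for $\tau\in(0,1)$, so a composition of continuous maps gives continuity of $g_j$ on $\X\times\rb^k$. Because the topology on $\X\times\rb^k$ is the metric product topology, sequential continuity is enough, and every composition is legitimate.

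For the distributional identity, I would show by induction on $j$ that $(X,g_{\le j}(X,\varepsilon))$ has the same law as $(X,Y_{\le j})$, whose density with respect to $\nu\otimes\lambda_j$ is $p_{\le j}$.
\begin{itemize}
\item Fix $x$ and $y_{<j}$. By Lemma~\ref{lem:continuity}, $F_j(\cdot\mid x,y_{<j})$ is continuous and strictly increasing.
\item Its limits are $0$ and $1$, because $f_j$ is a probability density: this uses positivity and finiteness of $p_{<j}$ from Condition~\ref{cond:regular}.
\item Hence $Q_j(\cdot\mid x,y_{<j})$ is its genuine inverse, and $Q_j(U\mid x,y_{<j})$ has density $f_j(\cdot\mid x,y_{<j})$ when $U$ is uniform.
\item Since $U_j$ is independent of $(X,U_1,\dots,U_{j-1})$, conditioning on $(X,g_{<j})=(x,y_{<j})$ shows that the conditional law of $g_j$ has density $f_j$.
\item Multiplying gives the joint density $p_{<j}(x,y_{<j})\,f_j(y_j\mid x,y_{<j})=p_{\le j}(x,y_{\le j})$.
\end{itemize}
To make the conditioning step rigorous, I would compute $\P\big(X\in A,\,g_{<j}\in B,\,g_j\le t\big)$ by Fubini, using independence of $U_j$ and the measurability of $F_j$ (which holds because it is continuous). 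At $j=k$ this gives $(X,g(X,\varepsilon))\overset{d}{=}(X,Y)$, since $p_{\le k}=p$.

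The main obstacle is continuity of $Q_j$ in all arguments, but Lemma~\ref{lem:continuity} already provides it. What remains is bookkeeping. One point is ensuring that the arguments fed into $Q_j$ stay in its domain. The other is the measure-theoretic justification that the conditional density calculation holds for every $(x,y_{<j})$ rather than almost everywhere. Both are handled by positivity, which makes each $p_{<j}>0$ everywhere, so $f_j$ is defined pointwise.
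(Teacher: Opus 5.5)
Your proof is correct and follows the paper's own argument essentially step for step. Both use the inverse Rosenblatt construction with $U_j=\Phi(\varepsilon_j)$, obtain continuity of $g$ by induction from Lemma~\ref{lem:continuity}, and derive the distributional identity by sequential conditioning on the independence of $U_j$ from $(X,U_{<j})$; your Fubini computation of $\P(X\in A,\,g_{<j}\in B,\,g_j\le t)$ just spells out what the paper asserts as ``the sequential conditional distributions coincide.''
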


Equivalently, one may use a vector $U$ of independent $\operatorname{Unif}(0,1)$ variables and define the generator on $\X\times(0,1)^k$.

\subsection{Continuous map \texorpdfstring{$g$}{g} for discrete outcomes}
\label{sec:discrete}

We now turn to discrete outcomes, starting with a latent-lift criterion for a measurable link $h$ and a continuous latent random vector $Z\in\mathbb{R}^k$.
\begin{Definition}[Regular positive latent lift]\label{def:latent_lift}
	Let $h:\mathbb{R}^k\to\Y$ be measurable. A \emph{regular positive latent lift} of $(X,Y)$ through $h$ is a density
	\[
		a:\X\times\mathbb{R}^k\to(0,\infty)
	\]
	with respect to $\nu(dx)\,dz$ that satisfies Condition~\ref{cond:regular}, with $z$ in place of $y$, and such that, if $(X,Z)$ has joint density $a$, then
	\begin{equation}
		(X,h(Z))\overset{d}{=}(X,Y).
		\label{eq:equal_in_dist}
	\end{equation}
\end{Definition}
The main requirement of the above definition is that there exists a continuous latent variable $Z$ such that $(X,Z)$ has a regular positive density.

The following theorem shows that a regular positive latent lift yields a continuous generator $g$, whatever the type of $Y$.
\begin{Theorem}[Latent-lift criterion]\label{thm:lift_criterion}
	If $(X,Y)$ admits a regular positive latent lift through $h$, then, for $\varepsilon\sim\mathcal N(0,I_k)$ independent of $X$, there exists a continuous map $g:\X\times\mathbb{R}^k\to\mathbb{R}^k$ such that
	\[
		\left(X,h(g(X,\varepsilon))\right)\overset{d}{=}(X,Y).
	\]
\end{Theorem}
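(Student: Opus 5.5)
The plan is to reduce Theorem~\ref{thm:lift_criterion} to Theorem~\ref{thm:continuous}, applied to the latent pair $(X,Z)$ rather than to $(X,Y)$. The latent lift is built to satisfy exactly the hypotheses of that theorem, so the continuous outcome result should do essentially all of the work.

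First I would take $(\tilde X,Z)$ with joint density $a$ with respect to $\nu\otimes\lambda_k$. By Definition~\ref{def:latent_lift}, $a$ satisfies Condition~\ref{cond:regular} with $z$ in place of $y$. Theorem~\ref{thm:continuous} then yields a continuous map $g:\X\times\rb^k\to\rb^k$ such that
\[
(\tilde X,g(\tilde X,\varepsilon))\overset{d}{=}(\tilde X,Z)
\]
for $\varepsilon\sim\mathcal N(0,I_k)$ independent of $\tilde X$.

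Second, I would identify the law of $\tilde X$ with that of $X$. Applying the measurable map $(x,z)\mapsto x$ to \eqref{eq:equal_in_dist} gives $\tilde X\overset{d}{=}X$. Because $\varepsilon$ is independent of the covariate in both cases, the joint law of $(X,\varepsilon)$ equals that of $(\tilde X,\varepsilon)$, namely $P_X\otimes\mathcal N(0,I_k)$. Hence
\[
(X,g(X,\varepsilon))\overset{d}{=}(\tilde X,g(\tilde X,\varepsilon))\overset{d}{=}(\tilde X,Z).
\]

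Third, I would push forward through the measurable map $(x,z)\mapsto(x,h(z))$. This gives $(X,h(g(X,\varepsilon)))\overset{d}{=}(\tilde X,h(Z))$, and by \eqref{eq:equal_in_dist} the right-hand side is distributed as $(X,Y)$, which is the claim.

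The only step needing care, and the main point to check, is that Theorem~\ref{thm:continuous} applies verbatim. It needs $(\X,d_\X)$ to be a metric space carrying the $\sigma$-finite measure $\nu$, and the density to satisfy Condition~\ref{cond:regular}; both are built into Definition~\ref{def:latent_lift}. Nothing is required of $h$ beyond measurability, since it is used only in a pushforward. In particular, its discontinuity does not matter, because continuity is needed only for $g$, not for $h\circ g$.
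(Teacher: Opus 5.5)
Your proposal is correct and follows the paper's proof: apply Theorem~\ref{thm:continuous} to the lift pair $(X,Z)$, then push forward through $(x,z)\mapsto(x,h(z))$, using only the measurability of $h$. You also check explicitly that the covariate marginal of the lift equals $P_X$, by projecting \eqref{eq:equal_in_dist}; the paper leaves this step implicit by reusing the symbol $X$.
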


We next obtain a continuous $g$ for discrete outcomes by constructing a latent density and applying Theorem~\ref{thm:lift_criterion}. Suppose that $\Y$ is finite, and let $p_y(x)$ denote the joint subdensity of $X$ and the event $\{Y=y\}$ with respect to $\nu$:
\[
	\mathbb{P}(X\in B,Y=y)=\int_Bp_y(x)\,d\nu(x),\qquad\sum_{y\in\Y}\int_{\X}p_y(x)\,d\nu(x)=1.
\]
With this notation, the pushforward requirement \eqref{eq:equal_in_dist} is equivalent to
\begin{equation*}
	\int_{A_y}a(x,z)\,dz=p_y(x),\qquad A_y=h^{-1}(\{y\}),\quad\text{for $\nu$-almost every $x$ and every $y\in\Y$}.
\end{equation*}

\begin{figure}[tp!]
	\centering
	\includegraphics[width=\textwidth]{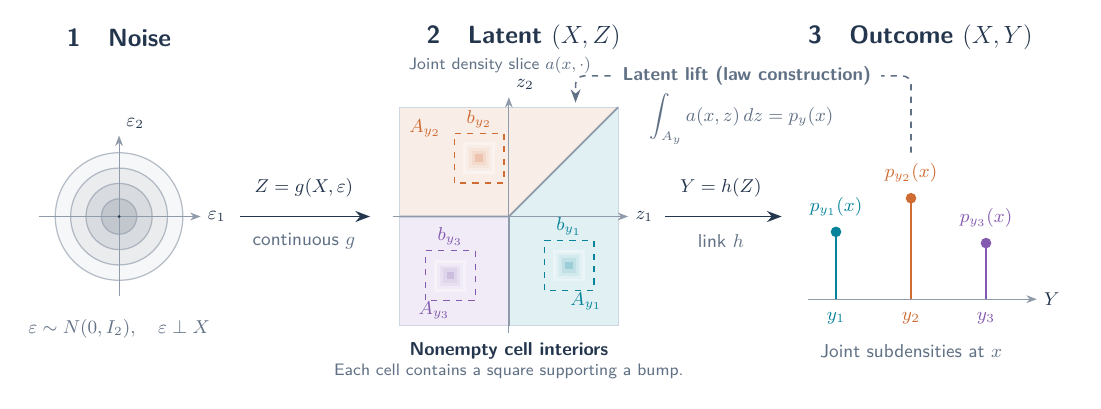}

    \vspace*{-.35cm}
	\caption{Latent-lift construction for discrete outcomes. Independent Gaussian noise $\varepsilon$ is transformed by the continuous generator into $Z=g(X,\varepsilon)$, which the link $h$ maps to $Y=h(Z)$. For a fixed $x$, the middle and right panels show the slice $a(x,\cdot)$ of the latent density and the joint class subdensities $p_y(x)$. Each cell $A_y=h^{-1}(\{y\})$ has nonempty interior and supports a bump density $b_y$. Together with a positive Gaussian background (not shown), these bumps form a latent density satisfying $\int_{A_y}a(x,z)\,dz=p_y(x)$, as indicated by the dashed arrow. The conditional density of $Z$ given $X=x$ and the conditional probabilities of $Y$ given $X=x$ are obtained by dividing $a(x,z)$ and $p_y(x)$ by $p_X(x)=\sum_y p_y(x)>0$.}
	\label{fig:latent_lift}
\end{figure}

The representation for discrete outcomes rests on the following condition.
\begin{Condition}[Interior cells and positive class subdensities]
\label{cond:interior}
	The link $h:\mathbb{R}^k\to\Y$ is measurable, and for every $y\in\Y$:
	\begin{enumerate}
		\item the cell $A_y=h^{-1}(\{y\})$ has nonempty interior;
		\item the subdensity $p_y$ is finite, continuous, and strictly positive on $\X$.
	\end{enumerate}
\end{Condition}

Under this condition, we construct a regular positive latent lift $a$ through the given link $h$, and Theorem~\ref{thm:lift_criterion} then yields the following result. Figure~\ref{fig:latent_lift} illustrates the construction.

\begin{Theorem}[Discrete representation]\label{thm:discrete_continuous}
	Suppose that Condition~\ref{cond:interior} holds. Then, for $\varepsilon\sim\mathcal N(0,I_k)$ independent of $X$, there exists a continuous map $g:\X\times\mathbb{R}^k\to\mathbb{R}^k$ such that
	\[
		\left(X,h\left(g\left(X,\varepsilon\right)\right)\right)\overset{d}{=}(X,Y).
	\]
\end{Theorem}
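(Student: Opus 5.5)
The plan is to reduce the statement to Theorem~\ref{thm:lift_criterion}: under Condition~\ref{cond:interior}, I will construct an explicit regular positive latent lift $a$ through $h$, as in Figure~\ref{fig:latent_lift}. For each $y\in\Y$, the cell $A_y$ has nonempty interior, so I pick a closed ball $\bar B_y\subset\operatorname{int}A_y$. On it I place a continuous bump density $b_y$, for instance a normalized smooth compactly supported function, positive on the open ball and zero outside. Let $\phi$ be the standard Gaussian density on $\rb^k$, and set $\pi_y=\int_{A_y}\phi(z)\,dz$; these satisfy $\sum_y\pi_y=1$ because the finitely many cells partition $\rb^k$. Fix $\delta\in(0,1)$ and define
\[
a(x,z)=\sum_{y\in\Y}p_y(x)\Big[\delta\,\frac{\phi(z)\mathbf 1_{A_y}(z)}{\pi_y}\cdot\pi_y+(1-\delta)b_y(z)\Big]+\text{correction},
\]
but it is cleaner to write
\[
a(x,z)=\delta\,p_X(x)\phi(z)+\sum_{y\in\Y}\big(p_y(x)-\delta\,p_X(x)\pi_y\big)\,b_y(z),
\]
with $p_X=\sum_yp_y$. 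Then $\int_{A_y}a(x,z)\,dz=\delta p_X\pi_y+p_y-\delta p_X\pi_y=p_y(x)$, since $b_{y'}$ vanishes on $A_y$ for $y'\ne y$. This gives \eqref{eq:equal_in_dist}.

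The main obstacle is positivity. The coefficient $p_y(x)-\delta p_X(x)\pi_y$ must be nonnegative, and a fixed $\delta$ may fail where the ratio $p_y/p_X$ gets small. The remedy is to let the weight depend on $x$: replace $\delta$ by
\[
\delta(x)=\tfrac12\min_y p_y(x)/p_X(x),
\]
which is continuous and strictly positive by Condition~\ref{cond:interior}(ii), since $\Y$ is finite. Then each coefficient satisfies $p_y-\delta p_X\pi_y\ge p_y-\tfrac12p_y>0$. Moreover $a>0$ everywhere, because the Gaussian background term $\delta(x)p_X(x)\phi(z)$ is strictly positive, and the bump terms are nonnegative.

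It remains to check Condition~\ref{cond:regular} for $a$. The function $a$ is a finite sum of products of continuous functions of $x$ and continuous functions of $z$, so it is jointly continuous. Its marginals $a_{\le j}(x,z_{\le j})$ are the same combinations with $\phi$ and $b_y$ replaced by their integrals over $z_{>j}$. These integrals are continuous in $z_{\le j}$: the Gaussian marginals are explicit, and for the bumps one uses dominated convergence on the compact support. They are also finite, and at $j=0$ the marginal equals $p_X$. Theorem~\ref{thm:lift_criterion} then yields the continuous generator $g$, which completes the proof.
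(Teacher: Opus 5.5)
Your proposal is correct and is essentially the paper's proof: a strictly positive Gaussian background with a continuous $x$-dependent weight, plus bump densities placed inside each cell $A_y$ to match the class subdensities, followed by an appeal to Theorem~\ref{thm:lift_criterion}. The differences are cosmetic (your background weight $\delta(x)p_X(x)=\tfrac12\min_y p_y(x)$ versus the paper's $\tfrac12\min_v p_v(x)/\alpha_v$, and a smooth bump on a ball versus a tent function on a cube), but you should delete the garbled first display with the ``correction'' term and say explicitly that $a$ has total mass one because $\int p_X\,d\nu=1$.
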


\subsection{Continuous map \texorpdfstring{$g$}{g} for mixed-type outcomes}
\label{sec:mixed_type}

We next consider mixed-type outcomes $Y=(W,D)\in\mathbb{R}^{k_1}\times\mathcal D$, where $W$ is a continuous random vector and $D$ takes values in a finite set $\mathcal D\subseteq\mathbb{R}^{k_2}$, with $k=k_1+k_2$. For each $d\in\mathcal D$, let $p_d(x,w)$ denote the joint subdensity of $(X,W)$ and the event $\{D=d\}$ with respect to $\nu(dx)\,dw$, so that
\[
	\mathbb P(X\in B,W\in C,D=d)=\int_B\int_C p_d(x,w)\,dw\,\nu(dx).
\]
Write $f(x,w)=\sum_{d\in\mathcal D}p_d(x,w)$ for the joint density of $(X,W)$. Let $h_D:\mathbb{R}^{k_2}\to\mathcal D$ be a measurable link for the discrete block, and define
\[
	h:\mathbb{R}^{k}\to\mathbb{R}^{k_1}\times\mathcal D,\qquad h(w,z)=(w,h_D(z)).
\]
Thus, $h$ retains the continuous block $w$ and maps the latent block $z$ into $\mathcal D$.

The idea is to first generate the continuous block $W$ given $X$, and then $D$ given $X$ and the generated $W$; this preserves the dependence between the two blocks.

\begin{Condition}[Mixed-type outcomes]
\label{cond:mixed}
	The joint density $f$ of $(X,W)$, the subdensities $p_d$, and the link $h_D$ satisfy:
	\begin{enumerate}
		\item The joint density $f(x,w)$ satisfies Condition~\ref{cond:regular}, with $w$ and $k_1$ in place of $y$ and $k$.
		\item For every $d\in\mathcal D$, the subdensity $p_d$ is finite, continuous, and strictly positive on $\X\times\mathbb{R}^{k_1}$.
		\item For every $d\in\mathcal D$, the cell $h_D^{-1}(\{d\})$ has nonempty interior.
	\end{enumerate}
\end{Condition}

\begin{Theorem}[Mixed-type representation]
\label{thm:mixed}
	Suppose that Condition~\ref{cond:mixed} holds. Then, for $\varepsilon\sim\mathcal N(0,I_{k})$ independent of $X$, there exists a continuous map $g:\X\times\mathbb{R}^{k}\to\mathbb{R}^{k}$ such that
	\[
		\left(X,h(g(X,\varepsilon))\right)\overset d=(X,Y).
	\]
\end{Theorem}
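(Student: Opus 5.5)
The plan is to reduce Theorem~\ref{thm:mixed} to the latent-lift criterion, Theorem~\ref{thm:lift_criterion}, by building a regular positive latent lift $a(x,w,z)$ on $\X\times\rb^{k_1}\times\rb^{k_2}$ through the product link $h(w,z)=(w,h_D(z))$. The covariate space is the same metric space $\X$ with measure $\nu$. The latent vector is $(W,Z)$, and the requirement \eqref{eq:equal_in_dist} becomes
\[
\int_{h_D^{-1}(\{d\})}a(x,w,z)\,dz=p_d(x,w)\qquad\text{for all }(x,w)\text{ and all }d\in\mathcal D.
\]
This simply repeats the discrete construction of Theorem~\ref{thm:discrete_continuous} with the enlarged covariate $(x,w)$ in place of $x$.

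Concretely, I fix for each cell $A_d=h_D^{-1}(\{d\})$ a small open ball inside its interior, which exists by part~3 of Condition~\ref{cond:mixed}. On that ball I place a smooth bump density $b_d$ with $\int b_d=1$. I also fix a standard Gaussian density $\phi$ on $\rb^{k_2}$ with masses $m_d=\int_{A_d}\phi\in[0,1)$. The cells partition $\rb^{k_2}$ up to a null set and the $m_d$ sum to one. I then set
\[
a(x,w,z)=\sum_{d}\big(p_d(x,w)-\tfrac12 m_d f(x,w)\big)\,b_d(z)+\tfrac12 f(x,w)\,\phi(z).
\]
Each cell then integrates to $p_d(x,w)$, because the bump of $d$ contributes $p_d-\tfrac12 m_df$ and the background contributes $\tfrac12 m_d f$. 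A background weight smaller than $\tfrac12$ can be used if needed.

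The main obstacle is positivity: the coefficients $p_d-\tfrac12 m_d f$ must be positive. Using $f=\sum_{d'}p_{d'}$, the coefficient equals $p_d(1-\tfrac12 m_d)-\tfrac12 m_d\sum_{d'\ne d}p_{d'}$, which can be negative when $p_d$ is small relative to $f$. To fix this I will let the background weight depend on $(x,w)$. Replace $\tfrac12 f$ by
\[
\lambda(x,w)=\tfrac12\min_d p_d(x,w)\qquad\text{(or } \min_d p_d/2\text{ for } m_d\le 1\text{)},
\]
which is continuous and strictly positive by part~2. Then $p_d-\lambda m_d\ge p_d/2>0$, so all coefficients are positive and continuous, and $a=\sum_d(p_d-\lambda m_d)b_d+\lambda\phi$ is strictly positive everywhere, since $\lambda\phi>0$, and jointly continuous.

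It remains to check the rest of Condition~\ref{cond:regular} for $a$ in the ordering $(w_1,\dots,w_{k_1},z_1,\dots,z_{k_2})$. For $j\le k_1$ the marginals of $a$ are the marginals of $f$, because integrating out $z$ gives $\sum_d p_d=f$; these are finite and continuous by part~1. For $j>k_1$ the marginal is a finite sum of terms continuous in $(x,w)$ times integrals of $b_d$ or $\phi$ over the remaining $z$-coordinates. Those integrals are continuous and bounded in the retained coordinates by dominated convergence, since $b_d$ is bounded with compact support and $\phi$ is Gaussian, so the marginal is finite and continuous. Theorem~\ref{thm:lift_criterion} then yields a continuous $g:\X\times\rb^k\to\rb^k$ with $(X,h(g(X,\varepsilon)))\overset d=(X,(W,D))$. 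Its first $k_1$ coordinates reproduce $W$ exactly through the identity part of $h$.
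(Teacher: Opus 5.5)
Your proof is correct, but it is organized differently from the paper's.

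The paper argues by composition. Theorem~\ref{thm:continuous} gives a continuous $g_c$ for $W$ given $X$. Theorem~\ref{thm:discrete_continuous}, applied with the enlarged covariate $(X,W)$ on $\X\times\rb^{k_1}$ with reference measure $\nu(dx)\,dw$, gives a continuous $g_d(x,w,e_d)$ for $D$ given $(X,W)$. The generator is then $g(x,e_c,e_d)=(g_c(x,e_c),g_d(x,g_c(x,e_c),e_d))$, and a separate equality-in-distribution argument uses the independence of $\varepsilon_d$.

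You instead treat $(w,z)$ as one $k$-dimensional latent vector with covariate $x$ alone, and verify Condition~\ref{cond:regular} directly. The identity $\int a\,dz=f$, which uses $\sum_d m_d=1$, makes the first $k_1$ sequential marginals equal to those of $f$, so part~1 of Condition~\ref{cond:mixed} covers them. The remaining marginals are finite mixtures of bump and Gaussian marginals with continuous weights. A single application of Theorem~\ref{thm:lift_criterion} then finishes the proof.

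Your density is essentially the lift that the proof of Theorem~\ref{thm:discrete_continuous} builds on the enlarged covariate space. Your $\lambda=\frac12\min_d p_d$ works in place of the paper's $\frac12\min_d p_d/\alpha_d$ because $m_d\le 1$. Since the inverse Rosenblatt map of Theorem~\ref{thm:continuous} runs coordinate by coordinate, the generator it produces from your $a$ is, up to this choice of background weight, the paper's two-stage composition.

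So the two routes differ in bookkeeping rather than substance. Yours shows that the mixed case is itself an instance of the latent-lift criterion and removes the composition step. The paper's is more modular: it uses Theorem~\ref{thm:discrete_continuous} as a black box and does not need to re-examine the sequential marginals of a $k$-dimensional density.

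One small slip: $m_d$ lies in $(0,1]$, not $[0,1)$, since it equals $1$ when $\mathcal D$ is a singleton. This is harmless, because your positivity bound only needs $m_d\le 1$.
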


\section{Simulation study}\label{sec:simulation}

We study GEM on simulated data for three outcome types: multi-label, mixed-type and ranking. In each setting we compare it with type-specific baselines and with an oracle that samples from the true conditional distribution, and we assess the fit of the joint distribution as well as of its marginals. The metric curves are means over $50$ replications, with bands of one standard error; Appendix~\ref{app:simulation} gives the settings of all methods and the computation of the metrics.

\subsection{Multi-label outcome}

We first consider predicting a binary label vector $Y\in\{-1,+1\}^{10}$ from covariates $X\sim\mathcal N(0,I_5)$. The data follow a probit model $Y_j=\sign(Z_j)$ whose latent scores couple a per-label signal with a component shared across all labels, in both mean and noise,
\begin{equation*}
Z_j=\sqrt{1-\rho}\,f_j(X)+\sqrt{\rho}\,f_0(X)+\sqrt{1-\rho}\,\zeta_j+\sqrt{\rho}\,u,\qquad \zeta_1,\dots,\zeta_{10},u\overset{\text{iid}}{\sim}\mathcal N(0,1).
\end{equation*}
The parameter $\rho\in[0,1]$ sets the dependence between the labels and leaves the noise level of each label unchanged: for every $\rho$ one has $\Var(Z_j\mid X)=1$ and the closed-form marginal $\P(Y_j=1\mid X=x)=\Phi(f_j^\rho(x))$, where $\Phi$ is the standard normal distribution function and $f_j^\rho=\sqrt{1-\rho}\,f_j+\sqrt{\rho}\,f_0$, while $\corr(Z_j,Z_l\mid X)=\rho$. The labels are thus conditionally independent at $\rho=0$, and at $\rho=1$ the latent scores coincide, $Z_1=\dots=Z_{10}$, so that $Y$ is either $\mathbf 1$ or $-\mathbf 1$. The per-label component is the sum of a linear and a sine term, $f_j(x)=\beta_j^\top x+\sin(2\gamma_j^\top x)$, and in the shared component the sine term $\sin(2\gamma_0^\top x)$ is replaced by a centered quadratic ridge term of the same variance, $f_0(x)=\beta_0^\top x+c\big((\gamma_0^\top x)^2-\|\gamma_0\|^2\big)$, so that only its shape changes, not its strength. The $\beta_j$'s and $\gamma_j$'s are randomly generated coefficients. The shared component, which carries the dependence, is thus nonlinear in a way that no linear model of the marginals can represent. We sweep $\rho\in\{0.1,0.2,\dots,1.0\}$ at sample sizes $n\in\{100,1000,10000\}$. We show the results at $n=10{,}000$; Appendix~\ref{app:multilabel} gives those at $n=100$ and $n=1{,}000$, where GEM leads on the joint metrics once the dependence is moderate.

We compare GEM with the sign link against two marginal-only baselines. Both baselines treat the labels as conditionally independent given $x$ but differ in their marginal model: {linear logistic} regression fits one generalized linear model per label, whereas {NN logistic} fits the conditional probabilities using a neural network model and the cross-entropy loss. Because every $f_j^\rho$ is nonlinear, the linear model is misspecified even on the marginals while the neural network one is not. We report the joint energy score on the full $\pm1$ vector to assess the overall distributional fit, mean per-label CRPS and the per-label cross-entropy as marginal calibration checks, the co-occurrence error (the mean of $|\hat{\P}(Y_j=Y_l\mid X=x)-\P(Y_j=Y_l\mid X=x)|$ over label pairs $j<l$ and test points $x$) as a joint-dependence check, and the conditional distribution of the number of positive labels at a given $X$ value.

\begin{figure}
	\centering
	\begin{tabular}{@{}c@{}c@{}c@{}c@{}}
		\includegraphics[width=0.24\textwidth]{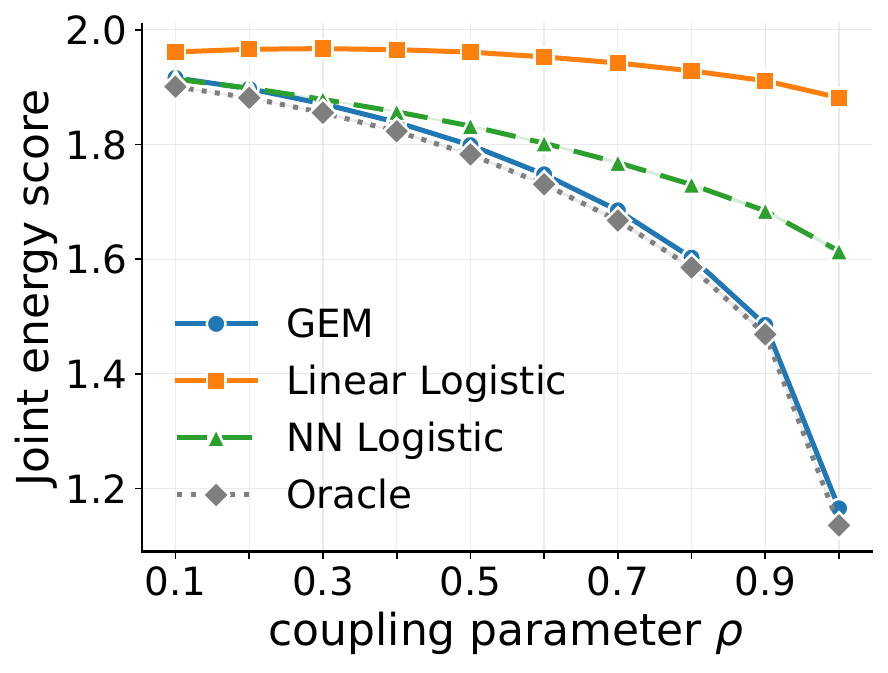} &
		\includegraphics[width=0.24\textwidth]{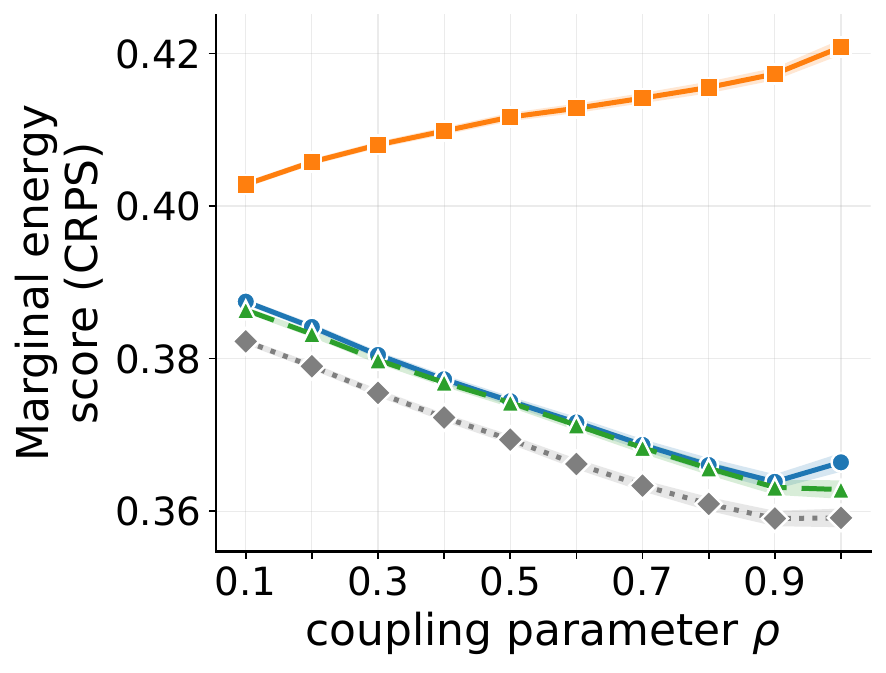} &
		\includegraphics[width=0.24\textwidth]{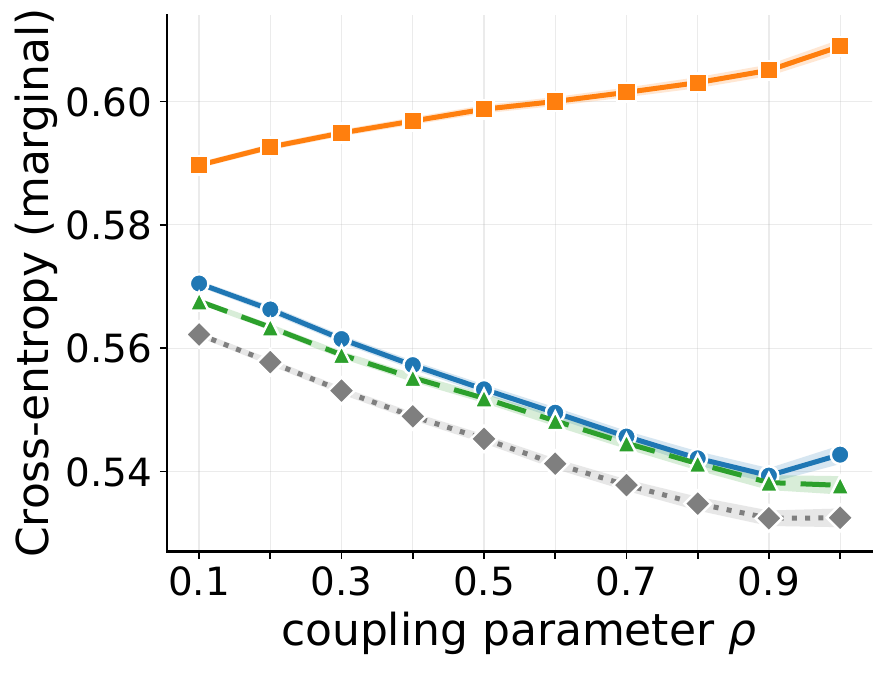} &
		\includegraphics[width=0.24\textwidth]{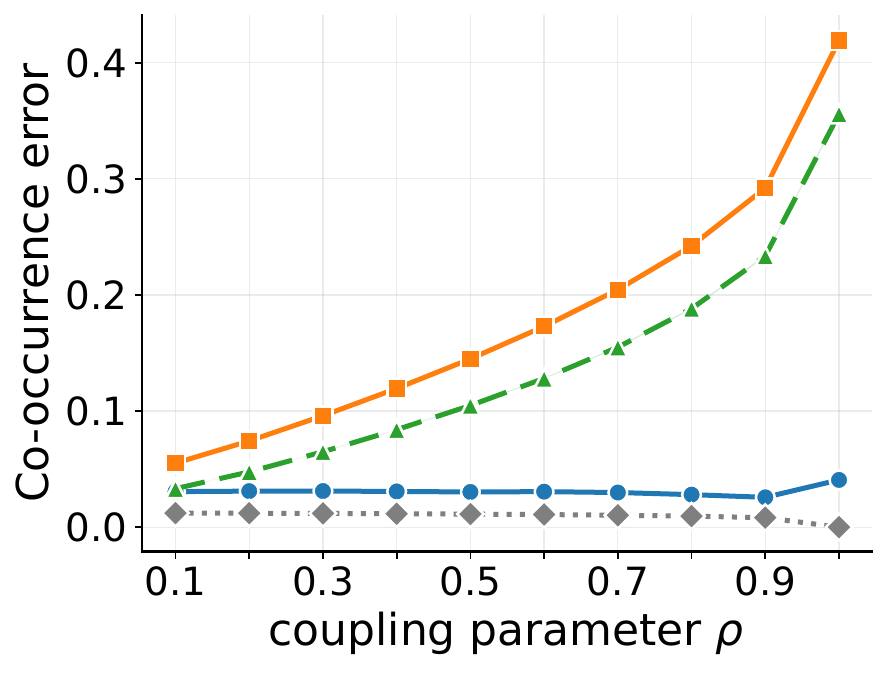} 
	\end{tabular}
	\caption{Multi-label prediction: joint energy score, mean per-label CRPS, per-label cross-entropy and co-occurrence error (left to right) against $\rho$. Lower is better.}
	\label{fig:sim62_metrics}
\end{figure}

\begin{figure}
	\centering
	\includegraphics[width=0.95\textwidth]{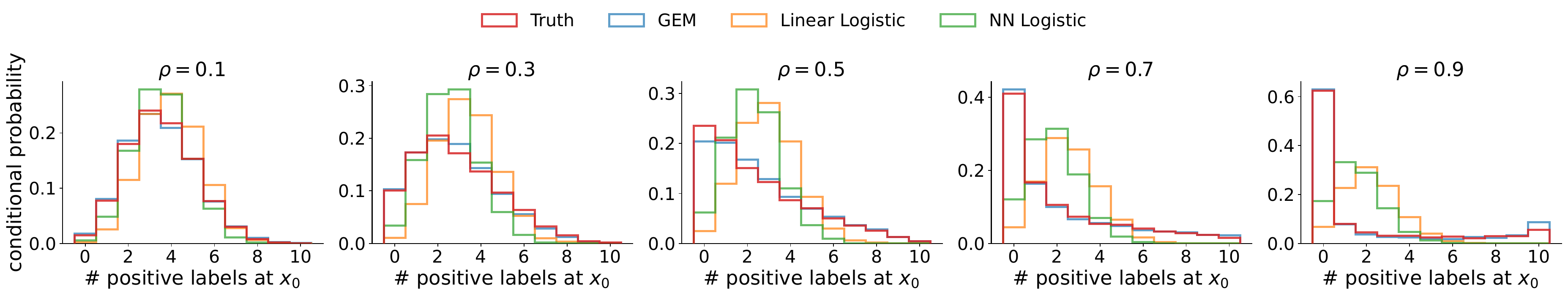}
	\caption{Conditional distribution of the number of positive labels at a fixed value $x_0$ of $X$, for five values of $\rho$, from $4{,}000$ draws of each method and of the true mechanism.}
	\label{fig:sim62_counts}
\end{figure}

Figure~\ref{fig:sim62_metrics} shows these metrics. On the marginal metrics (cross-entropy and marginal CRPS) GEM lies slightly above the flexible NN-logistic baseline, both just above the oracle, whereas linear logistic is uniformly worse and deteriorates as $\rho$ grows, since a linear model captures none of the quadratic shared mean, which dominates the marginals as $\rho\to1$. The ordering changes on the joint metrics, where GEM stays close to the oracle for every $\rho$ and both baselines fall increasingly behind as $\rho$ grows, linear logistic more than NN logistic. Their joint energy score separates from the oracle's, and their co-occurrence error climbs steeply with $\rho$, while GEM's stays small and nearly flat, slightly above the oracle's. The neural network baseline gets the marginals right but, having no joint model, moves away from the oracle when the labels become dependent. Figure~\ref{fig:sim62_counts} shows the effect on the conditional law of the number of positive labels, which spreads out as $\rho$ increases, with a growing peak at zero. GEM follows it closely throughout, while both independence-based baselines remain under-dispersed and miss the peak.

\subsection{Mixed-type outcome}\label{sec:mixed}

We next examine how well GEM captures a mixed-type outcome with dependence across types. We draw $Y=(Y^{\mathrm{cont}},Y^{\mathrm{bin}},Y^{\mathrm{cat}})$ from a shared latent Gaussian vector,
\begin{equation*}
Z=m(BX)+\Sigma^{1/2}\xi,\quad \xi\sim\mathcal N(0,I_p),\qquad Y=\Big(Z_{1:k_c},\ \sign(Z_{k_c+1:k_c+k_b}),\ \argmax_{i\in\mathrm{block}} Z_i\Big),
\end{equation*}
so that a $k_c$-dimensional continuous block, $k_b$ binary labels and a $K$-way categorical label are all read off the same latent draw $Z\in\rb^p$, with $p=k_c+k_b+K$. The covariance $\Sigma$ is a fixed positive definite matrix with substantial correlations within and across the three types. We report the discrete-heavy configuration $k_c=1$, $k_b=20$, $K=10$ under a nonlinear mean $m(u)=u+\sin(8u)$, with $n\in\{100,1000,10000\}$.

GEM uses the identity, sign and argmax links on the three blocks. The natural competitor is a {stacked-marginals} pipeline that models each block with its own correctly specified estimator: engression for the continuous block, an elementwise binary network per label, and a softmax network for the categorical label trained with cross-entropy loss. This approach targets each per-type marginal directly but captures no dependence across types, nor among the binary labels. We evaluate the joint energy score on the full mixed vector, the per-type marginal metrics (CRPS on the continuous block and cross-entropy on the binary and categorical blocks), and the cross-type dependence directly, through the errors of the correlation ratios that link the continuous residual to the binary labels and to the class (Appendix~\ref{app:mixed}).

\begin{figure}
\centering
\begin{tabular}{ccc}
	\includegraphics[width=0.3\textwidth]{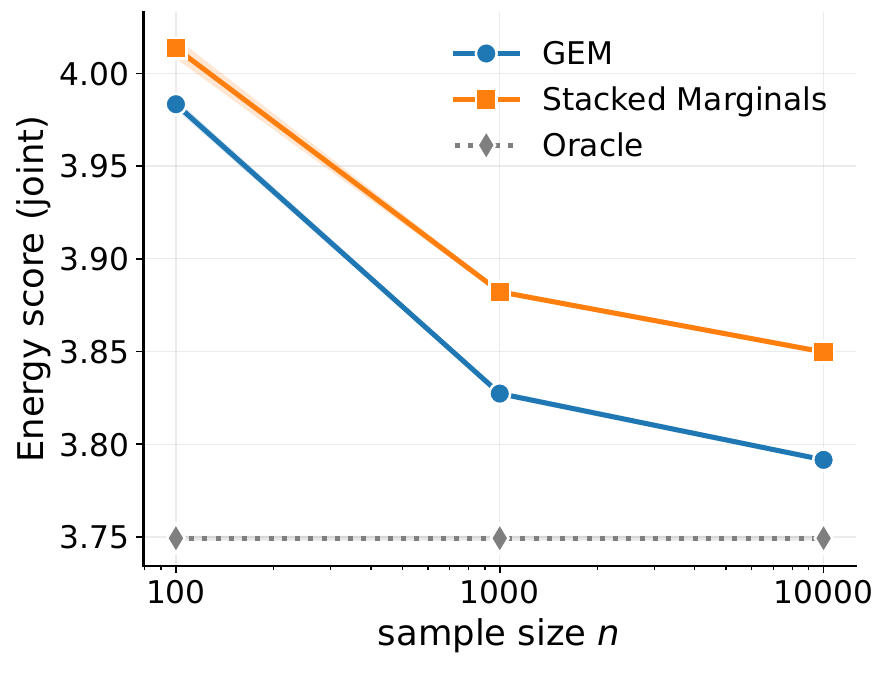} &
	\includegraphics[width=0.3\textwidth]{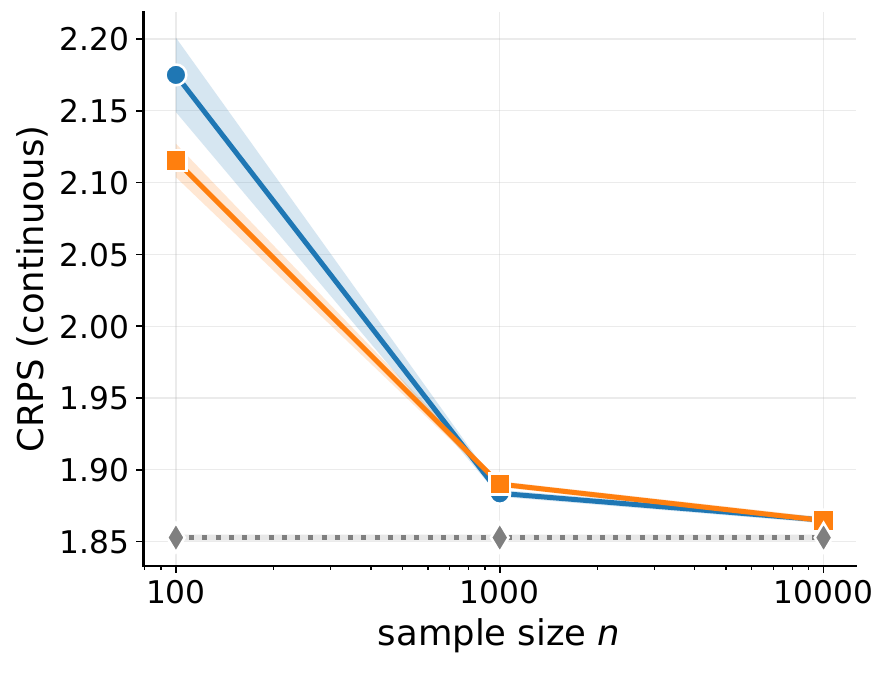} &
	\includegraphics[width=0.3\textwidth]{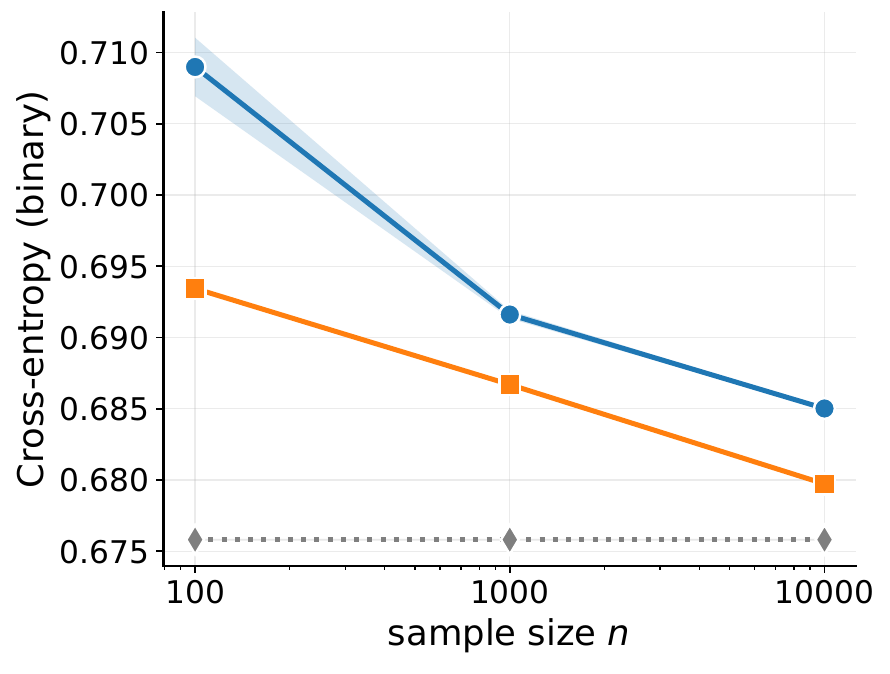}\\
	\includegraphics[width=0.3\textwidth]{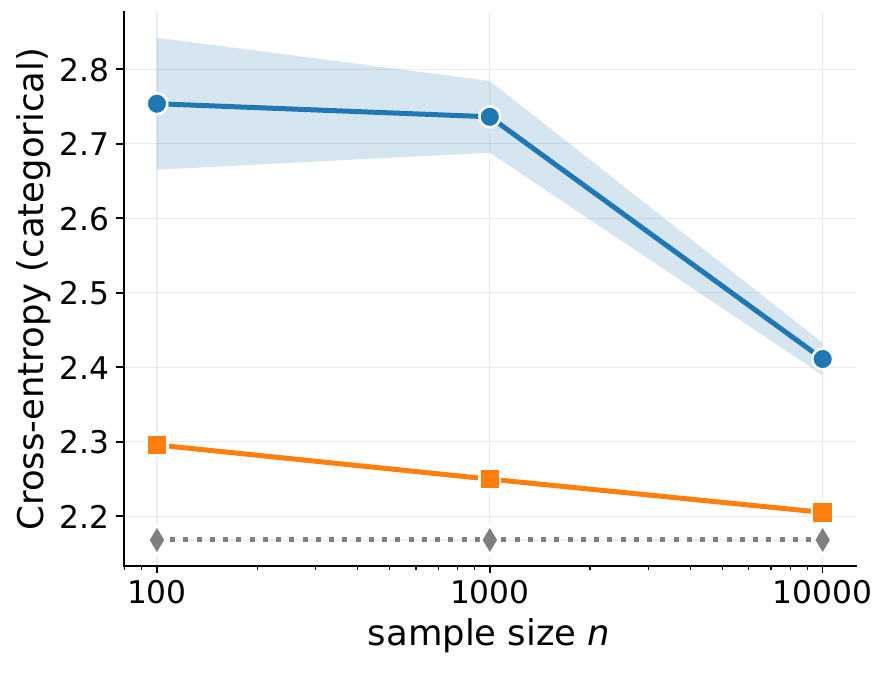} &
	\includegraphics[width=0.3\textwidth]{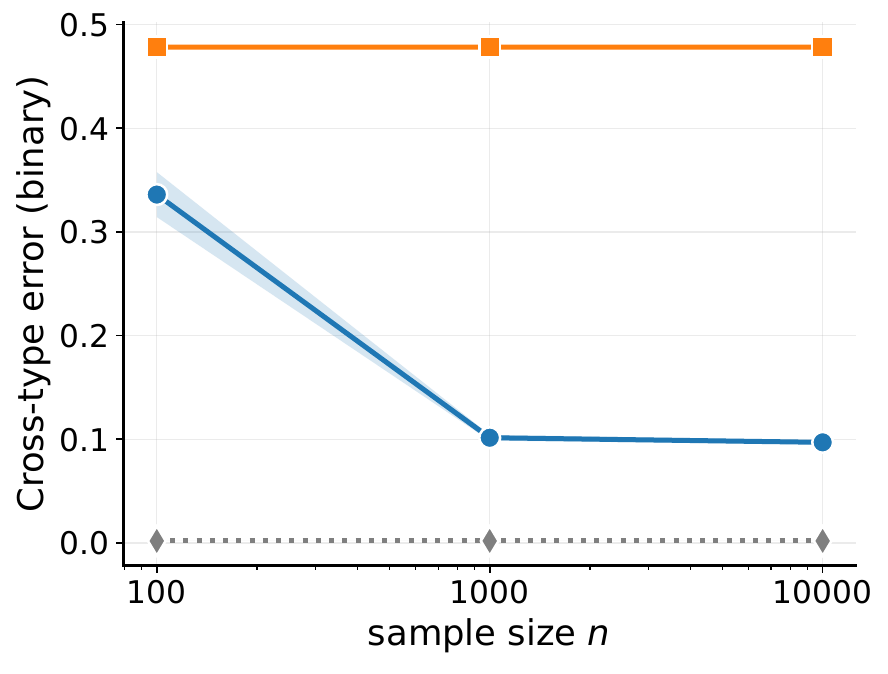} &
	\includegraphics[width=0.3\textwidth]{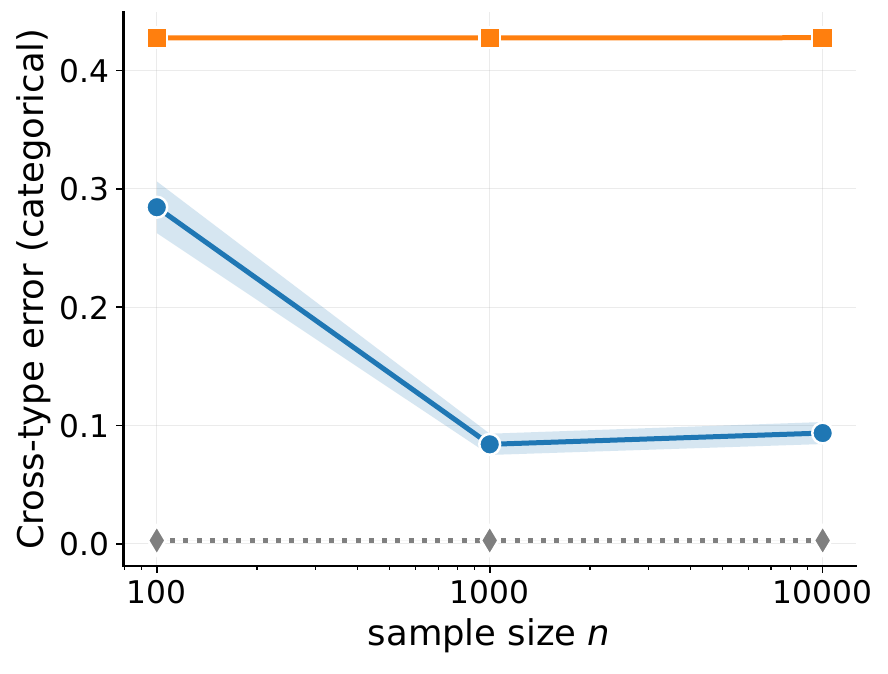}
\end{tabular}
\caption{Mixed-type simulation: joint energy score, continuous CRPS, binary and categorical cross-entropy, and the two cross-type correlation errors, against the sample size $n$. Lower is better.}\label{fig:sim63_mixed}
\end{figure}

Figure~\ref{fig:sim63_mixed} reports the metrics against sample size. On the per-type marginals GEM is close to the stacked pipeline from $n=1{,}000$ on: the continuous CRPS and binary cross-entropy of the two methods approach the oracle together, GEM's binary cross-entropy being slightly higher at every $n$. The exception is the categorical block, where GEM's cross-entropy is higher than that of the stacked pipeline, which is trained on this loss, though the gap is smaller at $n=10{,}000$ than at the two smaller sample sizes. The stacked pipeline draws the three blocks independently, so it cannot represent dependence across types: its cross-type errors equal the size of the true correlation ratios at every $n$. GEM's errors are lower already at $n=100$, fall to a fraction of the pipeline's by $n=1{,}000$ and change little after that, though they stay well above the oracle's. GEM has the lower joint energy score at every sample size and is closer to the oracle, even though its marginals are comparable or marginally less sharp. The joint score rewards the dependence across types and among the labels, which the per-type metrics do not reflect.

\subsection{Ranking outcome}

The last setting is a ranking outcome, an ordering of $K$ items, which is a categorical variable with $K!$ levels and a great deal of structure. The standard regression model for ranked responses is the Plackett--Luce model \citep{luce1959individual,plackett1975analysis} with item scores linear in the covariates, also known as the rank-ordered logit \citep{beggs1981assessing,hausman1987specifying}. It is the law of the ordering induced by random utilities with independent Gumbel noise, and it has the property of independence of irrelevant alternatives: the relative order of two items does not depend on which other items are present. More flexible versions replace the linear scores by, for example, splines but keep the Plackett--Luce form. GEM with the ranking link assumes neither linear scores nor the Plackett--Luce form.

The data come from a random-utility model with $K=4$ items and $X\sim\mathcal N(0,I_5)$,
\begin{equation*}
	U_j=f_j(X)+\xi_j,\quad j=0,\dots,3,\qquad Y=\text{the ordering of the four items by }U,
\end{equation*}
so that $Y$ takes one of $4!=24$ values; we write $1\succ0\succ2\succ3$ for the ordering that places item $1$ first, then items $0$, $2$ and $3$. The item means are nonlinear, $f_j(x)=\beta_j^\top x+\sin(2\gamma_j^\top x)$, with randomly generated coefficients, held fixed across replications, under which the standard deviation of $f_j(X)$ lies between $0.77$ and $0.96$. Items $0$ and $1$ are twins, which differ only by a constant offset in the mean that is small compared with the noise, $f_1=f_0+\delta$ with $\delta=0.1$. The noise is $\xi_j=c\,G_j$ with $G_j$ standard Gumbel and $c=\sqrt6/\pi$ so that they all have unit variance.

We consider two settings that differ only in how the twins' noise is drawn. In the \emph{independent} setting, $\xi_0,\dots,\xi_3$ are drawn independently; the conditional law of $Y$ is then exactly Plackett--Luce with scores $f_j(x)/c$, and the Plackett--Luce model is misspecified only through its linear scores. In the \emph{coupled} setting, the twins share their draw, $\xi_1=\xi_0$, with probability $0.8$, and are drawn independently otherwise. Each $\xi_j$ is still marginally Gumbel and every pairwise comparison other than the twins' own is unchanged, but item $1$ now ranks above item $0$ with probability $0.91$ rather than $0.53$, at every $x$. No Plackett--Luce law can reproduce this.

We compare GEM with the ranking link to the Plackett--Luce model with linear scores, fitted by maximum likelihood with early stopping on a validation sample, and an oracle that samples from the truth; we have also tried the Plackett--Luce model with spline scores, which is worse at small $n$ in the independent setting and at most marginally better elsewhere. We report four metrics. The total-variation distance compares the true conditional law over the $24$ rankings with its estimate from the cell frequencies of predictive draws. The held-out log-loss of the observed test rankings needs no ground truth and, up to its estimation from draws, is also the criterion on which the baseline is trained. The pairwise-preference error is the mean absolute error of the $12$ probabilities $\P(j\succ l\mid x)$, the lowest-order functionals of the law. Kendall's $\tau$ is computed between the modal ranking of the fitted law and the observed ranking. Sample sizes are $n\in\{500,2000,10000\}$ with $50$ replications each. Appendix~\ref{app:ranking} gives the remaining details and the results with spline scores.

Figure~\ref{fig:sim64_metrics} shows the four metrics against $n$; they broadly agree. In the independent setting the Plackett--Luce model performs better at $n=500$ due to its efficiency and the right form for the noise despite the misspecified mean; at $n=2{,}000$ the two are close, the Plackett--Luce model still ahead in total variation, and GEM is ahead on all four metrics at $n=10{,}000$. In the coupled setting the Plackett--Luce model is far from the truth at every $n$, whereas GEM improves with $n$ and is ahead on all four metrics from $n=2{,}000$. 

\begin{figure}
	\centering
	\begin{tabular}{@{}cc@{}}
		\includegraphics[width=0.485\textwidth]{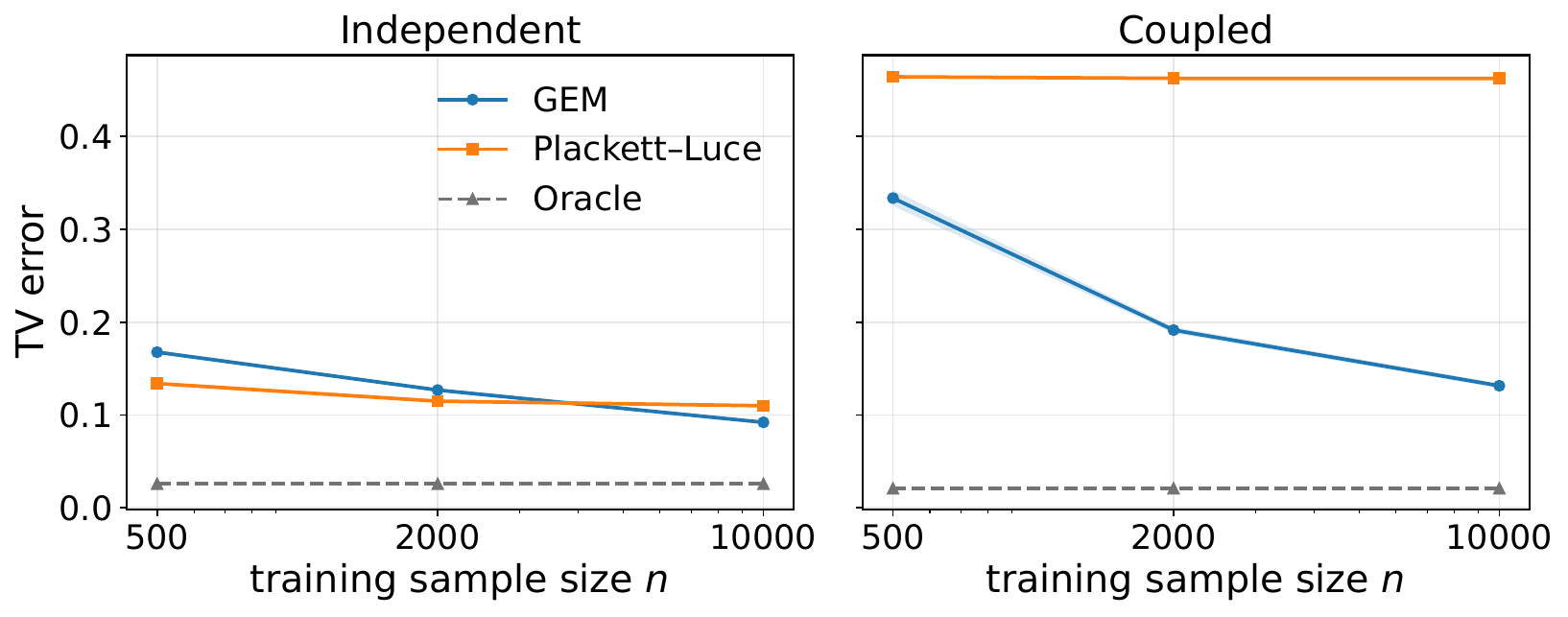} &
		\includegraphics[width=0.485\textwidth]{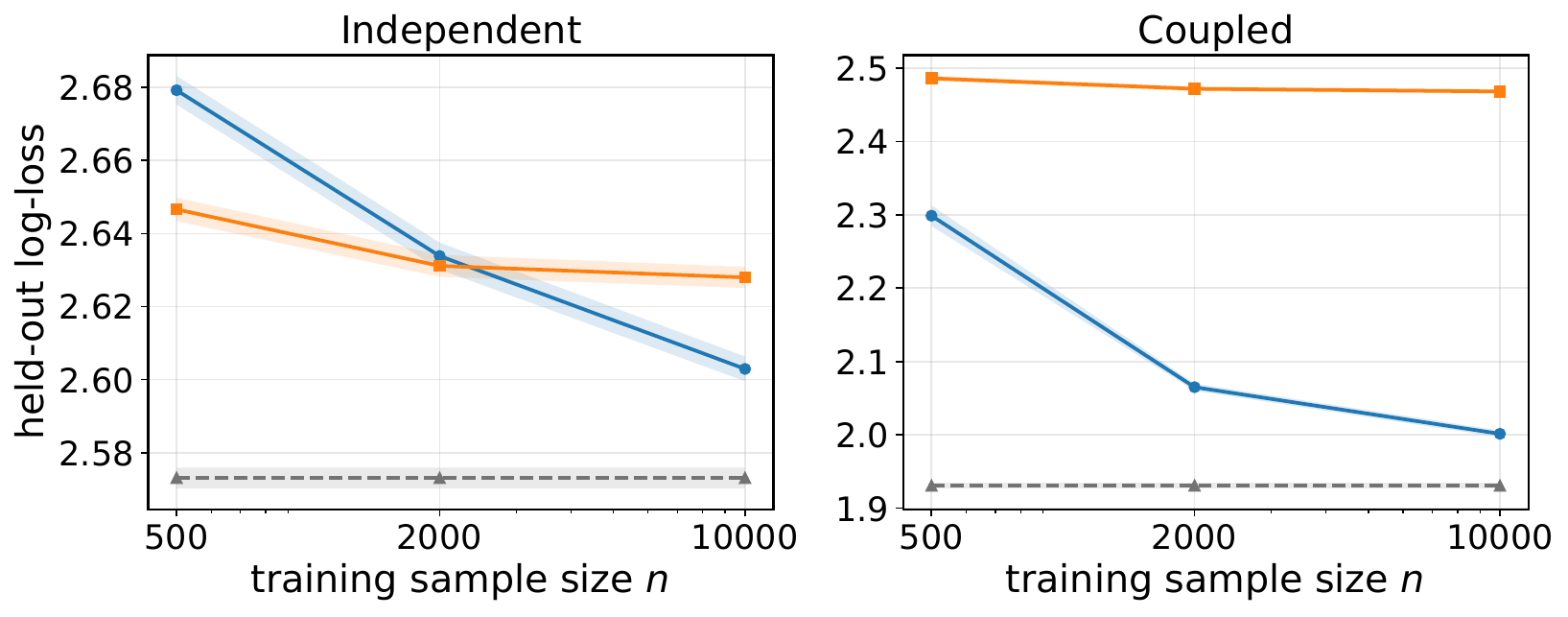}\\
		\includegraphics[width=0.485\textwidth]{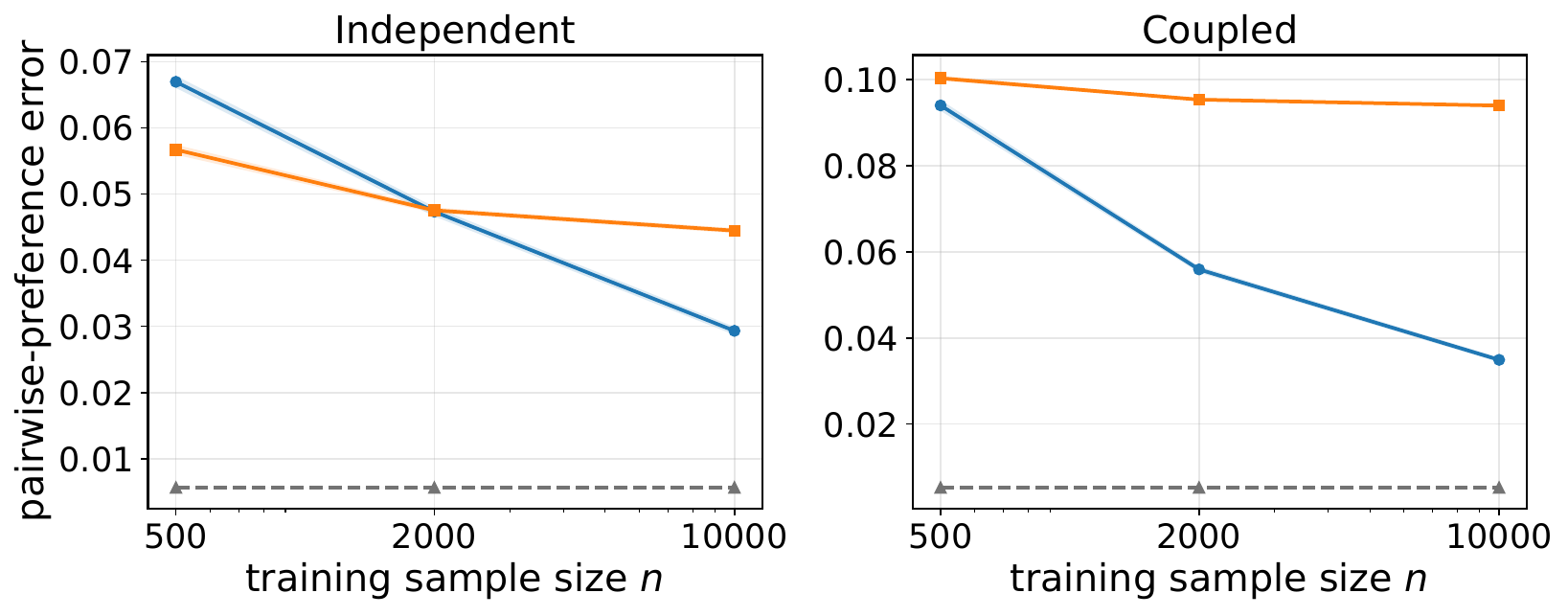} &
		\includegraphics[width=0.485\textwidth]{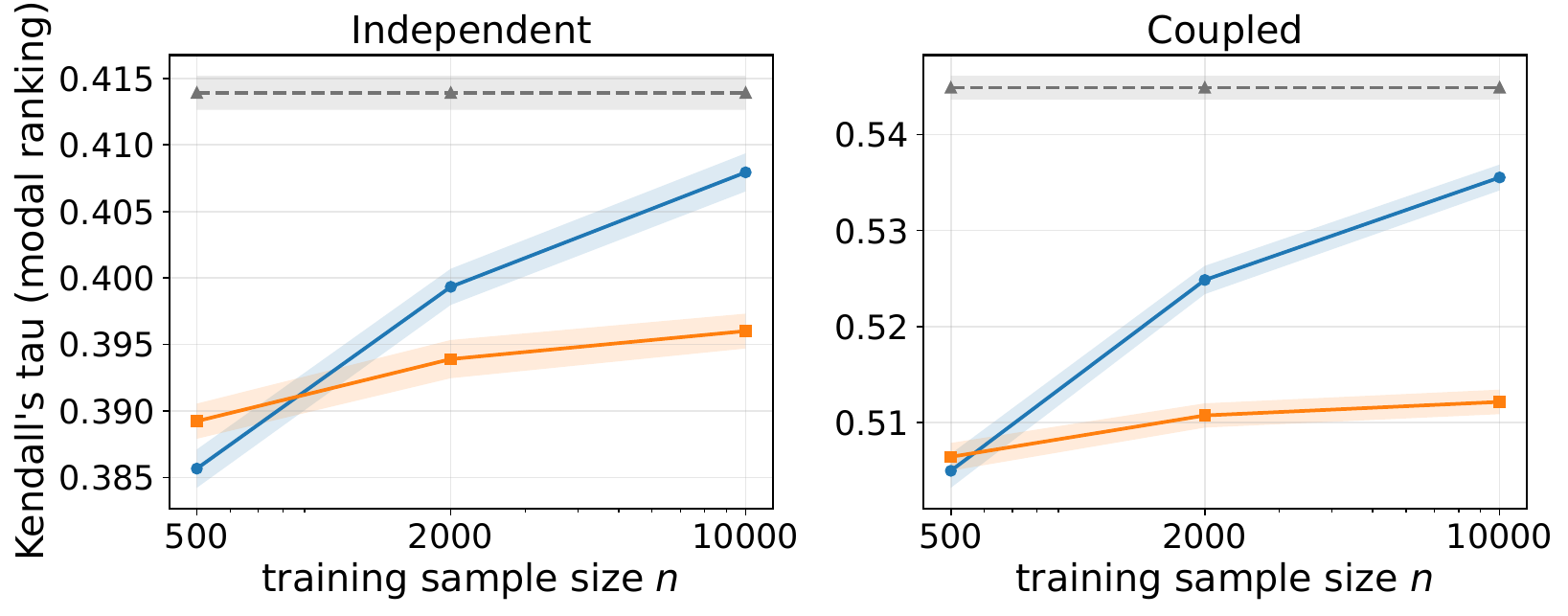}
	\end{tabular}
	\caption{Metrics for the ranking outcome regression against sample size, in the independent setting (left panel of each metric) and the coupled setting (right panel). Lower is better except for Kendall's $\tau$.}
	\label{fig:sim64_metrics}
\end{figure}

\section{Real-world applications}\label{sec:applications}
\subsection{Joint species distribution modeling}\label{sec:jsdm}

In community ecology the outcome at a site is a binary \emph{community vector}, the presence or absence of each of many species. Predictions are judged at two levels: the species level, the occurrence of each species taken alone, and the community level, functionals of the joint law such as the number of species a site holds (its richness), how its species differ from those of another site (its composition), and the co-occurrence left once shared environmental preferences are accounted for \citep{pollock2014understanding,galiana2024power}. The standard joint species distribution models \citep{warton2015many,ovaskainen2017hmsc,niku2019gllvm} place a multivariate probit or logistic regression with a low-rank Gaussian latent factor on the community vector, so the associations between species they can express are pairwise and Gaussian. GEM with the sign link imposes neither restriction.

We take the vegetation data of \citet{niittynen2018importance} from the benchmark of \citet{norberg2019comprehensive}: $242$ species of plants and lichens on $1{,}200$ plots in northern Norway, with four principal components of six environmental variables as covariates. We consider the benchmark's interpolation setting, a random division of the plots into equal training and test halves, and four further random divisions of the same plots. The species are far from independent given the covariates: the variance of richness left after the covariates is about nine times what independence would give, and little of this dependence is spatial (see Appendix~\ref{app:jsdm} for details).

\begin{figure}[t]
\centering
\includegraphics[width=\textwidth]{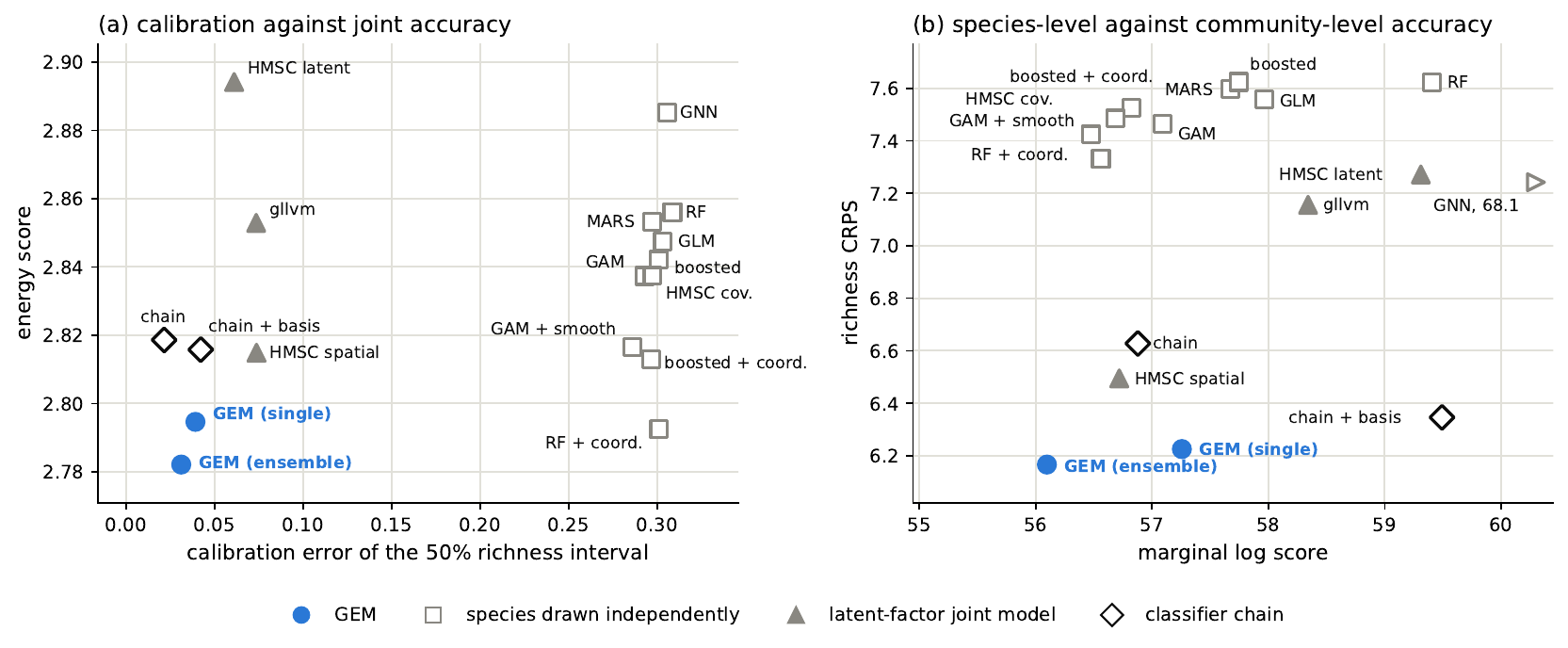}
\caption{Joint species distribution modeling on the vegetation data: the $17$ models, each averaged over the five splits; the labels abbreviate the model names of Table~\ref{tab:veg_interp_full} in Appendix~\ref{app:jsdm}. For all metrics, lower is better. (a) The richness interval error, the calibration error of the central $50\%$ interval, separates the ten models that draw the species independently (squares) from the seven with a dependence structure. (b) The richness CRPS against the marginal log score: the models that draw species independently are good on the marginals and poor on richness, the joint baselines better on richness and worse on the marginals, and GEM best on both. Figure~\ref{fig:veg_paired} in Appendix~\ref{app:jsdm} gives the split-by-split differences.}\label{fig:veg_scatter}
\end{figure}

\begin{table}
	\centering
	\caption{Joint species distribution modeling on the vegetation data: mean over the five splits; Table~\ref{tab:veg_interp_full} in Appendix~\ref{app:jsdm} gives the standard deviations. Columns: energy score; CRPS and interval error of richness and of composition (the $\beta$-S{\o}rensen dissimilarity between two plots); marginal log score (NLL, the negative log-likelihood of the observed occurrences) and AUC of species occurrence; the benchmark's overall score, an average of nine of its measures, each standardized across the models, corrected as in Appendix~\ref{app:jsdm}; time of one fit (GEM on one GPU, baselines on CPU).}
	\label{tab:veg_main}
	\small\setlength{\tabcolsep}{3pt}
	\begin{tabular*}{\textwidth}{@{\extracolsep{\fill}}lccccccccc@{}}
		 & & \multicolumn{2}{c}{richness} & \multicolumn{2}{c}{composition} & \multicolumn{2}{c}{occurrence} & & \\ \cline{3-8}
		model & energy$\scriptstyle\downarrow$ & CRPS$\scriptstyle\downarrow$ & int.$\scriptstyle\downarrow$ & CRPS$\scriptstyle\downarrow$ & int.$\scriptstyle\downarrow$ & NLL$\scriptstyle\downarrow$ & AUC$\scriptstyle\uparrow$ & overall$\scriptstyle\uparrow$ & fit time$\scriptstyle\downarrow$\\
		\hline
		GEM, 5-seed ensemble & \textbf{2.782} & \textbf{6.17} & \textbf{0.031} & \textbf{0.091} & \textbf{0.036} & \textbf{56.1} & \textbf{0.738} & \textbf{1.09} & 5 $\times$ 19 s\\
		GEM, single fit & 2.795 & \textbf{6.23} & 0.039 & \textbf{0.091} & \textbf{0.039} & 57.3 & 0.727 & 0.91 & \textbf{19 s}\\
		\hline
		Random forest + coord. & \textbf{2.792} & 7.33 & 0.301 & 0.105 & 0.290 & 56.6 & 0.734 & 0.24 & 18 min\\
		GAM + spatial smooth & 2.817 & 7.42 & 0.286 & 0.112 & 0.309 & \textbf{56.5} & \textbf{0.738} & 0.05 & 2 min\\
		GAM & 2.837 & 7.46 & 0.293 & 0.113 & 0.314 & 57.1 & 0.726 & $-$0.32 & \textbf{15 s}\\
		HMSC, covariates only & 2.837 & 7.48 & 0.297 & 0.113 & 0.305 & 56.7 & \textbf{0.745} & $-$0.20 & 10 min\\
		HMSC, latent factors & 2.894 & 7.27 & 0.061 & 0.106 & 0.117 & 59.3 & 0.694 & $-$0.78 & 14 min\\
		HMSC, spatial latent f. & 2.815 & 6.49 & 0.074 & 0.095 & 0.093 & 56.7 & 0.731 & 0.66 & 3.9 h\\
		gllvm & 2.853 & 7.16 & 0.073 & 0.100 & 0.127 & 58.3 & 0.730 & $-$0.16 & 82 min\\
		Classifier chain & 2.819 & 6.63 & \textbf{0.021} & 0.097 & 0.079 & 56.9 & 0.735 & 0.44 & 2 min\\
		Classifier chain + basis & 2.816 & 6.35 & 0.042 & 0.093 & 0.073 & 59.5 & 0.735 & \textbf{1.01} & 3 min\\
		\hline
	\end{tabular*}
\end{table}

GEM uses the sign link and, because the number of species is large, the control variate of Section~\ref{sec:algorithm}, with the four environmental components as its only inputs. We report a single fit and, inspired by deep ensembles \citep{lakshminarayanan2017simple}, an ensemble of five fits with different seeds, whose predictive distribution is the equal mixture of the five. With so few training plots a single fit depends on its random seed, and the ensemble averages over it. Of the $15$ baselines, ten draw the species independently, among them a generalized additive model (GAM), a random forest and HMSC (Hierarchical Modelling of Species Communities; \citealp{ovaskainen2017hmsc}) with covariates only, and the GAM and the random forest also with the plot coordinates. Five model the joint law: HMSC with latent factors and with spatial latent factors \citep{ovaskainen2017hmsc}, gllvm \citep{niku2019gllvm}, and a classifier chain \citep{read2011classifier} with and without a spatial basis. Unlike GEM, the spatial variants of the baselines also use the plot coordinates. Appendix~\ref{app:jsdm} gives the settings and the tuning of every model and how each produces its draws.

The energy score assesses the whole community vector. We also score the three targets on which the benchmark evaluates predictions, the column groups of Table~\ref{tab:veg_main}. At the community level, the targets are the richness of a plot and its composition, measured for a pair of plots by the $\beta$-S{\o}rensen dissimilarity, which is zero when the two plots hold the same species and one when they share none. Each gets the CRPS of Section~\ref{sec:background} and an interval error, the distance between $50\%$ and the coverage of the central $50\%$ interval, computed with the randomized probability integral transform \citep{czado2009predictive}. At the species level, the occurrence of each species gets the marginal log score, which scores the species one at a time, and the AUC, averaged over species. We also report the benchmark's overall score, which averages its measures of accuracy, discrimination and calibration for the three targets, each standardized across the models; Appendix~\ref{app:jsdm} defines it and every other metric, and explains a correction we make to it.

Figure~\ref{fig:veg_scatter} and Table~\ref{tab:veg_main} give the results. The GEM ensemble is first on six of the eight metrics in Table~\ref{tab:veg_main} and second on the other two, the richness interval error and the AUC, where the classifier chain and HMSC with covariates only, respectively, are ahead. It is therefore the best or nearly the best model for the whole community vector (energy score), for the community-level targets (richness and composition) and for the occurrence of each species (log score and AUC), and the only one among the best at all three. Appendix~\ref{app:jsdm} shows that GEM captures pairwise associations less well than other joint baselines, as measured by the log score of pairs of common species (Table~\ref{tab:veg_interp_full}). Its calibrated and accurate predictions of richness and composition show, however, that it captures the dependence among many species at once, such as the sum of the covariances, which sets the variance of richness, and possibly higher-order dependence. Together with its accurate marginals, this gives GEM the best joint performance.

In panel (a), all ten models that draw the species independently are far from calibrated for richness, their central $50\%$ interval covering only about a fifth of the plots, while all seven with a dependence structure are close to calibrated; no model lies in between. Independent draws cannot reproduce the variance of richness left after the covariates, so their predictions of richness are overconfident, whatever their marginals. A model to be trusted at the community level needs a dependence structure, and GEM has one without assuming a latent-factor form.

The same holds for composition. In the composition columns of Table~\ref{tab:veg_main}, GEM has the best CRPS and by far the smallest interval error, below that of every baseline on four of the five splits. The classifier chain, the best calibrated for richness, is much less well calibrated for composition.

Panel (b) shows why the two have to come together. The models that draw species independently include the baselines with the best marginals, the GAM with a spatial smooth, the random forest with coordinates and HMSC with covariates only, yet all ten are worse on richness than the best joint models, so good marginals alone do not give good community predictions. The gradient nearest-neighbor model, whose marginal log score is far worse than that of any other model, is drawn at the right edge of the panel. Four of the five joint baselines do better on richness than every model that draws species independently (HMSC with latent factors is level with the gradient nearest-neighbor model), but each has worse marginals than the GEM ensemble, and HMSC with latent factors, which recovers the most pairwise association, has one of the lowest AUCs of all. GEM alone is in the lower left, best on both. On these data a flexible joint model is therefore preferable for predicting community-level quantities.

Table~\ref{tab:veg_interp_full} in Appendix~\ref{app:jsdm} gives all $17$ models, and Figure~\ref{fig:veg_paired} in Appendix~\ref{app:jsdm} compares the ensemble with nine of the baselines split by split. On every split, the ensemble has a lower energy score than every baseline except the random forest with coordinates, and a lower richness CRPS than every baseline except the classifier chain with a spatial basis; it beats these two on most splits. Its AUC is level with that of the GAM with a spatial smooth and behind only that of HMSC with covariates only, and it is first on the benchmark's corrected overall score, just ahead of the chain with a spatial basis. The single fit keeps the community-level results but not the lead on the marginals, which the ensemble supplies.

GEM also has a clear computational advantage, since it is fast both to fit and to draw from. Trained by stochastic gradient descent, it takes seconds per fit on one GPU, against minutes to hours on CPU for the joint baselines, up to almost four hours for HMSC with spatial latent factors (Table~\ref{tab:veg_main}); even the five fits of the ensemble take under two minutes together. Each further draw from the fitted model is one pass through the network, whereas each draw from HMSC is a retained sample of its Markov chain, so more draws require a longer chain. These data are small, with $600$ training plots. Each gradient step of GEM costs time linear in the number of plots, and mini-batches can cap it, whereas established joint species distribution models such as HMSC become slow on large community data \citep{pichler2021new}, so the computational advantage of GEM should grow with the size of the data.

\subsection{Health examination surveys}\label{sec:nhanes}

Health surveys record measurements of several types at once for each participant: continuous laboratory and examination values, binary indicators of diagnoses and medication, and ordinal self-reports. The clinically interesting quantities are functionals of their joint distribution given covariates rather than of any single coordinate.
A model fitted separately to each block cannot answer such questions without an added model of the dependence between the blocks, such as a copula.

We use the National Health and Nutrition Examination Survey \citep[NHANES;][]{nhanes}, cycles 2007--2008 through 2017--2018, restricted to examined non-pregnant adults aged $20$ and over with complete outcomes, giving $n=27{,}983$ participants. The outcome has $13$ variables: seven continuous measurements (body mass index, waist circumference, systolic and diastolic blood pressure, HbA1c, total and HDL cholesterol), five binary indicators (diagnosed diabetes, diagnosed hypertension, blood-pressure medication, diagnosed high cholesterol, cholesterol medication), and self-rated health on a five-point ordinal scale from excellent to poor. Four of the continuous variables are log-transformed and all seven are standardized; the binary block is coded in $\{-1,+1\}$ and self-rated health as an indicator vector of length five, except that GEM models it as one ordinal coordinate (Appendix~\ref{app:nhanes}). The covariates, coded as $29$ columns, are demographic, socioeconomic and behavioral: age, sex, race and ethnicity, education, income-to-poverty ratio, marital status, smoking, recreational activity, alcohol use and survey cycle. Survey weights are ignored, the target being conditional prediction rather than population estimation. We report means and standard errors of metrics over $20$ random $60/20/20$ splits into training, validation and test sets, shared across methods.

GEM is applied with the identity, sign and ordinal links. Three baselines model the blocks separately or resample them jointly. The {stacked marginals} pipeline is built as in Section~\ref{sec:mixed}, with engression on the continuous block, an independent network per binary label and a softmax network for self-rated health; it targets each block's marginals directly and has no dependence between the blocks or among the binary labels. The {Gaussian copula} couples those same marginals through a Gaussian dependence structure fitted on their latent scores. {Distributional random forests} (DRF) \citep{cevid2022drf} is a nonparametric alternative whose predictive draws resample entire observed outcome vectors with covariate-dependent weights, so that any dependence present in the sample is preserved, but only outcome vectors that were observed can be drawn. Based on $500$ draws of the outcome vector per test point, we report the joint energy score on the $17$-dimensional outcome vector together with the per-block marginal metrics: CRPS averaged over the seven continuous coordinates, Bernoulli negative log-likelihood averaged over the five binary labels, and, for self-rated health, the ranked probability score $\sum_{k}(\hat F_k-\mathbf 1\{y\le k\})^2$ for an estimated distribution $\hat F$.

\begin{figure}
\centering
\includegraphics[width=\textwidth]{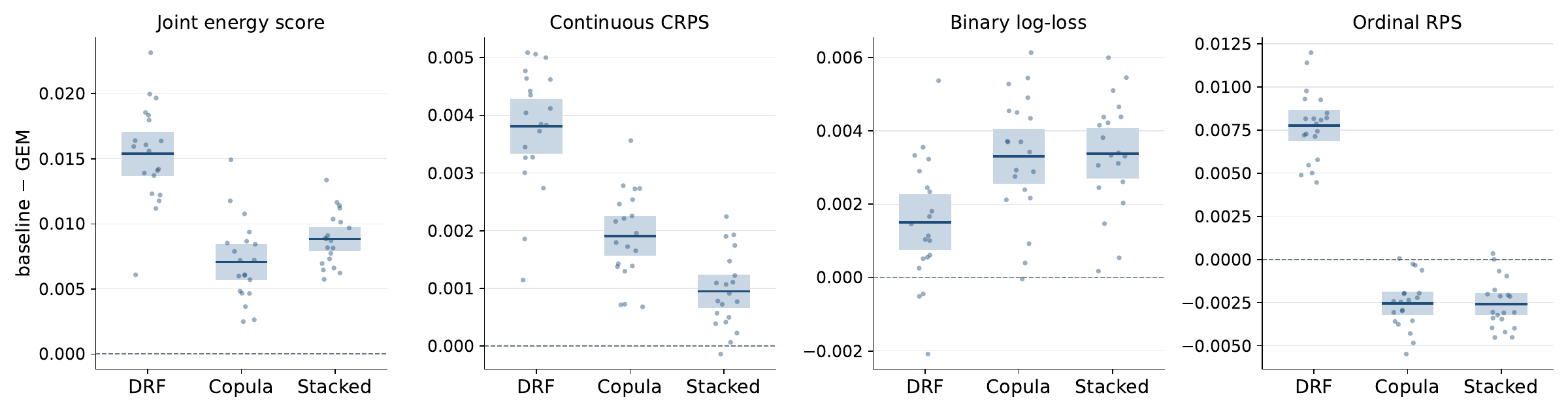}
\caption{Health examination survey data from NHANES: paired per-split differences, baseline minus GEM, for the joint energy score and the per-block marginal metrics over $20$ random splits. A positive value favors GEM. Each dot is one split; the line is the mean across splits and the band extends two standard errors on either side.}\label{fig:nhanes}
\end{figure}

Figure~\ref{fig:nhanes} shows, for each metric and each of the $20$ splits, the score of a baseline minus that of GEM on the same split, so that a positive difference favors GEM; Table~\ref{tab:nhanes_absolute} in Appendix~\ref{app:nhanes} gives the absolute values. On the joint energy score the difference is positive on every split against every baseline, and largest against the distributional random forest, so that on this criterion GEM fits the joint distribution best. This is not at the expense of the marginals. The continuous CRPS is lower on every split against DRF and the copula and on all but one against the stacked marginals, and the binary log-loss is lower on every split against the stacked marginals, on $19$ of the $20$ against the copula and on $17$ of the $20$ against DRF. Appendix~\ref{app:nhanes} breaks both comparisons down by outcome, with the same paired differences per variable. The ordering holds for every continuous variable and every binary label against the copula and the stacked marginals; against DRF, the only exception larger than two standard errors is HbA1c, where resampling observed outcome vectors reproduces a long right tail among diagnosed diabetics that GEM does not. Self-rated health is the one block where GEM does not lead but remains competitive: the ranked probability score averages $0.4980$ against $0.4954$ for the softmax-based baselines, a shortfall of half a percent that recurs on nearly every split. Computationally, GEM takes on average $62$ seconds on one GPU, against $55$ seconds for the stacked baseline and $244$ for DRF on CPU (Appendix~\ref{app:nhanes}).

\section{Discussion}\label{sec:discussion}

We have proposed generalized engression models. Their main features are as follows.
\begin{enumerate}[label=(\roman*)]%
\setlength{\itemsep}{2pt}
\item GEMs are distributional: by generating draws from it, they estimate the joint conditional distribution of a multivariate $Y$ given $X$, which carries the uncertainty of each coordinate as well as the dependence between them.
\item GEMs are nonparametric in both $X$ and $Y$: the covariates enter the model as in nonparametric regression, and the response distribution is not restricted to any family. 
\item GEMs apply to outcomes whose coordinates may be continuous, binary, categorical, ordinal or rankings, or a mixture of these.
\item GEMs are fitted in a unified way: the same loss function and the same algorithm serve every outcome type, whereas a GLM requires a likelihood specified for each family.
\end{enumerate}

GEMs share the practical features of engression \citep{shen2024engression}, both being lightweight generative models. They are fitted by minimizing a single loss evaluated on samples from the model, with no adversary and no explicit likelihood, which makes training stable and fast. The fitted model is a one-step sampler, from which means, quantiles, event probabilities and other functionals follow by Monte Carlo. For continuous outcomes the two methods coincide, but GEM has a much broader scope. Outcomes that are partly or entirely discrete, such as presence--absence records, diagnoses and preference rankings, are common in applications, and they can now be modeled in the same way as continuous ones.

We hope that GEMs will serve as a practical statistical tool for the multivariate discrete and mixed-type outcomes that arise in scientific applications. The two applications illustrate their potential. On the arctic vegetation survey, with $242$ binary presence--absence indicators, GEM matches or improves on the specialized joint species distribution models in predicting the joint distribution. On the health survey, whose outcome mixes continuous, binary and ordinal coordinates, a single GEM gives a better joint predictive distribution than stacked marginal models, a Gaussian copula and distributional random forests, with little or no loss of accuracy on the individual outcomes. 

Several aspects of the framework call for further work. Modeling the joint distribution comes at a small price on some individual margins: in our experiments, type-specific methods were occasionally more accurate on a single coordinate, as with a softmax classifier on a categorical outcome, and the Plackett--Luce model when it holds and the sample is small. This is to be expected, since the energy score is designed for the joint distribution and, for a categorical coordinate, reduces to a Brier score rather than the log-likelihood on which such classifiers are trained. One way to recover the lost efficiency is to add classical type-specific losses, such as marginal log-likelihoods, to the energy score. Because a sum of proper scoring rules is again proper, the combined objective still targets the true joint distribution; and because, given the latent noise, the class probabilities of the smoothed model are available in closed form for binary and ordinal coordinates, and for categorical ones under a Gumbel perturbation, the added terms can be estimated by Monte Carlo. How to weight the two parts, and how much is gained in practice, remain to be explored.

Engression was applied by \citet{shen2024engression} to extrapolation, that is, prediction outside the training support of the covariates. It remains open whether GEM can also extrapolate when the outcome is discrete. The question matters in practice, for instance when species communities are to be predicted under environmental conditions that do not occur in the training data.

On the theoretical side, the results of Section~\ref{sec:universality} concern representation rather than estimation: they show that the model class can express the target distribution, but not how quickly the fitted distribution approaches it. Fitting a GEM remains a score minimization problem of the standard kind, an instance of optimum score estimation \citep{gneiting2007strictly}, so that guarantees for the minimizer, such as consistency, can be obtained by classical techniques. Error bounds have recently been established for engression with continuous outcomes \citep{huang2026theoretical,chen2026error}. Nevertheless, meaningful finite-sample results of GEMs are highly involved, and we leave them for future work.

\section*{Acknowledgement}
The authors thank Nicolai Meinshausen for helpful feedback on the manuscript.
Xinwei Shen was supported by Royalty Research Fund grant GR067344. Francis Bach has received support from the French government, managed by the National Research Agency,
under the France 2030 program with the reference “PR[AI]RIE-PSAI” (ANR-23-IACL-0008).
Large language models (Claude, Anthropic; ChatGPT, OpenAI) were used to assist with code implementation, improve the language of the manuscript, search for references, and validate proofs; the authors reviewed and edited all output and take full responsibility for the content.

\bibliography{ref.bib}
\bibliographystyle{apalike2}

\clearpage

\appendix

\section{Proofs}
\label{app:proofs}

\subsection{Proof of Proposition~\ref{prop:grad}}\label{app:proof_grad}
Proposition~\ref{prop:grad} follows from the following lemma, applied to the Gaussian density.

\begin{Lemma}\label{lem:grad}
	Let $\eta$ have a continuously differentiable density $p$ on $\rb^k$ with
	\[
		p(\xi)+(1+\|\xi\|)\|\nabla p(\xi)\|\le C(1+\|\xi\|)^{-(k+1+\delta)}
	\]
	for some constants $C,\delta>0$, and let $f:\rb^k\to\rb$ be measurable with $|f(u)|\le C'(1+\|u\|)$ for a constant $C'$. Then $F(\mu,s)=\E f(\mu+s\odot\eta)$, the expectation over $\eta$ at fixed $(\mu,s)$, is continuously differentiable on $\rb^k\times(0,\infty)^k$, and for $j=1,\dots,k$
	\begin{equation}\label{eq:grad}
		\begin{aligned}
			\frac{\partial F}{\partial\mu_j}(\mu,s)&=-\frac{1}{s_j}\E\big[f(\mu+s\odot\eta)\,\partial_j\log p(\eta)\big],\\
			\frac{\partial F}{\partial s_j}(\mu,s)&=-\frac{1}{s_j}\E\big[f(\mu+s\odot\eta)\,(1+\eta_j\,\partial_j\log p(\eta))\big],
		\end{aligned}
	\end{equation}
	where the expectations are again over $\eta$, and $\partial_j\log p=\partial_jp/p$, set to zero where $p$ vanishes.
\end{Lemma}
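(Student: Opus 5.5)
The plan is to move the parameters out of $f$ and into the density by a change of variables, and then differentiate under the integral sign. For $s\in(0,\infty)^k$, substitute $u=\mu+s\odot\eta$:
\[
F(\mu,s)=\int_{\rb^k} f(u)\,q(u;\mu,s)\,du,\qquad q(u;\mu,s)=\Big(\prod_{l=1}^k s_l\Big)^{-1}p\big((u-\mu)\oslash s\big).
\]
Now $f$ appears only as a fixed measurable weight. The parameters $(\mu,s)$ enter only through $q$, which is continuously differentiable in $(\mu,s)$ because $p$ is $C^1$.

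The first step is to compute the partial derivatives of $q$ by direct calculation. Write $\xi=(u-\mu)\oslash s$. Then
\[
\partial_{\mu_j}q=-s_j^{-1}\Big(\prod_l s_l\Big)^{-1}\partial_jp(\xi),\qquad
\partial_{s_j}q=-s_j^{-1}\Big(\prod_l s_l\Big)^{-1}\big(p(\xi)+\xi_j\partial_jp(\xi)\big).
\]
If we formally integrate these against $f$ and substitute back $u=\mu+s\odot\eta$, we obtain \eqref{eq:grad}. The ratio $\partial_jp/p$ appears because $\partial_jp(\eta)=p(\eta)\,\partial_j\log p(\eta)$ wherever $p>0$. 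Where $p=0$ we also have $\nabla p=0$, since $p\ge0$ is $C^1$ and such points are minima. So the convention of setting $\partial_j\log p$ to zero there is consistent, and that set contributes nothing.

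The main obstacle is justifying differentiation under the integral and continuity of the resulting derivatives, because $f$ may grow linearly. For this I would fix a compact neighbourhood $K$ of $(\mu_0,s_0)$ with $s$ bounded below by some $\underline s>0$ and $\|\mu\|,\|s\|$ bounded above. I would then look for an integrable envelope $H(u)$ with $|f(u)|\,\|\nabla_{(\mu,s)}q(u;\mu,s)\|\le H(u)$ for all parameters in $K$.

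On $K$ there are constants $c_1,c_2>0$ with $1+\|\xi\|\ge c_1(1+\|u\|)-c_2$, so that $(1+\|\xi\|)$ and $(1+\|u\|)$ are comparable for large $\|u\|$. The decay hypothesis gives $|p(\xi)|+(1+\|\xi\|)\|\nabla p(\xi)\|\le C(1+\|\xi\|)^{-(k+1+\delta)}$. This controls both $|\partial_jp|$ and $|\xi_j\partial_jp|$, and the prefactors $s_j^{-1}\prod_l s_l^{-1}$ are bounded on $K$. Hence
\[
|f(u)|\,\|\nabla q\|\lesssim (1+\|u\|)(1+\|u\|)^{-(k+1+\delta)}=(1+\|u\|)^{-(k+\delta)},
\]
which is integrable on $\rb^k$. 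This explains the exponent $k+1+\delta$ in the hypothesis: it leaves one power to absorb the linear growth of $f$, and a further power of $(1+\|\xi\|)$ on $\nabla p$ to absorb the factor $\xi_j$ in the scale derivative.

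With this envelope, the mean value theorem and dominated convergence give differentiability with the formulas above. Dominated convergence with the same envelope, together with continuity of $\nabla q$ in the parameters for each fixed $u$, gives continuity of the derivatives, so $F$ is $C^1$ on $\rb^k\times(0,\infty)^k$.

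Finally, I would deduce Proposition~\ref{prop:grad}. For the Gaussian density, $\partial_j\log p(\eta)=-\eta_j$, and $p$ satisfies the decay condition trivially. Apply the lemma with $\eta$ and $\eta'$ jointly, as a $2k$-dimensional perturbation with parameters $(\mu,\mu')$ and a shared $s$. The loss $f$ is bounded for discrete links, and has linear growth through the norms for continuous blocks.

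The $\mu$-derivative of $\tfrac12\|Y-\hat Y\|-\tfrac12\|\hat Y-\hat Y'\|$ then gives $\tfrac12\E[c\eta]\oslash s$. The remaining term $\tfrac12\|Y-\hat Y'\|$ does not depend on $\eta$, so multiplying it by $\eta$ adds zero. This is why one may use $c$ in place of the plain loss. The scale derivative is the sum of the contributions from the two blocks of coordinates, yielding $\eta\odot\eta-\mathbf 1_k$ for each; constants are subtracted in the same zero-mean way.
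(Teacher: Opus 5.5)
Your proposal is correct and follows the paper's proof essentially step for step. You use the same change of variables $u=\mu+s\odot\xi$, the same explicit derivatives of the shifted and scaled density, the same integrable envelope of order $(1+\|u\|)^{-(k+\delta)}$ on compact parameter sets with $s$ bounded away from zero, and the same observation that $\nabla p=0$ where $p=0$. The paper uses the global bound $1+\|u\|\le(1+M+b)(1+\|\xi\|)$ where you use an asymptotic comparison, and it states explicitly that the same envelope bounds $|f|$ times the density so that $F$ is finite, which you leave implicit; neither difference affects correctness.
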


\begin{proof}[Proof of Lemma~\ref{lem:grad}]
The function $f$ need not be differentiable, so we cannot differentiate $f(\mu+s\odot\eta)$ in $(\mu,s)$. Instead, we move $(\mu,s)$ from the argument of $f$ into the density of $\mu+s\odot\eta$, which is differentiable.

\textit{Step 1: change of variables.} The substitution $u=\mu+s\odot\xi$, with $\xi=(u-\mu)\oslash s$ and $du=\prod_{l=1}^ks_l\,d\xi$, gives
\begin{equation*}
	F(\mu,s)=\int f(\mu+s\odot\xi)\,p(\xi)\,d\xi=\int f(u)\,p_{\mu,s}(u)\,du,\qquad p_{\mu,s}(u)=p\big((u-\mu)\oslash s\big)\prod_{l=1}^ks_l^{-1},
\end{equation*}
where $p_{\mu,s}$ is the density of $\mu+s\odot\eta$. In the second integral, $(\mu,s)$ enters only through $p_{\mu,s}$; for each fixed $u$, $f(u)$ is a constant, and no derivative of $f$ is needed.

\textit{Step 2: derivatives of the density.} Since $\xi_j=(u_j-\mu_j)/s_j$, we have $\partial\xi_j/\partial\mu_j=-s_j^{-1}$ and $\partial\xi_j/\partial s_j=-s_j^{-1}\xi_j$, while the other coordinates of $\xi$ do not depend on $\mu_j$ or $s_j$, and $\partial\big(\prod_ls_l^{-1}\big)/\partial s_j=-s_j^{-1}\prod_ls_l^{-1}$. The chain rule gives
\begin{equation*}
	\frac{\partial p_{\mu,s}(u)}{\partial\mu_j}=-\frac{1}{s_j}\,\partial_jp(\xi)\prod_{l=1}^ks_l^{-1},\qquad
	\frac{\partial p_{\mu,s}(u)}{\partial s_j}=-\frac{1}{s_j}\big(p(\xi)+\xi_j\,\partial_jp(\xi)\big)\prod_{l=1}^ks_l^{-1},
\end{equation*}
both continuous in $(\mu,s)$ because $p$ is continuously differentiable.

\textit{Step 3: differentiation under the integral sign.} Let $\mathcal K$ be a compact subset of $\rb^k\times(0,\infty)^k$, so that $\|\mu\|\le M$ and $a\le s_l\le b$ for all $l$ and all $(\mu,s)\in\mathcal K$, for some constants $M\ge0$ and $0<a\le b$. Since $\|u\|\le M+b\|\xi\|$, we have $1+\|u\|\le(1+M+b)(1+\|\xi\|)$. The condition on $p$ bounds both $\|\nabla p(\xi)\|$ and $p(\xi)+|\xi_j\,\partial_jp(\xi)|$ by $C(1+\|\xi\|)^{-(k+1+\delta)}$, so for all $(\mu,s)\in\mathcal K$ both derivatives in Step~2 are at most
\begin{equation*}
	\frac{C}{a^{k+1}}(1+\|\xi\|)^{-(k+1+\delta)}\le\frac{C(1+M+b)^{k+1+\delta}}{a^{k+1}}(1+\|u\|)^{-(k+1+\delta)}
\end{equation*}
in absolute value. Multiplied by $|f(u)|\le C'(1+\|u\|)$, they are bounded by a constant times $(1+\|u\|)^{-(k+\delta)}$, which does not depend on $(\mu,s)$ and is integrable over $\rb^k$ because $k+\delta>k$; a bound of the same form holds for $|f(u)|\,p_{\mu,s}(u)$, so $F$ is finite. By the mean value theorem and dominated convergence, $F$ can therefore be differentiated under the integral sign at every interior point of $\mathcal K$, and, as $\mathcal K$ is arbitrary, on all of $\rb^k\times(0,\infty)^k$:
\begin{equation*}
	\frac{\partial F}{\partial\mu_j}(\mu,s)=\int f(u)\,\frac{\partial p_{\mu,s}(u)}{\partial\mu_j}\,du,\qquad \frac{\partial F}{\partial s_j}(\mu,s)=\int f(u)\,\frac{\partial p_{\mu,s}(u)}{\partial s_j}\,du.
\end{equation*}
These derivatives are continuous in $(\mu,s)$, again by dominated convergence, because the integrands are continuous in $(\mu,s)$ for each $u$ and share the same integrable bound. Hence $F$ is continuously differentiable.

\textit{Step 4: back to expectations over $\eta$.} Substituting $u=\mu+s\odot\xi$ back, the factor $du=\prod_ls_l\,d\xi$ cancels $\prod_ls_l^{-1}$, and
\begin{equation*}
	\frac{\partial F}{\partial\mu_j}(\mu,s)=-\frac{1}{s_j}\int f(\mu+s\odot\xi)\,\partial_jp(\xi)\,d\xi,\qquad
	\frac{\partial F}{\partial s_j}(\mu,s)=-\frac{1}{s_j}\int f(\mu+s\odot\xi)\big(p(\xi)+\xi_j\,\partial_jp(\xi)\big)\,d\xi.
\end{equation*}
Where $p(\xi)=0$, the point $\xi$ minimizes the nonnegative, continuously differentiable function $p$, so $\nabla p(\xi)=0$. Hence $\partial_jp=p\,\partial_j\log p$ everywhere, with $\partial_j\log p$ set to zero where $p$ vanishes, and $p+\xi_j\partial_jp=p\,(1+\xi_j\partial_j\log p)$. The two integrals are therefore the expectations over $\eta\sim p$ in \eqref{eq:grad}.
\end{proof}

\begin{proof}[Proof of Proposition~\ref{prop:grad}]
Fix $(X,Y,\varepsilon,\varepsilon')$, so that $Y$, $\mu$, $\mu'$ and $s$ are fixed. Given $(X,Y)$, the pairs $(\varepsilon,\eta)$ and $(\varepsilon',\eta')$ are independent with the same distribution, so $\E\|Y-\hat Y\|=\E\|Y-\hat Y'\|$: the integrand of \eqref{eq:gem_objective} and $\frac12\|Y-\hat Y\|+\frac12\|Y-\hat Y'\|-\frac12\|\hat Y-\hat Y'\|$ have the same expectation, and $\mathcal L=\E\,\bar\ell(\mu,\mu',s)$.

We apply Lemma~\ref{lem:grad} in dimension $2k$ to the two perturbations together. Let $\tilde\eta=(\eta,\eta')\sim\mathcal N(0,I_{2k})$, $\tilde\mu=(\mu,\mu')$ and $\tilde s=(s,s)$, and for $u,u'\in\rb^k$ let
\begin{equation*}
	\tilde f(u,u')=\tfrac12\|Y-h(u)\|+\tfrac12\|Y-h(u')\|-\tfrac12\|h(u)-h(u')\|,
\end{equation*}
so that $\bar\ell(\mu,\mu',s)=\E\,\tilde f(\tilde\mu+\tilde s\odot\tilde\eta)$, which is the function $F$ of the lemma with $\tilde f$ and $\tilde\eta$ in place of $f$ and $\eta$. Every link of Table~\ref{tab:links} satisfies $\|h(u)\|\le C_h(1+\|u\|)$ for a constant $C_h$, the identity trivially and the other links because their values lie in a bounded set, so $\tilde f$ is measurable with $|\tilde f(u,u')|\le\|Y\|+\|h(u)\|+\|h(u')\|\le C'(1+\|(u,u')\|)$ for a constant $C'$ depending on $Y$. The Gaussian density satisfies the condition of the lemma for every $\delta>0$, and $\partial_j\log p(\xi)=-\xi_j$. Hence $F(\tilde\mu,\tilde s)$ is continuously differentiable on $\rb^{2k}\times(0,\infty)^{2k}$, and so is $\bar\ell(\mu,\mu',s)=F((\mu,\mu'),(s,s))$, with, by \eqref{eq:grad} and the chain rule,
\begin{align*}
	\frac{\partial\bar\ell}{\partial\mu_j}&=\frac{1}{s_j}\E\big[\tilde f(\tilde\mu+\tilde s\odot\tilde\eta)\,\eta_j\big],\qquad
	\frac{\partial\bar\ell}{\partial\mu'_j}=\frac{1}{s_j}\E\big[\tilde f(\tilde\mu+\tilde s\odot\tilde\eta)\,\eta'_j\big],\\
	\frac{\partial\bar\ell}{\partial s_j}&=\frac{1}{s_j}\E\big[\tilde f(\tilde\mu+\tilde s\odot\tilde\eta)\,(\eta_j^2-1)\big]+\frac{1}{s_j}\E\big[\tilde f(\tilde\mu+\tilde s\odot\tilde\eta)\,((\eta'_j)^2-1)\big],
\end{align*}
the derivative in $s_j$ collecting the two coordinates of $\tilde s$ that equal $s_j$. Finally, $\tilde f(\tilde\mu+\tilde s\odot\tilde\eta)=\frac12c+\frac12\|Y-\hat Y'\|$, where $\|Y-\hat Y'\|$ does not depend on $\eta$, and $\eta$ is independent of $\eta'$ with $\E[\eta_j]=\E[\eta_j^2-1]=0$, so that $\E[\|Y-\hat Y'\|\,\eta_j]=\E[\|Y-\hat Y'\|\,(\eta_j^2-1)]=0$; in the same way $\tilde f(\tilde\mu+\tilde s\odot\tilde\eta)=\frac12c'+\frac12\|Y-\hat Y\|$ with $\|Y-\hat Y\|$ independent of $\eta'$. Substituting into the three derivatives gives $G$, $G'$ and $S$ as stated.
\end{proof}

\subsection{Proof of Proposition~\ref{prop:noise_outsourcing_h}}
For part (i), define $Z=\bar h(Y)$, so that $Y=h(Z)$ a.s. By Lemma~\ref{lem:noise_outsourcing} applied to $(X,Z)$, there exist $\varepsilon\sim\mathcal N(0,1)$, independent of $X$, and a measurable map $g:\X\times\rb\to\rb^k$ such that $(X,Z)\overset{d}{=}(X,g(X,\varepsilon))$. Applying the measurable map $(x,z)\mapsto(x,h(z))$ to both sides gives $(X,Y)\overset{d}{=}(X,h(g(X,\varepsilon)))$.

For general standard Borel $\Y$, let $P_Y$ denote the law of $Y$. By \citet[Theorem~3.1]{arveson2008noncommutative}, since $h$ is surjective, there exist a Borel set $N\subseteq\Y$ with $P_Y(N)=0$ and a Borel map $\tilde h:\Y\setminus N\to\rb^k$ such that $h(\tilde h(y))=y$ for all $y\in\Y\setminus N$. Fix $z_\ast\in\rb^k$ and define
\[
	\bar h(y)=
	\begin{cases}
		\tilde h(y), & y\notin N,\\
		z_\ast, & y\in N.
	\end{cases}
\]
Since $N$ is Borel, $\bar h$ is Borel measurable, and $h(\bar h(Y))=Y$ almost surely. The claim then follows from part (i).

\subsection{Proof of Lemma~\ref{lem:continuity}}
Condition~\ref{cond:regular} implies that the conditional density $f_j(s\mid x,y_{<j})$ is strictly positive and continuous in all its arguments. Tonelli's theorem and the definition of the sequential marginals give
\[
	\int_{\mathbb{R}}f_j(s\mid x,y_{<j})\,ds=1.
\]
Write $\theta=(x,y_{<j})$ and consider $(\theta_n,t_n)\to(\theta,t)$. Then
\begin{equation}
	\begin{aligned}
		|F_j(t_n\mid\theta_n)-F_j(t\mid\theta)|\leq\int_{-\infty}^{\infty}|f_j(s\mid\theta_n)-f_j(s\mid\theta)|\,ds+\int_{\min\{t,t_n\}}^{\max\{t,t_n\}}f_j(s\mid\theta)\,ds.
	\end{aligned}
	\label{eq:cdf_continuity_bound}
\end{equation}
For each fixed $s$, continuity gives $f_j(s\mid\theta_n)\to f_j(s\mid\theta)$. Since all these functions integrate to one, we have
\[
	\int_{-\infty}^{\infty}|f_j(s\mid\theta_n)-f_j(s\mid\theta)|\,ds\to0.
\]
The second term in \eqref{eq:cdf_continuity_bound} converges to zero by the absolute continuity of the Lebesgue integral, because $f_j(\cdot\mid\theta)\in L^1(\mathbb{R})$ and $t_n\to t$. This proves the joint continuity of $F_j$. Since $f_j(s\mid\theta)>0$ for every $s$, $F_j(\cdot\mid\theta)$ is strictly increasing.

We next establish the continuity of $Q_j$. For each $\theta$, $F_j(t\mid\theta)\to0$ as $t\to-\infty$ and $F_j(t\mid\theta)\to1$ as $t\to\infty$. Its continuity and strict monotonicity therefore imply that, for every $\tau\in(0,1)$, $q=Q_j(\tau\mid\theta)$ is finite, unique, and satisfies $F_j(q\mid\theta)=\tau$. Consider $(\theta_n,\tau_n)\to(\theta,\tau)$ with $\tau\in(0,1)$. For any $\delta>0$, strict monotonicity gives
\[
	F_j(q-\delta\mid\theta)<\tau<F_j(q+\delta\mid\theta).
\]
By the joint continuity of $F_j$ and $\tau_n\to\tau$, for all sufficiently large $n$,
\[
	F_j(q-\delta\mid\theta_n)<\tau_n<F_j(q+\delta\mid\theta_n).
\]
Strict monotonicity then implies
\[
	q-\delta<Q_j(\tau_n\mid\theta_n)<q+\delta.
\]
Hence $Q_j(\tau_n\mid\theta_n)\to Q_j(\tau\mid\theta)$, proving the claimed joint continuity.

\subsection{Proof of Theorem~\ref{thm:continuous}}
In the following, we conduct the coordinate-by-coordinate construction of $(y_1,\dots,y_k)$. Let $\Phi$ denote the standard normal CDF and define $U_j=\Phi(\varepsilon_j)$ for $1\leq j\leq k$. Then $U_1,\ldots,U_k$ are independent $\operatorname{Unif}(0,1)$ random variables and are independent of $X$.

For $j=1$, generate the first coordinate as
\[
	\widetilde Y_1=Q_1(U_1\mid X)\mathrel{=:}T_1(U_1,X).
\]
Lemma~\ref{lem:continuity} implies that $T_1$ is continuous in both $U_1$ and $X$. For $2\leq j\leq k$, conditional on $X$ and the previously generated coordinates $\widetilde Y_{<j}$, generate the $j$th coordinate through the conditional quantile function
\[
	\widetilde Y_j=Q_j(U_j\mid X,\widetilde Y_{<j})\mathrel{=:}T_j(U_1,\ldots,U_j,X).
\]
Because $U_j$ is independent of $(X,U_{<j})$, it is also independent of $(X,\widetilde Y_{<j})$. Consequently, conditional on $(X,\widetilde Y_{<j})$, the random variable $\widetilde Y_j$ has CDF $F_j(\cdot\mid X,\widetilde Y_{<j})$. Lemma~\ref{lem:continuity} and induction on $j$ show that $T_j$ is continuous in all its arguments.

Define
\[
	\begin{aligned}
		g(x,\varepsilon)=\big(&T_1\{\Phi(\varepsilon_1),x\},T_2\{\Phi(\varepsilon_1),\Phi(\varepsilon_2),x\},\ldots,T_k\{\Phi(\varepsilon_1),\ldots,\Phi(\varepsilon_k),x\}\big).
	\end{aligned}
\]
The continuity of $\Phi$ and of $T_1,\ldots,T_k$ implies that $g$ is continuous on $\X\times\mathbb{R}^k$. Moreover, the sequential conditional distributions of $\widetilde Y=g(X,\varepsilon)$ coincide with those of $Y$ given $X$. Induction over $j$, or equivalently the conditional-density factorization, therefore gives
\[
	\left(X,g(X,\varepsilon)\right)\overset{d}{=}(X,Y),
\]
which proves Theorem~\ref{thm:continuous}. In the Euclidean case, the link $h$ is simply the identity.

\subsection{Proof of Theorem~\ref{thm:lift_criterion}}
Let $(X,Z)$ have the lift density $a$. Apply Theorem~\ref{thm:continuous} with $Z$ as the continuous outcome. It produces a continuous $g$ such that
\[
	(X,g(X,\varepsilon))\overset{d}{=}(X,Z).
\]
Applying the measurable map $(x,z)\mapsto(x,h(z))$ to both sides gives
\[
	(X,h(g(X,\varepsilon)))\overset{d}{=}(X,h(Z))\overset{d}{=}(X,Y),
\]
where the last equality is the defining property of the lift. No continuity of $h$ is used.

\subsection{Proof of Theorem~\ref{thm:discrete_continuous}}
We construct a regular positive latent lift through $h$ and then apply Theorem~\ref{thm:lift_criterion}. Recall that $A_y=h^{-1}(\{y\})$. Let $\phi_k$ denote the density of $\mathcal N(0,I_k)$ and define
\[
	\alpha_y=\int_{A_y}\phi_k(z)\,dz.
\]
Because $A_y$ has nonempty interior, $\alpha_y>0$. Moreover, the measurable cells $\{A_y:y\in\Y\}$ partition $\mathbb{R}^k$, and hence
\begin{equation}
	\sum_{y\in\Y}\alpha_y=1.
	\label{eq:discrete_alpha_sum}
\end{equation}

Define
\begin{equation}
	\rho(x)=\frac12\min_{v\in\Y}\frac{p_v(x)}{\alpha_v},\qquad\omega_y(x)=p_y(x)-\rho(x)\alpha_y.
	\label{eq:discrete_rho_omega}
\end{equation}
Since $\Y$ is finite and every $p_y$ is continuous and strictly positive, $\rho$ is continuous and strictly positive. Furthermore,
\[
	\omega_y(x)\geq\frac12p_y(x)>0,
\]
so every $\omega_y$ is continuous and strictly positive.

For each $y\in\Y$, choose $c_y\in\mathbb{R}^k$ and $\delta_y>0$ such that
\[
	\prod_{j=1}^k[c_{y,j}-\delta_y,c_{y,j}+\delta_y]\subset A_y.
\]
Let $s(t)=(1-|t|)_+$ and define
\[
	b_y(z)=\frac{1}{\delta_y^k}\prod_{j=1}^k s\left(\frac{z_j-c_{y,j}}{\delta_y}\right).
\]
Since $\int_{\mathbb{R}}s(t)\,dt=1$, $b_y$ is a continuous probability density supported inside $A_y$. The cells $A_y$ are disjoint, and therefore
\begin{equation}
	\int_{A_y}b_v(z)\,dz=\mathbf{1}\{y=v\}.
	\label{eq:discrete_cell_indicator}
\end{equation}

Define
\[
	a(x,z)=\phi_k(z)\rho(x)+\sum_{v\in\Y}b_v(z)\omega_v(x).
\]
The function $a$ is continuous and is strictly positive because $\phi_k(z)\rho(x)>0$. By \eqref{eq:discrete_cell_indicator} and \eqref{eq:discrete_rho_omega}, for every $y\in\Y$,
\begin{equation}
	\begin{aligned}
		\int_{A_y}a(x,z)\,dz=\rho(x)\alpha_y+\omega_y(x)=p_y(x).
	\end{aligned}
	\label{eq:discrete_cell_calibration}
\end{equation}

We next verify the marginal regularity required by Condition~\ref{cond:regular}. For $0\leq j\leq k$, let $\phi_{\leq j}$ and $b_{v,\leq j}$ denote the marginals of $\phi_k$ and $b_v$ over their first $j$ coordinates, with both quantities defined to be one when $j=0$. Direct integration gives
\begin{equation}
	a_{\leq j}(x,z_{\leq j})=\rho(x)\phi_{\leq j}(z_{\leq j})+\sum_{v\in\Y}\omega_v(x)b_{v,\leq j}(z_{\leq j}).
	\label{eq:discrete_sequential_marginals}
\end{equation}
Every marginal in \eqref{eq:discrete_sequential_marginals} is finite and continuous. When $j=0$, \eqref{eq:discrete_alpha_sum} gives
\begin{equation}
	\int_{\mathbb{R}^k}a(x,z)\,dz=\rho(x)+\sum_{v\in\Y}\omega_v(x)=\sum_{v\in\Y}p_v(x).
	\label{eq:discrete_normalization}
\end{equation}
Integrating \eqref{eq:discrete_normalization} with respect to $\nu(dx)$ shows that $a$ has total mass one. Hence $a$ satisfies Condition~\ref{cond:regular}, with $z$ in place of $y$. Let $(X,Z)$ have joint density $a$.

Finally, for every measurable $B\subseteq\X$ and every $y\in\Y$, \eqref{eq:discrete_cell_calibration} yields
\[
	\begin{aligned}
		\mathbb{P}\{X\in B,h(Z)=y\}&=\int_B\int_{A_y}a(x,z)\,dz\,d\nu(x)\\
		&=\int_Bp_y(x)\,d\nu(x)=\mathbb{P}(X\in B,Y=y).
	\end{aligned}
\]
Thus $a$ is a regular positive latent lift of $(X,Y)$ through $h$. Applying Theorem~\ref{thm:lift_criterion} gives a continuous $g$ with $\left(X,h(g(X,\varepsilon))\right)\overset{d}{=}(X,Y)$, which proves Theorem~\ref{thm:discrete_continuous}.

\subsection{Proof of Theorem~\ref{thm:mixed}}
Take independent Gaussian vectors $\varepsilon_c\sim\mathcal N(0,I_{k_1})$ and $\varepsilon_d\sim\mathcal N(0,I_{k_2})$, independent of $(X,W,D)$.

\smallskip\noindent\textit{Step 1: generate the continuous outcome.} By the first part of Condition~\ref{cond:mixed}, we apply Theorem~\ref{thm:continuous} and establish that there exists a continuous map $g_c:\X\times\mathbb{R}^{k_1}\to\mathbb{R}^{k_1}$ such that $\varepsilon_c\sim\mathcal N(0,I_{k_1})$ is independent of $X$ and
\[
	\widetilde W=g_c(X,\varepsilon_c),\qquad(X,\widetilde W)\overset d=(X,W).
\]

\smallskip\noindent\textit{Step 2: construct a generator for the discrete outcome.} Apply Theorem~\ref{thm:discrete_continuous} with $(X,W)$ as the covariate and $D$ as the response. The enlarged covariate space is $\X\times\mathbb{R}^{k_1}$, equipped with the product topology and reference measure $\nu(dx)\,dw$. Its joint class subdensities are $p_d(x,w)$. Thus, the last two parts of Condition~\ref{cond:mixed} verify Condition~\ref{cond:interior}.

The discrete representation theorem therefore provides a continuous map $g_d:\X\times\mathbb{R}^{k_1}\times\mathbb{R}^{k_2}\to\mathbb{R}^{k_2}$ such that
\[
	\left(X,W,h_D(g_d(X,W,\varepsilon_d))\right)\overset d=(X,W,D).
\]

\smallskip\noindent\textit{Step 3: combine the two generators.} Use the generated continuous outcome as an input to $g_d$:
\[
	\widetilde Z=g_d(X,\widetilde W,\varepsilon_d),\qquad\widetilde D=h_D(\widetilde Z).
\]
Define the combined generator by
\[
	g(x,e_c,e_d)=\left(g_c(x,e_c),g_d(x,g_c(x,e_c),e_d)\right).
\]
This map is continuous because $g_c$ and $g_d$ are continuous. For $\varepsilon=(\varepsilon_c,\varepsilon_d)$, recalling $\widetilde W=g_c(X,\varepsilon_c)$, we have
\[
	h(g(X,\varepsilon))=\left(\widetilde W,h_D(g_d(X,\widetilde W,\varepsilon_d))\right).
\]
Thus, the combined generator first produces $\widetilde W$ and then uses it as an input to the discrete generator.

We now verify that this procedure reproduces the desired joint distribution. Step 1 gives $(X,\widetilde W)\overset d=(X,W)$. Since $\varepsilon_d$ is independent of both pairs,
\[
	(X,\widetilde W,\varepsilon_d)\overset d=(X,W,\varepsilon_d).
\]
Consequently, evaluating the same function $h_D\circ g_d$ at these inputs gives
\[
	\begin{aligned}
		\left(X,h(g(X,\varepsilon))\right)&=\left(X,\left(\widetilde W,h_D(g_d(X,\widetilde W,\varepsilon_d))\right)\right)\\
		&\overset d=\left(X,\left(W,h_D(g_d(X,W,\varepsilon_d))\right)\right)\overset d=(X,(W,D))=(X,Y).
	\end{aligned}
\]
The first equality in distribution follows from the equality of the augmented input laws above; the second is the discrete representation established in Step 2. Finally, $\varepsilon\sim\mathcal N(0,I_k)$ is independent of $X$, which proves the required representation.

\section{Settings of GEM and details for the simulation study}\label{app:simulation}

\subsection{Implementation of GEM}\label{app:gem_settings}

\begin{table}[!h]
	\centering
	\caption{Settings of GEM in the five experiments; the common settings are in the text. $d$ and $k$ are the dimensions of the covariates and of the coded outcome.}
	\label{tab:gem_settings}
	\footnotesize\setlength{\tabcolsep}{3pt}
	\begin{tabular}{@{}llllll@{}}
		\toprule
		 & multi-label & mixed-type & ranking & species & NHANES\\
		\midrule
		links & sign & identity, sign, argmax & ranking & sign & identity, sign, ordinal\\
		$d$, $k$ & $5$, $10$ & $5$, $31$ & $5$, $4$ & $4$, $242$ & $29$, $13$\\
		\#layers of $g$ & $4$ & $4$ & $4$ & $4$ & $4$\\
		width of $g$ & $512$ & $512$ & $512$ & $256$ & $512$\\
		scale $\sigma$ & common & common & common & common & coordinate-wise\\
		refinement (Section~\ref{sec:algorithm}) & none & pathwise, $g$ & none & control variate & pathwise, $g$ and $\sigma$\\
		learning rate & $10^{-4}$ & $10^{-4}$ & $10^{-4}$ & $10^{-3}$, cosine decay & $10^{-4}$\\
		iterations & $5{,}000$ & $5{,}000$ & $5{,}000$ & $4{,}000$ & $5{,}000$\\
		validation sample & $5{,}000$ & $10{,}000$ & $5{,}000$ & none, last iterate & $5{,}596$\\
		\bottomrule
	\end{tabular}
\end{table}

In every experiment of Sections~\ref{sec:simulation} and~\ref{sec:applications}, GEM is fitted by Algorithm~\ref{alg:gem} with the settings of Table~\ref{tab:gem_settings} and the following common choices. The generator $g$ is the network \texttt{StoNet} of the \texttt{engression} package \citep{shen2024engression}, with the number of layers and the width given in Table~\ref{tab:gem_settings}. As in the package, the number of layers counts the hidden layers and the linear output layer, so it is the number of hidden layers plus one. The $4$ layers of $g$ are three hidden layers of ReLU units followed by the output layer. A fresh draw of $\mathcal N(0,I_k)$ is concatenated to the input of each hidden layer. The scale $\sigma(x)$ is the softplus of a ReLU network of $2$ layers and width $100$. Neither network uses normalization, residual connections, dropout or weight decay. The outcome is coded as in Table~\ref{tab:links}, and the objective \eqref{eq:gem_objective}, with the Euclidean norm of the coded vector, is estimated with two draws per observation on the whole training sample at every iteration and minimized by Adam. Except on the species data, the energy score on the validation sample of Table~\ref{tab:gem_settings}, estimated with one pair of draws per point, is computed every $50$ iterations and after the last, and the parameters with the lowest value are kept. Apart from this choice of the iterate, GEM is not tuned in any simulation or on the NHANES data, and it uses these default hyperparameters throughout, at every sample size and in every setting. All methods, the oracles included, are scored from their predictive draws. Figures~\ref{fig:sim62_metrics}, \ref{fig:sim63_mixed}, \ref{fig:sim64_metrics} and~\ref{fig:sim62_small_n} show the mean and a band of one standard error over the $50$ replications. GEM was fitted with Python 3.13 and PyTorch 2.12 on a server with four NVIDIA RTX 4090 GPUs and two $16$-core AMD EPYC 7313 processors, with two to five fits sharing a GPU. All times are wall-clock times. In the multi-label and ranking simulations, a fit took about $20$ to $95$ seconds, depending on $n$.

\subsection{Multi-label outcome}\label{app:multilabel}

\paragraph{Design.} The coefficients $\beta_0,\dots,\beta_{10}$ have independent $\mathcal N(0,0.6^2/5)$ entries and $\gamma_0,\dots,\gamma_{10}$ independent $\mathcal N(0,0.5^2/5)$ entries; like all the coefficients of the simulations, they are drawn once and held fixed across settings, sample sizes and replications. The constant $c=\{(1-e^{-8\|\gamma_0\|^2})/(4\|\gamma_0\|^4)\}^{1/2}=1.28$ gives the quadratic term of $f_0$ the variance of $\sin(2\gamma_0^\top X)$. While $\Var(Z_j\mid X)=1$ at every $\rho$, the variance of $f_j^\rho(X)$, averaged over the labels, rises from $0.59$ at $\rho=0.1$ to $0.91$ at $\rho=1$. Each of the $50$ replications has a training sample of each size, a validation sample of $5{,}000$ and a test sample of $2{,}000$, with the same random seeds at every $\rho$ and $n$.

\paragraph{Baselines.} Linear logistic is one logistic regression per label on standardized covariates, with the default ridge penalty ($C=1$) of \texttt{scikit-learn} 1.9. NN logistic is one network $5\to256\to256\to256\to10$ with ReLU activations and sigmoid outputs, trained by the binary cross-entropy with Adam at learning rate $10^{-4}$ on the full batch for $5{,}000$ iterations, keeping the parameters with the lowest validation cross-entropy, computed every $50$ iterations and after the last. Both draw the labels independently from their fitted probabilities, and the oracle draws from the true mechanism.

\paragraph{Metrics.} The joint energy score, the marginal CRPS and the cross-entropy use $M=500$ draws at each of the $2{,}000$ test points, in the $\pm1$ coding. The energy score is $M^{-1}\sum_{m=1}^M\|\hat Y^{(m)}-y\|$ minus half the mean of $\|\hat Y^{(j)}-\hat Y^{(l)}\|$ over $250$ disjoint random pairs of draws. The CRPS is the same quantity for each label, averaged over the labels. The cross-entropy uses the smoothed frequency $(M_j+1)/(M+2)$ of the $M_j$ draws with $\hat Y_j=1$, and is averaged over the labels. The co-occurrence error is computed at the first $300$ of the $2{,}000$ test points from $2{,}000$ draws of each method, and $\P(Y_j=Y_l\mid X=x)$ is estimated from $2{,}000$ draws of the true mechanism. Since both are estimated from finitely many draws, the oracle's error is not zero but $0.008$ to $0.012$; it is exactly $0$ at $\rho=1$, where all labels are equal. Figure~\ref{fig:sim62_counts} uses the fits at $n=10{,}000$ of the first replication, at the first point $x_0$ of its test sample, with $4{,}000$ draws of each method and of the true mechanism.

\paragraph{Results at $n=100$ and $n=1{,}000$.} Figure~\ref{fig:sim62_small_n} shows the four metrics at the two smaller sample sizes. At $n=1{,}000$ the results are similar to those in Figure~\ref{fig:sim62_metrics}, with every fitted method further from the oracle. GEM is slightly above NN logistic on the marginal metrics and has the lowest joint energy score for $\rho\ge0.3$ and the lowest co-occurrence error for $\rho\ge0.2$. At $n=100$ all fitted methods are far from the oracle on the marginal metrics, with GEM slightly the worst of them at small $\rho$. GEM has the lowest joint energy score for $\rho\ge0.5$ and the lowest co-occurrence error for $\rho\ge0.3$. At $\rho=1$ both are much lower than for the baselines ($1.35$ against $1.84$ for NN logistic and $1.91$ for linear logistic, and $0.08$ against $0.39$ and $0.40$).

\begin{figure}
	\centering
	\begin{tabular}{@{}c@{}c@{}c@{}c@{}}
		\includegraphics[width=0.24\textwidth]{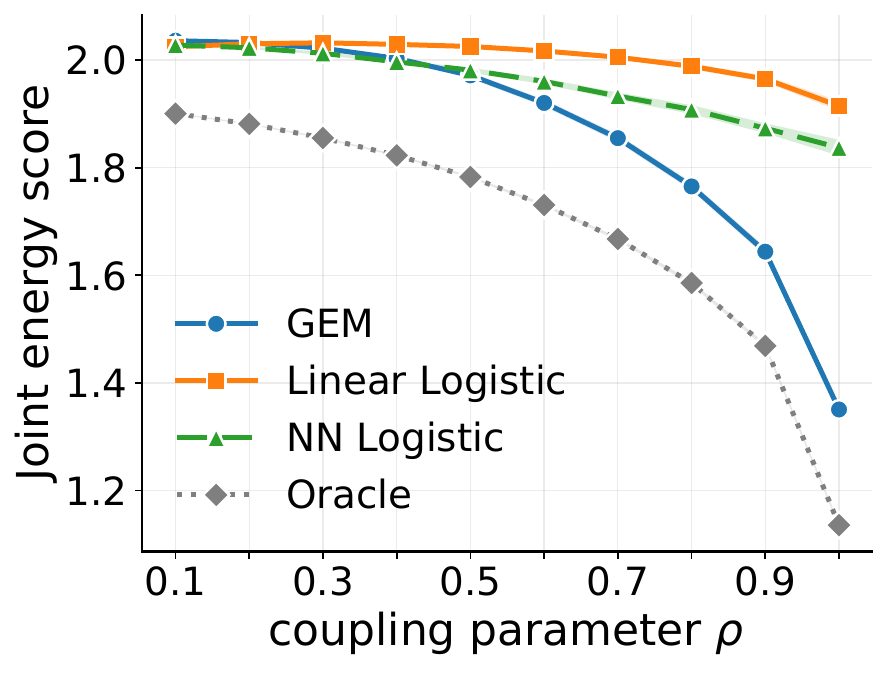} &
		\includegraphics[width=0.24\textwidth]{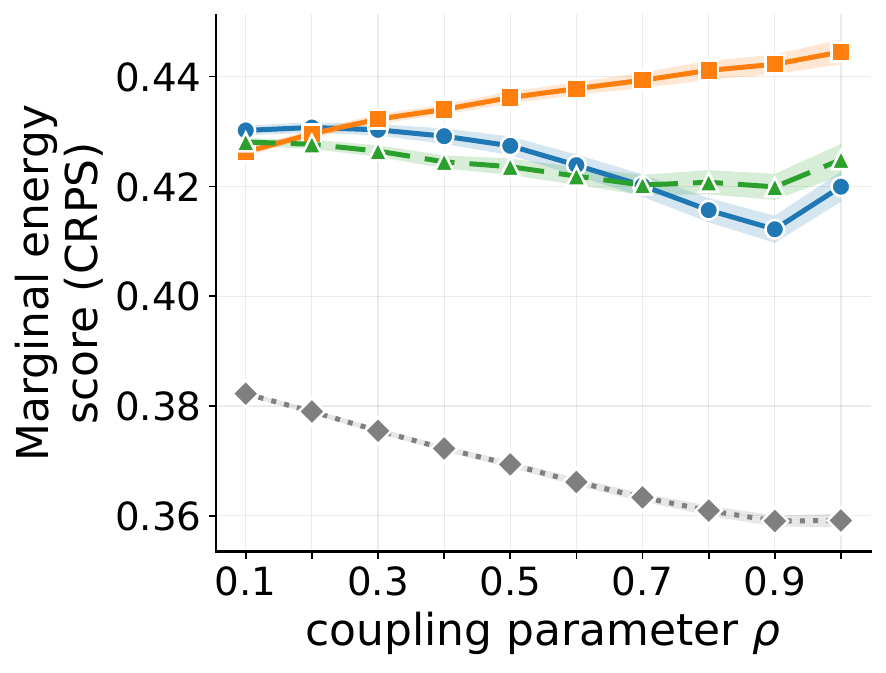} &
		\includegraphics[width=0.24\textwidth]{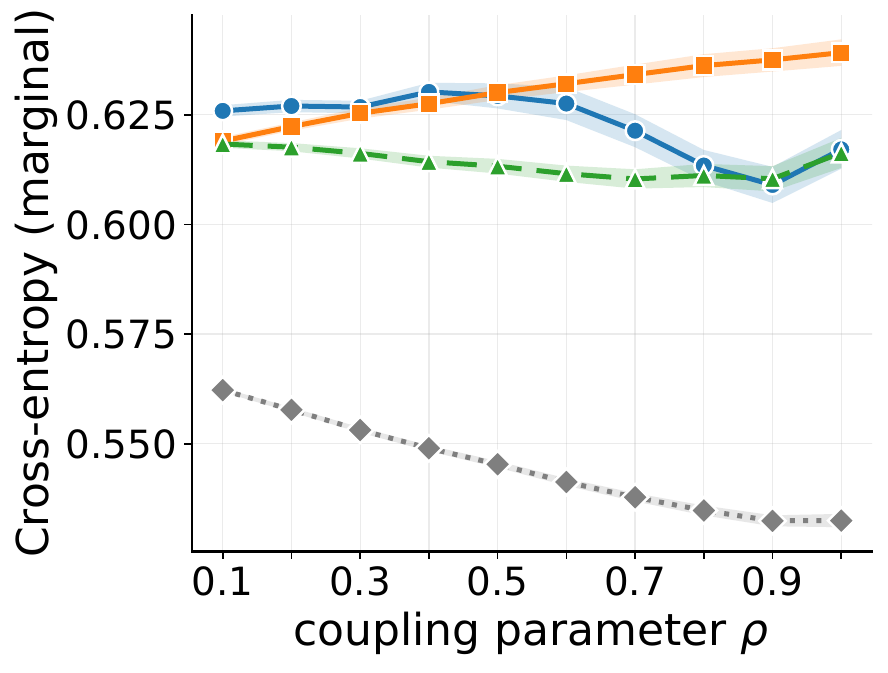} &
		\includegraphics[width=0.24\textwidth]{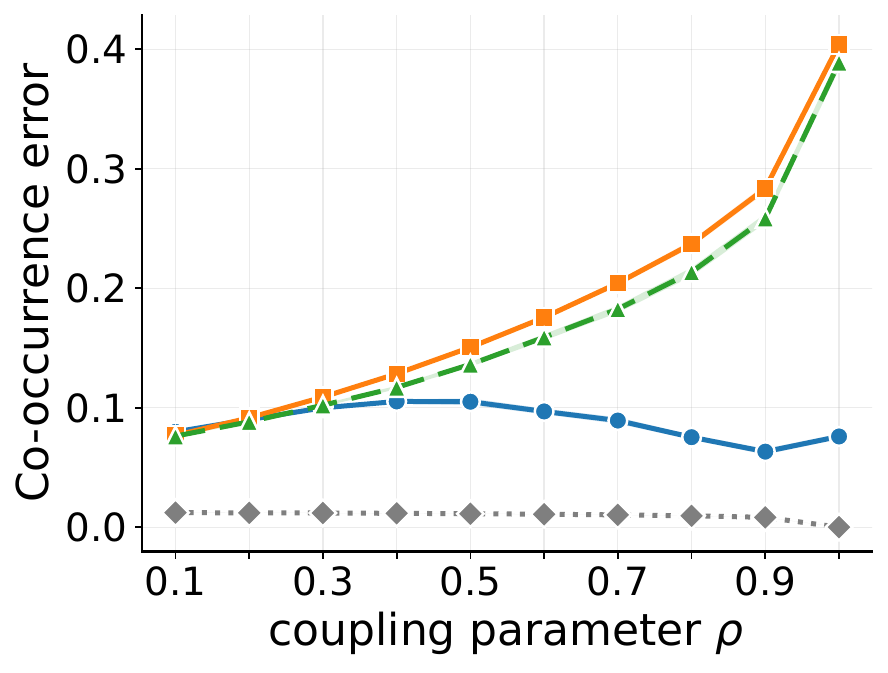}\\
		\includegraphics[width=0.24\textwidth]{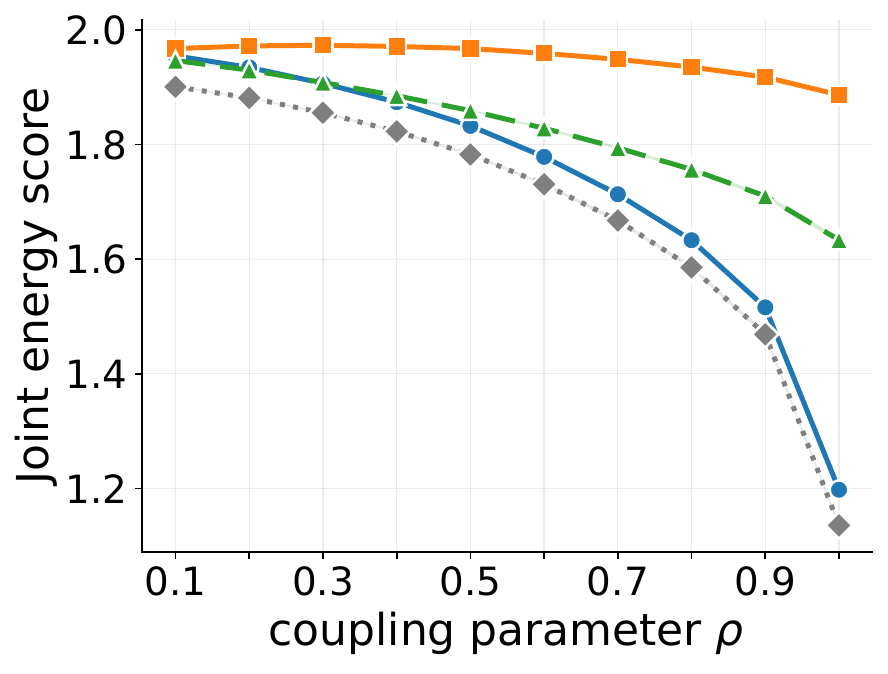} &
		\includegraphics[width=0.24\textwidth]{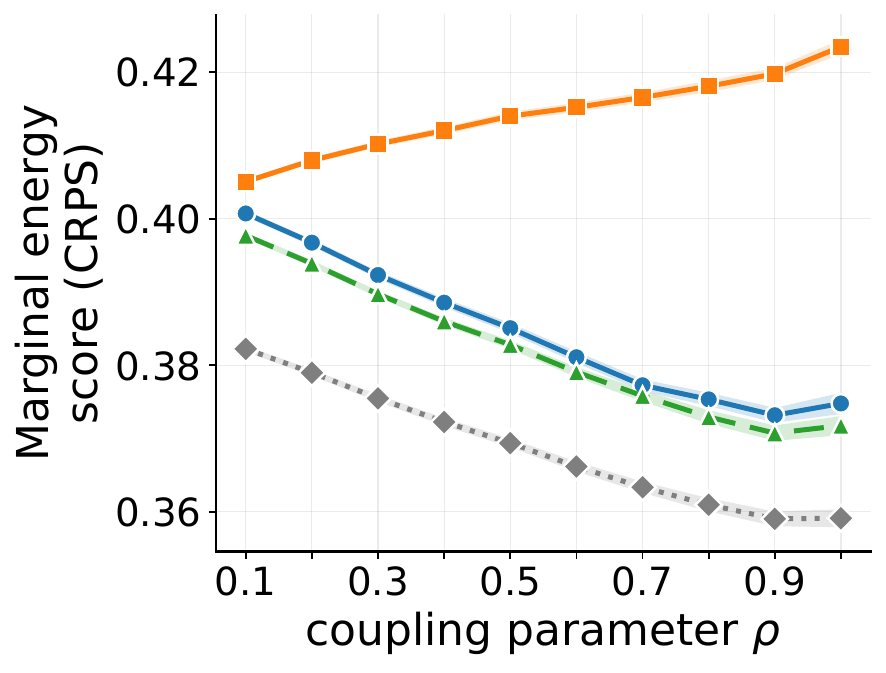} &
		\includegraphics[width=0.24\textwidth]{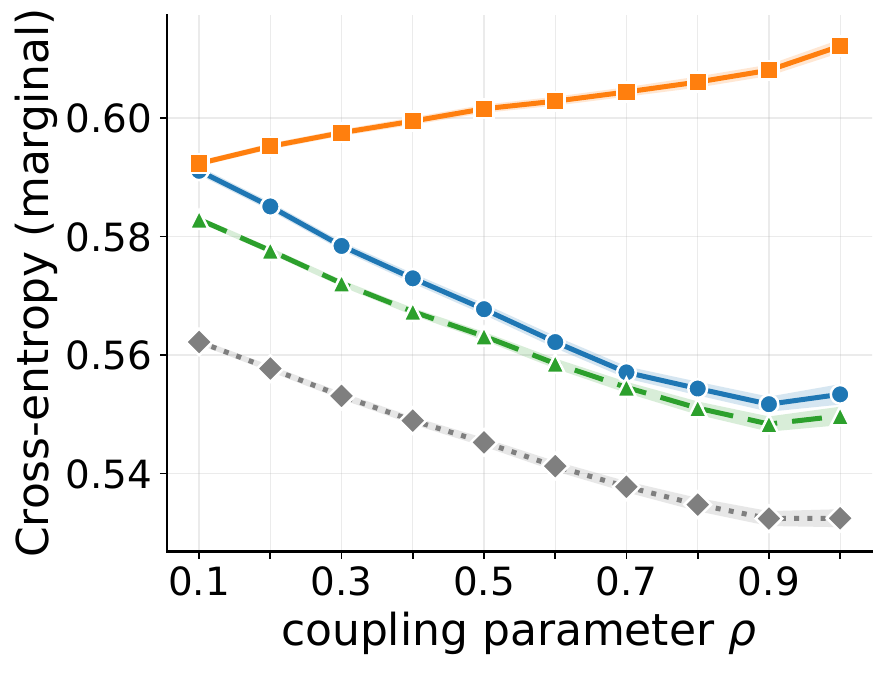} &
		\includegraphics[width=0.24\textwidth]{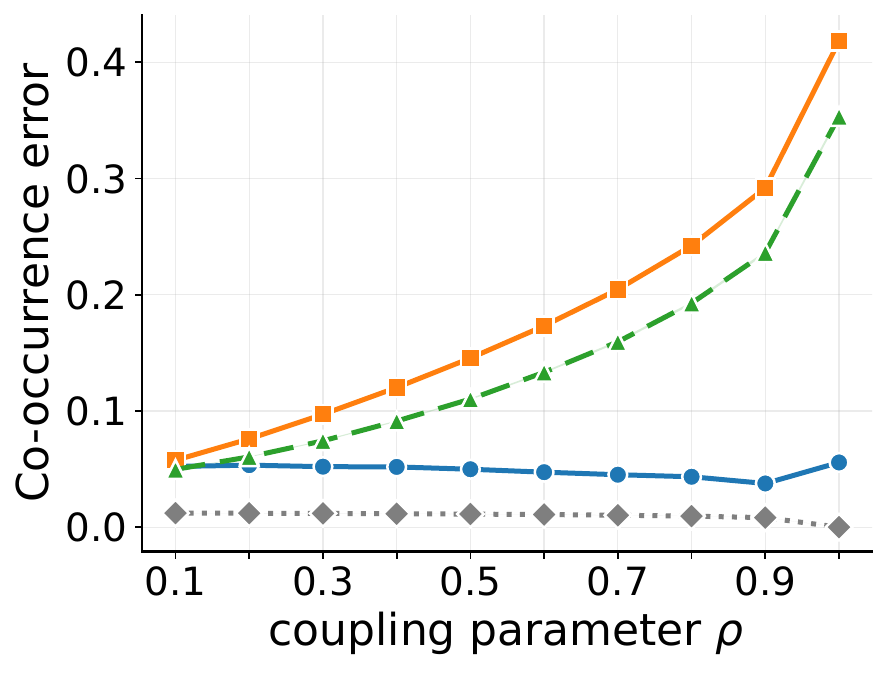}
	\end{tabular}
	\caption{Metrics for multiple binary labels at $n=100$ (top) and $n=1{,}000$ (bottom).}
	\label{fig:sim62_small_n}
\end{figure}

\subsection{Mixed-type outcome}\label{app:mixed}

\paragraph{Design.} The covariates are $X\sim\mathcal N(0,I_5)$, the latent vector $Z$ has $p=31$ coordinates, $B\in\rb^{31\times5}$ has independent $\mathcal N(0,0.1^2)$ entries, and $m$ acts on every coordinate of $BX$. The covariance is built from the loadings $A\in\rb^{31\times5}$ of five factors, $A_{if}=a_f+0.18\,z_{if}$ on the continuous and binary coordinates and $A_{if}=a_f+0.9\,z_{if}$ on the categorical ones, with independent $a_f\sim\mathcal N(0,0.9^2)$ and $z_{if}\sim\mathcal N(0,1)$. To $AA^\top$ are added a diagonal matrix with entries uniform on $[0.05,0.2]$ and, on each entry linking two types, a term of random sign and magnitude uniform on $[0.5,0.9]$ between the continuous and the binary coordinates and on $[0.4,0.8]$ otherwise. The sum is then shifted by $4.03\,I$ to make its smallest eigenvalue $0.01$. The resulting latent correlations are all positive: $0.51$ to $0.69$ between the continuous and the binary coordinates, $0.58$ to $0.67$ among the binary ones, and $0.10$ to $0.75$ in pairs involving a categorical coordinate. Each of the $50$ replications has a validation sample and a test sample of $10{,}000$, the same at every $n$.

\paragraph{Methods.} GEM applies the identity link to the first coordinate, the sign link to the next $20$ and the argmax link to the last $10$, with one scale for all $31$ coordinates. Each component of the stacked pipeline is trained by Adam at learning rate $10^{-4}$ on the full batch for $5{,}000$ iterations, keeping the parameters with the lowest validation loss, computed every $50$ iterations and after the last. The continuous coordinate is fitted by the model \eqref{eq:gem_model} with the identity link, that is engression with the added perturbation $\sigma(x)\eta$, with $g$ as in Appendix~\ref{app:gem_settings} at width $256$ except that a one-dimensional $\varepsilon$ enters at the input only. Each binary label is fitted by its own network $5\to256\to256\to256\to1$, trained by the binary cross-entropy on standardized covariates, and the categorical label by a network $5\to256\to256\to256\to10$ with a softmax output, trained by the cross-entropy. Its draws are independent across the three blocks and across the $20$ labels.

\paragraph{Metrics.} All metrics use $M=1{,}000$ draws at each of the $10{,}000$ test points. The energy score is computed on the vector of the continuous value, the $20$ labels in $\pm1$ and the indicator vector of the class, with its second term averaged over $25$ disjoint pairs from a random subset of $50$ draws. The CRPS is the same for the continuous coordinate. Each cross-entropy is the mean over the test points of $-\log\hat p$, where $\hat p$ estimates the probability of the observed value from the number $M_y$ of draws equal to it. For each binary label, $\hat p=(M_y+1)/(M+2)$, the smoothed frequency of Appendix~\ref{app:multilabel}, as if one draw of each of the two values were added, so that $\hat p$ is never $0$; the binary cross-entropy is averaged over the labels. For the categorical label, $\hat p=\max(M_y/M,10^{-8})$, the unsmoothed frequency of the observed class with a floor of $10^{-8}$, so that a class that none of the draws takes costs $-\log10^{-8}\approx18.4$.

The two cross-type errors measure whether a method reproduces the dependence across types: given $X$, the continuous coordinate depends on the binary labels and on the class through the latent correlations. We measure this dependence by correlation ratios. At each test point $x$, the residual of a draw is its continuous value minus the mean of the continuous values of the $M$ draws at $x$, so that only the variation given $x$ remains. The residuals of all draws at all test points are pooled and split into two groups by the value of label $l$ in the same draw. With $N_g$ residuals of mean $\bar r_g$ in group $g$ and overall mean $\bar r$, the correlation ratio $\hat\eta_l\in[0,1]$ is given by $\hat\eta_l^2=\sum_gN_g(\bar r_g-\bar r)^2/\sum(r-\bar r)^2$, the fraction of the variance of the residuals explained by the label. The ratio $\hat\eta_{\mathrm{cat}}$ is the same with the ten groups formed by the drawn class. The true ratios, $\eta_l$ from $0.39$ to $0.54$ and $\eta_{\mathrm{cat}}=0.43$, are computed in the same way from $2\times10^5$ draws of $(X,Y)$ from the mechanism, with the exact residual $Y^{\mathrm{cont}}-\E(Y^{\mathrm{cont}}\mid X)$. The two errors in Figure~\ref{fig:sim63_mixed} are the root mean square of $\hat\eta_l-\eta_l$ over the $20$ labels and $|\hat\eta_{\mathrm{cat}}-\eta_{\mathrm{cat}}|$. The stacked marginals draw the three blocks independently given $x$, so their ratios are near $0$ and their errors near $0.48$ and $0.43$.

\subsection{Ranking outcome}\label{app:ranking}

\paragraph{Design.} The coefficients $\beta_j$ and $\gamma_j$ have independent $\mathcal N(0,0.6^2/5)$ and $\mathcal N(0,0.5^2/5)$ entries; those of item $1$ are not used, since $f_1=f_0+\delta$. Each of the $50$ replications has a validation sample of $5{,}000$ and a test sample of $2{,}000$, the same at every $n$.

\paragraph{Baselines.} The Plackett--Luce model with linear scores $s_j(x)=w_j^\top x+b_j$ is fitted by maximum likelihood. For an observed ordering $i_1\succ\dots\succ i_4$ at covariate value $x$, the log-likelihood is $\sum_{t=1}^{4}\{s_{i_t}(x)-\log\sum_{u\ge t}\exp s_{i_u}(x)\}$. It is maximized by Adam at learning rate $10^{-3}$ on the full batch for $5{,}000$ iterations; the validation negative log-likelihood is computed every $50$ iterations and after the last, and the parameters with the lowest value are kept. The model with spline scores has $s_j(x)=b_j+\sum_{l=1}^5s_{jl}(x_l)$, where each $s_{jl}$ is a natural cubic spline with knots at quantiles of the training values of $x_l$. The number of knots, $3$, $5$ or $7$ and the same for all covariates, is chosen for each fit by the validation likelihood, and the model is fitted in the same way. Relative to linear scores in the same replication, spline scores change the total variation by $+0.014$, $+0.005$ and $-0.002$ at $n=500$, $2{,}000$ and $10{,}000$ in the independent setting, and by $-0.003$ to $-0.004$ in the coupled setting, with standard errors of at most $0.001$. Both models draw rankings exactly, by adding independent standard Gumbel variables to the fitted scores and ranking the items by the perturbed scores.

\paragraph{Metrics.} At each test point $x$, the conditional distribution of a method over the $24$ rankings is estimated from $M$ of its draws as $\hat P(\pi)=(M_\pi+1)/(M+24)$, where $M_\pi$ is the number of draws equal to the ranking $\pi$; adding one to each count keeps every estimate positive. The oracle's distribution is estimated in the same way. The total variation $\frac12\sum_\pi|\hat P(\pi)-P(\pi)|$ and the pairwise-preference error are computed at the first $200$ of the $2{,}000$ test points, with $M=4{,}000$. At each of these points, the true distribution $P$ is estimated by the frequencies of $4\times10^4$ draws from the true mechanism, without adding one. Since both $\hat P$ and $P$ are estimated from finitely many draws, the oracle's errors are not zero. Its total variation is $0.026$ in the independent setting and $0.021$ in the coupled setting, and its pairwise-preference error is $0.006$ and $0.005$. The log-loss $-\log\hat P(y)$ of the observed ranking $y$ and Kendall's $\tau$ between $y$ and the most probable ranking under $\hat P$ are computed at all $2{,}000$ test points, with $M=500$. For the Plackett--Luce models, the log-loss is also computed from $\hat P$, that is, from their draws rather than from their closed-form probabilities, so that all methods are scored by the same estimator.

\section{Details for the joint species distribution modeling application}\label{app:jsdm}

\subsection{Data and splits}

The plots are those distributed with the benchmark, $1{,}200$ in all, within an area about $15$ by $11$ kilometers. The benchmark reduces the six environmental variables to four principal components that carry $88\%$ of their variance. The five splits are the benchmark's own random half plus four further random halves of the same pool, so any two of the five share between $290$ and $316$ of their $600$ training plots and as many test plots. Their per-plot score differences correlate at about $0.7$ to $0.9$, depending on the score, so we report the number of splits won rather than paired $t$-statistics.

\paragraph{Dependence between species.} The richness variances of Section~\ref{sec:jsdm} are computed on the test plots of the benchmark's split from logistic regressions fitted species by species to the components and their squares. The observed richness varies around the sum of the fitted probabilities with a variance of $148$, whereas independent species with those probabilities would give $17$ on average. Little of the dependence between species is spatial. Across the training plots, the residual of a species present on at least $10\%$ of them correlates at $0.13$ on average with its mean residual at the eight nearest other plots.

\subsection{Models and settings}

\paragraph{Baselines.} Table~\ref{tab:veg_interp_full} gives all $17$ models of the comparison with every measure we computed. Each baseline is fitted once per split, with the benchmark's scripts or the authors' packages, and gives $1{,}000$ draws of the $242$-species vector per test plot. Nine of the $15$ are the benchmark's own models: the probit GLM, the GAM, MARS, the random forest, the boosted trees, the gradient nearest-neighbor model, and HMSC without associations, with plot-level latent factors and with spatially structured latent factors. The GAM, the random forest and the boosted trees with the coordinates, gllvm and the two classifier chains have no counterpart there and follow the same protocol. As in the benchmark, the models with a linear predictor (the GLM, HMSC, gllvm and the chains) receive the four principal components and their squares, and the flexible models the components alone, so GEM sees fewer columns than the models with a linear predictor. No model uses the plot coordinates except the five spatial variants. The spatial variants use the coordinates through a two-dimensional smooth (GAM), as covariates (trees) or as covariates together with a fixed basis of $64$ Gaussian bumps at $k$-means centers of the training plots (chain). The settings are the following: the probit GLM with linear and squared terms and an intercept-only fallback on separation, as in the benchmark; the GAM with \texttt{mgcv} shrinkage smooths selected by REML; MARS as a single multi-response \texttt{earth} model of degree one; a random forest with \texttt{mtry} and node size chosen by out-of-bag error; boosted trees with interaction depth and learning rate chosen by stratified five-fold cross-validation; and a gradient nearest-neighbor model. HMSC uses default priors and four chains of $250$ samples, thinned by $20$ after a transient of $5{,}000$ without associations and with plot-level latent factors and by $10$ after $2{,}500$ with spatially structured latent factors through a nearest-neighbor Gaussian process, with median potential scale reduction factors of $1.002$ to $1.016$. For gllvm, two latent variables are fitted by variational approximation, with no convergence warning on any reported fit. The classifier chain consists of ten logistic chains in random species orders, whose draws are generated sequentially. MARS and the gradient nearest-neighbor model share a basis and a neighbor set across the species, but their draws, like those of the other models without a dependence structure, are independent across species. At a test plot, HMSC draws the plot-level latent factors from their prior and the spatially structured ones from their conditional law given the training plots. HMSC with spatial latent factors, the benchmark's best model on these data, takes $3.9$ hours per fit and was run on the benchmark's own split only.

\begin{table}[t]
	\centering
	\caption{Vegetation data: every model of the comparison, mean (standard deviation) over the five splits (the benchmark's split only for HMSC with spatial latent factors), sorted by the energy score. Columns as in Table~\ref{tab:veg_main} except the fit time, with in addition whether the model uses the plot coordinates, the dependence gain and the correlation between predicted and observed residual co-occurrence.}
	\label{tab:veg_interp_full}
	\footnotesize\setlength{\tabcolsep}{2pt}
	{%
	\begin{tabular}{llccccccccccc}
		model & coord. & energy$\scriptstyle\downarrow$ & \begin{tabular}[b]{@{}c@{}}rich.\\CRPS$\scriptstyle\downarrow$\end{tabular} & \begin{tabular}[b]{@{}c@{}}rich.\\int.$\scriptstyle\downarrow$\end{tabular} & \begin{tabular}[b]{@{}c@{}}$\beta$\\CRPS$\scriptstyle\downarrow$\end{tabular} & \begin{tabular}[b]{@{}c@{}}$\beta$\\int.$\scriptstyle\downarrow$\end{tabular} & NLL$\scriptstyle\downarrow$ & AUC$\scriptstyle\uparrow$ & \begin{tabular}[b]{@{}c@{}}dep.\\gain$\scriptstyle\uparrow$\end{tabular} & \begin{tabular}[b]{@{}c@{}}co-\\occ.$\scriptstyle\uparrow$\end{tabular} & overall$\scriptstyle\uparrow$\\
		\hline
		GEM, 5-seed ensemble & no & 2.782 & 6.17 & 0.031 & 0.091 & 0.036 & 56.1 & 0.738 & 0.0078 & 0.81 & 1.09\\
		 &  & (0.038) & (0.16) & (0.022) & (0.003) & (0.028) & (1.2) & (0.004) &  &  & (0.06)\\[1pt]
		Random forest + coord. & yes & 2.792 & 7.33 & 0.301 & 0.105 & 0.290 & 56.6 & 0.734 & $-$0.0006 & $-$0.01 & 0.24\\
		 &  & (0.028) & (0.22) & (0.010) & (0.002) & (0.033) & (1.1) & (0.007) &  &  & (0.14)\\[1pt]
		GEM, single fit & no & 2.795 & 6.23 & 0.039 & 0.091 & 0.039 & 57.3 & 0.727 & 0.0056 & 0.77 & 0.91\\
		 &  & (0.039) & (0.17) & (0.019) & (0.003) & (0.020) & (1.4) & (0.009) &  &  & (0.13)\\[1pt]
		Boosted trees + coord. & yes & 2.813 & 7.53 & 0.297 & 0.110 & 0.310 & 56.8 & 0.697 & $-$0.0006 & $-$0.01 & $-$0.17\\
		 &  & (0.030) & (0.25) & (0.019) & (0.003) & (0.028) & (1.2) & (0.009) &  &  & (0.19)\\[1pt]
		HMSC, spatial latent f. & yes & 2.815 & 6.49 & 0.074 & 0.095 & 0.093 & 56.7 & 0.731 & 0.0106 & 0.90 & 0.66\\
		Classifier chain + basis & yes & 2.816 & 6.35 & 0.042 & 0.093 & 0.073 & 59.5 & 0.735 & 0.0096 & 0.90 & 1.01\\
		 &  & (0.028) & (0.14) & (0.023) & (0.002) & (0.033) & (1.3) & (0.007) &  &  & (0.13)\\[1pt]
		GAM + spatial smooth & yes & 2.817 & 7.42 & 0.286 & 0.112 & 0.309 & 56.5 & 0.738 & $-$0.0006 & $-$0.01 & 0.05\\
		 &  & (0.032) & (0.18) & (0.021) & (0.004) & (0.037) & (1.1) & (0.007) &  &  & (0.15)\\[1pt]
		Classifier chain & no & 2.819 & 6.63 & 0.021 & 0.097 & 0.079 & 56.9 & 0.735 & 0.0123 & 0.92 & 0.44\\
		 &  & (0.030) & (0.11) & (0.010) & (0.003) & (0.036) & (1.2) & (0.005) &  &  & (0.10)\\[1pt]
		GAM & no & 2.837 & 7.46 & 0.293 & 0.113 & 0.314 & 57.1 & 0.726 & $-$0.0006 & $-$0.01 & $-$0.32\\
		 &  & (0.030) & (0.19) & (0.020) & (0.004) & (0.041) & (1.1) & (0.006) &  &  & (0.10)\\[1pt]
		HMSC, covariates only & no & 2.837 & 7.48 & 0.297 & 0.113 & 0.305 & 56.7 & 0.745 & $-$0.0006 & $-$0.01 & $-$0.20\\
		 &  & (0.029) & (0.22) & (0.017) & (0.004) & (0.034) & (1.0) & (0.006) &  &  & (0.08)\\[1pt]
		Boosted trees & no & 2.842 & 7.62 & 0.301 & 0.112 & 0.311 & 57.7 & 0.687 & $-$0.0005 & $-$0.02 & $-$0.56\\
		 &  & (0.028) & (0.20) & (0.011) & (0.003) & (0.029) & (1.1) & (0.007) &  &  & (0.14)\\[1pt]
		Probit GLM & no & 2.847 & 7.56 & 0.303 & 0.113 & 0.312 & 58.0 & 0.733 & $-$0.0006 & $-$0.02 & $-$0.35\\
		 &  & (0.029) & (0.22) & (0.012) & (0.004) & (0.039) & (1.2) & (0.009) &  &  & (0.11)\\[1pt]
		gllvm & no & 2.853 & 7.16 & 0.073 & 0.100 & 0.127 & 58.3 & 0.730 & 0.0088 & 0.80 & $-$0.16\\
		 &  & (0.031) & (0.04) & (0.021) & (0.003) & (0.036) & (1.3) & (0.009) &  &  & (0.28)\\[1pt]
		MARS & no & 2.853 & 7.60 & 0.297 & 0.115 & 0.306 & 57.7 & 0.731 & $-$0.0006 & $-$0.02 & $-$0.58\\
		 &  & (0.031) & (0.18) & (0.002) & (0.004) & (0.036) & (1.2) & (0.008) &  &  & (0.07)\\[1pt]
		Random forest & no & 2.856 & 7.62 & 0.309 & 0.109 & 0.304 & 59.4 & 0.709 & $-$0.0008 & $-$0.01 & $-$0.46\\
		 &  & (0.029) & (0.18) & (0.006) & (0.003) & (0.035) & (1.3) & (0.005) &  &  & (0.16)\\[1pt]
		Gradient nearest & no & 2.885 & 7.24 & 0.306 & 0.111 & 0.313 & 68.1 & 0.699 & $-$0.0004 & $-$0.01 & $-$0.23\\
		\quad neighbor &  & (0.036) & (0.12) & (0.007) & (0.003) & (0.031) & (1.9) & (0.003) &  &  & (0.07)\\[1pt]
		HMSC, latent factors & no & 2.894 & 7.27 & 0.061 & 0.106 & 0.117 & 59.3 & 0.694 & 0.0142 & 0.89 & $-$0.78\\
		 &  & (0.031) & (0.12) & (0.023) & (0.003) & (0.029) & (1.1) & (0.006) &  &  & (0.12)\\[1pt]
		\hline
	\end{tabular}
	}
\end{table}

\paragraph{GEM.} The settings are those of Table~\ref{tab:gem_settings} and Appendix~\ref{app:gem_settings}, with the four components, standardized on the training plots, as inputs. The learning rate is annealed from $10^{-3}$ to zero on a cosine schedule over the $4{,}000$ iterations, and the last iterate is used. Each fit gives $1{,}000$ draws per test plot, and the fit time of Table~\ref{tab:veg_main} includes them, with two fits sharing a GPU. The ensemble pools $200$ draws from each of the five fits, and the single-fit rows of the tables average each metric over the five fits.

\paragraph{Tuning.} The number of iterations and the learning rate were chosen on a $480/120$ division of the training half of the benchmark's own split; no test plot of that split was used. Each setting was fitted with one seed on the $480$ plots and scored by the energy score of $500$ draws on the $120$ held-out plots, with the marginal log score as tie-breaker. The grid crossed $4{,}000$, $8{,}000$, $16{,}000$ and $32{,}000$ iterations with learning rates of $10^{-4}$, $3\times10^{-4}$ and $10^{-3}$, and was extended by one step where the optimum lay on its edge, $14$ settings in all; it selected $4{,}000$ iterations at $10^{-3}$. The variant without the control variate in Table~\ref{tab:veg_gem} was tuned in the same way and selected $64{,}000$ iterations at $10^{-4}$. The selected settings were used unchanged on every split. The $120$ plots held out for tuning GEM come from the training half of the benchmark's own split, so they appear among the test plots of the four re-splits but not of the benchmark's split. On that split, the mean energy-score margin of GEM over the $14$ baselines fitted on every split is about $86\%$ of its mean margin on the re-splits.

\paragraph{Single fit and ensemble.} Only GEM was tuned, run with five seeds and ensembled; each baseline is a single fit at the benchmark's or its package's settings, although the random forest, the boosted trees and the GAM tune internally. The single-fit rows are therefore the like-for-like comparison. The single fit beats $13$ of the $15$ baselines on the energy score and $14$ on the richness CRPS on every split (Table~\ref{tab:veg_interp_full}), though its marginals trail those of the best single-species models. With $600$ training plots a single fit depends on its random seed, its energy score varying with a standard deviation of $0.007$ across seeds within a split (Table~\ref{tab:veg_gem}). For this reason the ensemble pools five fits.

\paragraph{The gradient estimator.} Table~\ref{tab:veg_gem} compares the estimator of Proposition~\ref{prop:grad} with and without the control variate of Section~\ref{sec:algorithm}. The measured variance reduction from the control variate is a factor of $242$ to $606$, from $4{,}000$ repetitions at each of three seeds, at initialization and after $1{,}000$ steps. The estimator stays unbiased, with the norm of the mean difference between the two estimators at $0.98$ to $1.08$ standard errors in all six cells of the study. With the control variate switched off, it reproduces the gradient of the unmodified algorithm to $1.2\times10^{-6}$ over the $382{,}939$ parameters of the network, which in this study took the components and their squares as inputs. Without the control variate, the same network, tuned in the same way, needs sixteen times as many iterations and still reaches an energy score of $2.916$ against $2.795$, an AUC of $0.65$ against $0.73$ and a richness interval error of $0.052$ against $0.039$ (Table~\ref{tab:veg_gem}).

\begin{table}[t]
	\centering
	\caption{GEM on the vegetation data: the estimator of Proposition~\ref{prop:grad} without and with the control variate. Mean (standard deviation) over the five splits; AUC rare and AUC common are over the species with training prevalence below $5\%$ and at least $10\%$. The seed SD is the standard deviation of the energy score across the five seeds within a split, averaged over splits. NLL is the marginal log score of Table~\ref{tab:veg_main}.}
	\label{tab:veg_gem}
	\footnotesize\setlength{\tabcolsep}{2pt}
	{%
	\begin{tabular}{lccccccccccc}
		variant & energy$\scriptstyle\downarrow$ & \begin{tabular}[b]{@{}c@{}}rich.\\CRPS$\scriptstyle\downarrow$\end{tabular} & \begin{tabular}[b]{@{}c@{}}rich.\\int.$\scriptstyle\downarrow$\end{tabular} & NLL$\scriptstyle\downarrow$ & AUC$\scriptstyle\uparrow$ & \begin{tabular}[b]{@{}c@{}}AUC\\rare$\scriptstyle\uparrow$\end{tabular} & \begin{tabular}[b]{@{}c@{}}AUC\\common$\scriptstyle\uparrow$\end{tabular} & \begin{tabular}[b]{@{}c@{}}dep.\\gain$\scriptstyle\uparrow$\end{tabular} & \begin{tabular}[b]{@{}c@{}}co-\\occ.$\scriptstyle\uparrow$\end{tabular} & \begin{tabular}[b]{@{}c@{}}seed\\SD$\scriptstyle\downarrow$\end{tabular} & \begin{tabular}[b]{@{}c@{}}fit\\time$\scriptstyle\downarrow$\end{tabular}\\
		\hline
		GEM without & 2.916 & 6.89 & 0.052 & 64.4 & 0.652 & 0.639 & 0.670 & 0.0043 & 0.50 & 0.0110 & 3 min\\
		\quad control variate & (0.048) & (0.11) & (0.030) & (2.1) &  &  &  &  &  &  & \\[1pt]
		GEM, single fit & 2.795 & 6.23 & 0.039 & 57.3 & 0.727 & 0.697 & 0.752 & 0.0056 & 0.77 & 0.0071 & 19 s\\
		 & (0.039) & (0.17) & (0.019) & (1.4) &  &  &  &  &  &  & \\[1pt]
		GEM, 5-seed ensemble & 2.782 & 6.17 & 0.031 & 56.1 & 0.738 & 0.717 & 0.755 & 0.0078 & 0.81 & --- & ---\\
		 & (0.038) & (0.16) & (0.022) & (1.2) &  &  &  &  &  &  & \\[1pt]
		\hline
	\end{tabular}
	}
\end{table}

\subsection{Metrics}

Every metric of a split is computed on its $n=600$ test plots from the $M=1{,}000$ draws $V_i^{(1)},\dots,V_i^{(M)}\in\{0,1\}^q$ of the model at plot $i$, where $q=242$, $y_i\in\{0,1\}^q$ is the observed community and $y_{ij}=1$ when species $j$ is present; the tables then average it over the splits. The AUC, the two interval errors, the nine measures of the overall score and the twelve of Table~\ref{tab:veg_benchmark} use a fixed random subset of $100$ of the draws, the same for every model, as the benchmark's protocol does; all other metrics use all $M$ draws.

\paragraph{Energy score and CRPS.} The energy score of plot $i$ is
\[
\frac1M\sum_{m=1}^M\|V_i^{(m)}-y_i\|-\frac1{2M(M-1)}\sum_{m\ne m'}\|V_i^{(m)}-V_i^{(m')}\|,
\]
the estimate from the draws of the energy score $\E\|V-y_i\|-\frac12\E\|V-V'\|$ of Section~\ref{sec:background}, averaged over the plots. The species are coded $0$ and $1$, so that $\|v-w\|$ is the square root of the number of species on which $v$ and $w$ differ; this is half the distance in the $\pm1$ coding of Table~\ref{tab:links} in which the model is fitted, and the reported energy score is half the score in that coding. The richness of $v\in\{0,1\}^q$ is $R(v)=\sum_{j=1}^q\mathbf 1\{v_j=1\}$, the number of species present. Its CRPS at plot $i$ is the energy score of the scalar $R$,
\[
\frac1M\sum_{m=1}^M\big|R_i^{(m)}-r_i\big|-\frac1{2M(M-1)}\sum_{m\ne m'}\big|R_i^{(m)}-R_i^{(m')}\big|,\qquad R_i^{(m)}=R(V_i^{(m)}),\quad r_i=R(y_i).
\]
For two plots with communities $u$ and $w$ that have $a=\sum_ju_jw_j$ species in common, the $\beta$-S{\o}rensen dissimilarity is
\[
\beta(u,w)=\frac{R(u)+R(w)-2a}{R(u)+R(w)},
\]
the number of species present at only one of the two plots divided by the sum of their richnesses \citep{baselga2010partitioning}, set to $0.995$ when both plots are empty, as in the benchmark's code. Its CRPS for a pair $\{i,i'\}$ of test plots is computed as for richness, from the $M$ values $\beta(V_i^{(m)},V_{i'}^{(m)})$ and the observed $\beta(y_i,y_{i'})$. It is averaged over $300$ pairs of distinct test plots drawn uniformly with replacement with a fixed seed, so that every model is scored on the same pairs, as in the benchmark.

\paragraph{Log score and AUC.} The marginal log score of plot $i$ is
\[
-\sum_{j=1}^q\log\hat p_{ij}(y_{ij}),\qquad \hat p_{ij}(1)=1-\hat p_{ij}(0)=\frac{M\bar V_{ij}+\frac12}{M+1},\quad \bar V_{ij}=\frac1M\sum_{m=1}^MV_{ij}^{(m)},
\]
where the smoothing keeps a species that is never drawn from receiving probability zero. Table~\ref{tab:veg_rare} reports it per species, averaged over the species of a prevalence class. Let $\tilde p_{ij}$ be the frequency of species $j$ at plot $i$ in the subset of $100$ draws, and $n_{j1}$ and $n_{j0}$ the numbers of test plots at which species $j$ is present and absent. The AUC of species $j$ is
\[
\frac1{n_{j1}n_{j0}}\sum_{i:\,y_{ij}=1}\ \sum_{i':\,y_{i'j}=0}\Big[\mathbf 1\{\tilde p_{ij}>\tilde p_{i'j}\}+\tfrac12\mathbf 1\{\tilde p_{ij}=\tilde p_{i'j}\}\Big],
\]
and the reported AUC is its mean over the species with $n_{j1}n_{j0}>0$. This is the plain AUC, not the benchmark's version discussed below.

\paragraph{Interval errors.} For the interval error of richness, let $F_i(r)$ and $F_i(r^-)$ be the proportions of the $100$ draws whose richness is at most $r$ and below $r$. The randomized probability integral transform $U_i$ is uniform on $[F_i(r_i^-),F_i(r_i)]$, and it is uniform on $[0,1]$ when the forecast is calibrated \citep{czado2009predictive}. The coverage of the central $50\%$ interval is $c=n^{-1}\sum_i\P(\frac14\le U_i\le\frac34)$, computed exactly, the probability being $\mathbf 1\{\frac14\le F_i(r_i)\le\frac34\}$ when $F_i(r_i^-)=F_i(r_i)$, and the interval error is $|\frac12-c|$. The $\beta$ interval error of Table~\ref{tab:veg_interp_full} is the same quantity for the dissimilarity over the $300$ pairs.

\paragraph{Dependence gain and co-occurrence correlation.} The dependence gain uses the set $\mathcal C$ of common species, those present on at least $10\%$ of the training plots of the split, $76$ to $79$ of the $242$ on the five splits. For $j,k\in\mathcal C$, let $N_{ijk}$ be the number of draws in which $(V_{ij}^{(m)},V_{ik}^{(m)})$ equals the observed pair $(y_{ij},y_{ik})$, and let $\bar p_{ij}(1)=1-\bar p_{ij}(0)=\bar V_{ij}$. The model gives the observed pair the probability $(N_{ijk}+\frac12)/(M+2)$, and the product of its own marginals gives it $\{M\bar p_{ij}(y_{ij})\bar p_{ik}(y_{ik})+\frac12\}/(M+2)$, both smoothed as counts over the four cells of the pair. The dependence gain, in nats, is the mean log ratio of the two,
\[
\frac1n\sum_{i=1}^n\binom{|\mathcal C|}{2}^{-1}\sum_{j<k,\ j,k\in\mathcal C}\log\frac{N_{ijk}+\frac12}{M\bar p_{ij}(y_{ij})\bar p_{ik}(y_{ik})+\frac12}.
\]
For a model whose draws are independent across species it is zero up to Monte Carlo error, so such models show values near $-0.0006$. The co-occurrence correlation of Tables~\ref{tab:veg_interp_full} and~\ref{tab:veg_gem} is the Pearson correlation, over the pairs $j<k$ in $\mathcal C$, between the predicted residual co-occurrence $n^{-1}\sum_i(M^{-1}\sum_mV_{ij}^{(m)}V_{ik}^{(m)}-\bar V_{ij}\bar V_{ik})$ and the observed $n^{-1}\sum_i(y_{ij}-\pi_{ij})(y_{ik}-\pi_{ik})$. Here $\pi_{ij}$ is the occurrence probability that the GAM fitted to the same split gives species $j$ at plot $i$, a reference shared by all models.

\paragraph{The overall score.} The overall score of Table~\ref{tab:veg_main} has the benchmark's form. Each of nine measures is standardized across the models of the comparison fitted on the split, as $z=(x-\bar x)/s$, where $\bar x$ and $s$ are the mean and the standard deviation over those models, the latter with divisor the number of models. The standardized value $z$ is clipped to $[-2,2]$ and signed so that larger is better, and the nine are averaged. At the occurrence level the measures are the mean of $|\tilde p_{ij}-y_{ij}|$, the AUC, and the reliability: the plots are ordered by $\tilde p_{ij}$ and cut into ten bins of equal size, and the absolute difference between the means of $\tilde p_{ij}$ and of $y_{ij}$ in a bin is averaged over the bins and the species. At the richness level they are the root mean squared error of the predictive mean richness, the Spearman correlation between the predictive mean and $r_i$, and the interval error. At the composition level they are the same three for the S{\o}rensen dissimilarity and its turnover and nestedness components \citep{baselga2010partitioning} over the $300$ pairs; the three indices are standardized separately and their $z$ averaged into one measure.

\paragraph{The benchmark's measures.} We ported the benchmark's evaluation code and checked the port against the authors' own \textsf{R} functions on identical predictions and identical plot pairs, to agreement within $2\times10^{-15}$ on all $20$ values. Table~\ref{tab:veg_benchmark} gives its twelve measures as its code computes them. Four of them differ from the text that describes them: the accuracy of richness and of the dissimilarity indices is the mean absolute error of individual draws rather than the root mean squared error of the predictive mean, and the two rank correlations are computed per draw and averaged. Both reward predictions that are too narrow. Writing $X_1,\dots,X_M$ for the draws, $M^{-1}\sum_m|X_m-y| = \mathrm{CRPS}(\hat F,y) + \tfrac12\E|X-X'|$, so the coded accuracy is a proper score plus a penalty on predictive spread, minimized by a point mass at the conditional median. A rank correlation computed draw by draw is attenuated by exactly the sampling variation that a correctly dispersed predictive distribution is supposed to have. On richness the penalty reverses the comparison. GEM scores $12.72$ against $9.56$ for the random forest with coordinates, of which $6.17$ and $7.33$ are CRPS and the remainder is spread, whereas on the root mean squared error of the predictive mean it is ahead, $11.44$ against $11.80$, on all five splits. We therefore compute those four measures from the predictive distribution, and replace the benchmark's occurrence AUC, which follows \texttt{pROC} in flipping a species scored below $0.5$ to its complement, by the plain AUC, its occurrence calibration by the mean absolute difference between predicted and observed frequency over deciles of the predicted probability, and the plain coverage of a central interval by the randomized probability integral transform. If only the four draw-based measures are corrected, and the AUC, the occurrence calibration and the two interval measures are kept as the benchmark codes them, the same seven models stay at the top, but the classifier chain with a spatial basis comes first, at $1.37$ against GEM's $1.10$. Those four measures therefore carry most, but not all, of the correction. On the species level the two versions agree: GEM's AUC is $0.739$ as coded and $0.738$ plain, its occurrence calibration is $0.032$ against a range of $0.032$ to $0.034$ over all $17$ models, and its sharpness is $0.192$ against a range of $0.151$ to $0.213$.

\begin{table}[t]
	\centering
	\caption{The benchmark's twelve measures, exactly as its code computes them, on the vegetation data, averaged over the five splits (the benchmark's split only for HMSC with spatial latent factors). Accuracy, discrimination, calibration and precision at the three levels of \citet{norberg2019comprehensive}; the composition columns average the three $\beta$-diversity indices. Accuracy at the richness and composition levels is the mean absolute error of individual draws and discrimination there is a per-draw rank correlation, so models that draw species independently score well on them; see the text. Precision, a predictive standard deviation, has no arrow because neither direction is better on its own.}
	\label{tab:veg_benchmark}
	\footnotesize\setlength{\tabcolsep}{3pt}
	{%
	\begin{tabular}{lcccccccccccc}
		model & \multicolumn{4}{c}{occurrence} & \multicolumn{4}{c}{richness} & \multicolumn{4}{c}{composition}\\
		 & acc.$\scriptstyle\downarrow$ & AUC$\scriptstyle\uparrow$ & calib.$\scriptstyle\downarrow$ & prec. & acc.$\scriptstyle\downarrow$ & disc.$\scriptstyle\uparrow$ & calib.$\scriptstyle\downarrow$ & prec. & acc.$\scriptstyle\downarrow$ & disc.$\scriptstyle\uparrow$ & calib.$\scriptstyle\downarrow$ & prec.\\
		\hline
		GEM, 5-seed ensemble & 0.135 & 0.739 & 0.032 & 0.192 & 12.72 & 0.461 & 0.060 & 11.75 & 0.191 & 0.209 & 0.053 & 0.183\\
		GEM, single fit & 0.135 & 0.728 & 0.032 & 0.188 & 12.69 & 0.463 & 0.059 & 11.62 & 0.191 & 0.206 & 0.054 & 0.181\\
		Random forest + coord. & 0.136 & 0.734 & 0.032 & 0.193 & 9.56 & 0.601 & 0.263 & 3.96 & 0.147 & 0.315 & 0.269 & 0.074\\
		GAM + spatial smooth & 0.137 & 0.739 & 0.033 & 0.193 & 9.66 & 0.588 & 0.256 & 3.98 & 0.152 & 0.270 & 0.277 & 0.075\\
		GAM & 0.140 & 0.728 & 0.033 & 0.200 & 9.74 & 0.571 & 0.254 & 4.03 & 0.153 & 0.258 & 0.283 & 0.076\\
		HMSC, covariates only & 0.142 & 0.746 & 0.033 & 0.204 & 9.79 & 0.590 & 0.261 & 4.09 & 0.154 & 0.256 & 0.266 & 0.077\\
		HMSC, latent factors & 0.145 & 0.695 & 0.032 & 0.213 & 15.20 & 0.328 & 0.078 & 14.24 & 0.199 & 0.102 & 0.057 & 0.177\\
		HMSC, spatial latent f. & 0.140 & 0.731 & 0.032 & 0.208 & 14.15 & 0.419 & 0.098 & 13.73 & 0.193 & 0.175 & 0.042 & 0.170\\
		gllvm & 0.140 & 0.731 & 0.034 & 0.189 & 15.04 & 0.399 & 0.092 & 14.19 & 0.189 & 0.170 & 0.056 & 0.159\\
		Classifier chain & 0.135 & 0.735 & 0.032 & 0.186 & 13.09 & 0.426 & 0.034 & 11.41 & 0.188 & 0.174 & 0.045 & 0.164\\
		Classifier chain + basis & 0.125 & 0.735 & 0.032 & 0.151 & 11.91 & 0.501 & 0.020 & 9.84 & 0.178 & 0.247 & 0.036 & 0.152\\
		\hline
	\end{tabular}
	}
\end{table}

\subsection{Further results}

\paragraph{Per-split differences.} Figure~\ref{fig:veg_paired} repeats the main comparison as paired per-split differences against the five-seed ensemble, which removes the split-to-split variation common to the two models. That common variation is most of the split-to-split variation. The energy score varies by about $4\%$ across the models and, for each model, with a standard deviation of $0.028$ to $0.039$ across the splits, but the paired difference between GEM and the random forest with coordinates has a standard deviation of only $0.012$ (mean $0.0104$), and $0.009$ to $0.015$ for the two chains and the GAM with a spatial smooth.

\begin{figure}[t]
\centering
\includegraphics[width=\textwidth]{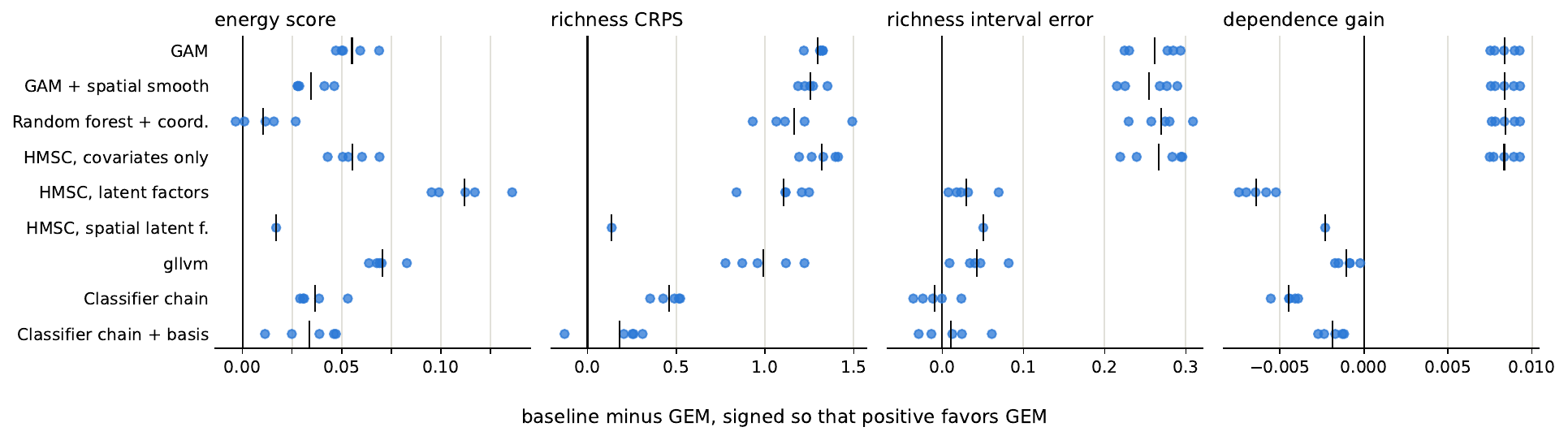}
\caption{Joint species distribution modeling on the vegetation data: paired per-split differences between nine of the fifteen baselines and the GEM ensemble on the same split, signed so that a positive value favors GEM. Each dot is one split; the vertical bar is the mean over splits. HMSC with spatial latent factors contributes one split.}\label{fig:veg_paired}
\end{figure}

\paragraph{Prevalence classes.} Table~\ref{tab:veg_rare} splits the species-level measures by training prevalence. HMSC without associations is the best model on the species below $5\%$ training prevalence, with an AUC of $0.746$ against $0.717$ for GEM, and is in the bottom third on the common ones, at $0.734$ against $0.755$; GEM is 8th of the $17$ models on the rare species and 3rd on the common ones.

\begin{table}[t]
	\centering
	\caption{Vegetation data: species-level accuracy by training prevalence, below $5\%$, $5$ to $10\%$ and at least $10\%$, averaged over the five splits (the benchmark's split only for HMSC with spatial latent factors). The classes are formed within each split and hold on average 120.6, 43.8 and 77.6 of the $242$ species. NLL is the marginal log score of Table~\ref{tab:veg_main}.}
	\label{tab:veg_rare}
	\footnotesize\setlength{\tabcolsep}{3pt}
	\begin{tabular}{lcccccc}
		model & \multicolumn{3}{c}{AUC$\scriptstyle\uparrow$} & \multicolumn{3}{c}{NLL$\scriptstyle\downarrow$}\\
		 & rare & uncommon & common & rare & uncommon & common\\
		\hline
		GEM, 5-seed ensemble & 0.717 & 0.767 & 0.755 & 0.0948 & 0.2100 & 0.4571\\
		GEM, single fit & 0.697 & 0.763 & 0.752 & 0.1002 & 0.2150 & 0.4608\\
		GAM + spatial smooth & 0.719 & 0.768 & 0.751 & 0.0948 & 0.2101 & 0.4619\\
		Random forest + coord. & 0.705 & 0.764 & 0.763 & 0.0997 & 0.2126 & 0.4539\\
		Boosted trees + coord. & 0.645 & 0.742 & 0.753 & 0.0980 & 0.2134 & 0.4595\\
		GAM & 0.711 & 0.756 & 0.734 & 0.0948 & 0.2125 & 0.4684\\
		Probit GLM & 0.724 & 0.759 & 0.733 & 0.0995 & 0.2149 & 0.4711\\
		MARS & 0.726 & 0.751 & 0.726 & 0.0965 & 0.2141 & 0.4725\\
		Random forest & 0.682 & 0.735 & 0.737 & 0.1058 & 0.2246 & 0.4744\\
		Boosted trees & 0.641 & 0.726 & 0.736 & 0.0986 & 0.2175 & 0.4682\\
		Gradient nearest neighbor & 0.665 & 0.727 & 0.735 & 0.1261 & 0.2744 & 0.5272\\
		HMSC, covariates only & 0.746 & 0.761 & 0.734 & 0.0917 & 0.2105 & 0.4691\\
		HMSC, spatial latent f. & 0.721 & 0.747 & 0.737 & 0.0931 & 0.2140 & 0.4712\\
		HMSC, latent factors & 0.687 & 0.710 & 0.694 & 0.0957 & 0.2207 & 0.4910\\
		gllvm & 0.720 & 0.756 & 0.731 & 0.1000 & 0.2160 & 0.4746\\
		Classifier chain + basis & 0.706 & 0.767 & 0.761 & 0.1051 & 0.2274 & 0.4749\\
		Classifier chain & 0.722 & 0.761 & 0.740 & 0.0944 & 0.2116 & 0.4668\\
		\hline
	\end{tabular}
\end{table}

\paragraph{Pairwise dependence.} The dependence gain, defined above, measures what a model's dependence structure adds to the log score of pairs of common species. GEM gains $0.0078$ nats per pair, less than each of the five joint baselines, which range from $0.0088$ for gllvm to $0.0142$ for HMSC with latent factors, and it is lower than each of them on every split (Table~\ref{tab:veg_interp_full}, Figure~\ref{fig:veg_paired}). The correlation between predicted and observed residual co-occurrence largely agrees. It is $0.81$ for GEM, level with $0.80$ for gllvm, against $0.89$ to $0.92$ for HMSC and the chains. The latent-factor models recover more of the pairwise association at the expense of the marginals (Figure~\ref{fig:veg_scatter}(b)).

\paragraph{Spatial variants.} A spatial basis leaves most of the classifier chain's dependence gain in place, $0.0096$ against $0.0123$ (Table~\ref{tab:veg_interp_full}). The coordinates matter more for the marginals of the tree models: they improve the energy score of the random forest from $2.856$ to $2.792$ and its AUC from $0.709$ to $0.734$, whereas a spatial smooth moves the GAM's AUC by $0.012$, from $0.726$ to $0.738$.

\clearpage
\section{Details for the health examination survey application}\label{app:nhanes}

\paragraph{Data.} Age is top-coded at $80$. No participant is excluded for a missing covariate; a missing education level, marital state or smoking state sets all indicators of that variable to zero. Systolic and diastolic blood pressure are the means of up to four readings. Body mass index, waist circumference, HbA1c and HDL cholesterol are log-transformed, and the seven continuous outcomes are standardized with the means and standard deviations of the training set of each split. The binary outcomes are diagnosed diabetes (NHANES item DIQ010, a borderline answer counting as no), diagnosed hypertension (BPQ020), blood-pressure medication (BPQ050A), diagnosed high cholesterol (BPQ080) and cholesterol medication (BPQ100D). An item that the questionnaire skips because of an earlier answer is coded as no. Self-rated health is item HUQ010. The $29$ covariate columns code ten variables. Age and the income-to-poverty ratio are standardized; the ratio is imputed by the training median and has an indicator of missingness ($8.9\%$ missing). The other variables are coded by indicators: female sex, five race and ethnicity groups, five education levels, three marital states, three smoking states (never, former, current), recreational physical activity, alcohol use and the six survey cycles. Alcohol use also has an indicator of missingness ($5.9\%$ missing; the question changed in the last cycle). The $20$ splits have $16{,}789$ training, $5{,}596$ validation and $5{,}598$ test participants.

\paragraph{Methods.} GEM is fitted to a $13$-dimensional coding of the outcome in which self-rated health is one coordinate equal to its level, with the ordinal link of Table~\ref{tab:links}. For scoring, its draws are converted to the $17$-dimensional coding of Section~\ref{sec:nhanes}, which the baselines use throughout. Its settings are those of Table~\ref{tab:gem_settings}; on the continuous coordinates, $\sigma$ as well as $g$ receives pathwise derivatives through the joint norm. Every network of the baselines is trained by Adam at learning rate $10^{-4}$ on the full batch for $5{,}000$ iterations, keeping the parameters with the lowest validation loss, computed every $50$ iterations and after the last. The stacked marginals fit the continuous block by the model \eqref{eq:gem_model} with the identity link, with $g$ of two hidden layers of $256$ units, a $7$-dimensional $\varepsilon$ entering at the input only and a common scale. They fit each binary label by a network $29\to256\to256\to1$ trained by the binary cross-entropy on standardized covariates, and self-rated health by a network $29\to256\to256\to5$ with a softmax output trained by the cross-entropy. Their draws are independent across blocks and labels. The Gaussian copula couples the same fits through the correlation matrix $R$ of normal scores computed on the training set. For a continuous coordinate, the normal score is based on the rank of its residual (the observation minus the mean of $200$ draws) among the training residuals. For a binary label, and for self-rated health taken as one coordinate, it is based on a randomized probability integral transform. A draw maps $\Phi(Z)$, $Z\sim\mathcal N(0,R)$, coordinate by coordinate through the fitted marginals: the empirical quantile function of $500$ draws for a continuous coordinate and the fitted probabilities for the others. DRF is the \textsf{R} package \texttt{drf}, version 1.3.1, at its defaults, which grow $3{,}000$ honest trees on half-samples with the MMD splitting rule, a minimum node size of $15$ and response scaling. It is fitted to the training set with the $17$-dimensional outcome, and its draws resample training outcome vectors with the forest's weights at $x$.

\paragraph{Metrics and timing.} All metrics use $500$ draws per test point, in the $17$-dimensional coding. The second term of the energy score averages $25$ disjoint pairs from a random subset of $50$ draws, the CRPS uses all pairs, and the ranked probability score uses the smoothed frequencies $(M_k+1)/(M+5)$ of the $M_k$ draws at level $k$. Fit times were measured on shared hardware. Two GEM fits shared a GPU, and DRF ran on $12$ CPU threads alongside other jobs. The fit time of GEM covers the $5{,}000$ iterations and the validation evaluations, that of the stacked marginals its seven networks, and that of DRF the whole call to \textsf{R}, sampling included.

\begin{table}[t]
\centering
\caption{Absolute values behind Figure~\ref{fig:nhanes}, as mean (standard error) over
the $20$ splits: joint energy score, CRPS averaged over the seven continuous
coordinates on the standardized scale, Bernoulli log-loss averaged over the five binary
labels, ranked probability score of self-rated health, and fit time in seconds
(GEM and the stacked marginals on one GPU, DRF on CPU). The standard
errors are larger than those of the paired differences because most of the
split-to-split variation is shared by all methods (see the text of this appendix).
The Gaussian copula is fitted on top of the stacked marginals, so its total cost is
that of the stacked marginals plus its own fit, which takes under a second and is the
time shown.
Bold: the best mean in each of the four metric columns, ties at the printed precision included.}
\label{tab:nhanes_absolute}
{\small\setlength{\tabcolsep}{4pt}
\begin{tabular}{l c c c c c}
\hline
 & joint & continuous & binary & ordinal & \\
method & energy score~$\downarrow$ & CRPS~$\downarrow$ & log-loss~$\downarrow$ & RPS~$\downarrow$ & fit time (s)~$\downarrow$ \\
\hline
GEM & \textbf{2.0880} & \textbf{0.4963} & \textbf{0.4534} & 0.4980 & 62.095 \\
 & (0.0021) & (0.0007) & (0.0008) & (0.0010) & (0.312) \\[2pt]
DRF & 2.1034 & 0.5001 & 0.4549 & 0.5057 & 243.8 \\
 & (0.0023) & (0.0008) & (0.0009) & (0.0010) & (16.2) \\[2pt]
Gaussian copula & 2.0951 & 0.4982 & 0.4567 & \textbf{0.4954} & 0.4278 \\
 & (0.0022) & (0.0007) & (0.0009) & (0.0010) & (0.0213) \\[2pt]
Stacked marginals & 2.0969 & 0.4973 & 0.4567 & \textbf{0.4954} & 55.306 \\
 & (0.0021) & (0.0007) & (0.0009) & (0.0010) & (1.617) \\
\hline
\end{tabular}}
\end{table}

\begin{table}[t]
\centering
\caption{Marginal CRPS of the seven continuous outcomes, in original units. The
``GEM score'' column is the CRPS of GEM averaged over the 20
splits; each baseline column is the paired difference baseline $-$ GEM (mean $\pm$
standard error over the splits), in the same units. Positive means GEM has the lower CRPS. $^{*}$ marks $|\Delta| > 2\,\mathrm{SE}$.}
\label{tab:nhanes_paired_crps}
{\small\setlength{\tabcolsep}{4pt}
\begin{tabular}{l r r r r }
\hline
outcome & GEM score & DRF & Copula & Stacked \\
\hline
BMI (kg/m$^2$) & 3.428 & +0.078$\,\pm\,$0.003$^{*}$ & +0.014$\,\pm\,$0.002$^{*}$ & +0.008$\,\pm\,$0.002$^{*}$ \\
waist (cm) & 8.271 & +0.189$\,\pm\,$0.008$^{*}$ & +0.035$\,\pm\,$0.005$^{*}$ & +0.021$\,\pm\,$0.004$^{*}$ \\
systolic BP (mmHg) & 8.545 & +0.004$\,\pm\,$0.004 & +0.033$\,\pm\,$0.005$^{*}$ & +0.018$\,\pm\,$0.005$^{*}$ \\
diastolic BP (mmHg) & 5.968 & +0.034$\,\pm\,$0.004$^{*}$ & +0.022$\,\pm\,$0.004$^{*}$ & +0.010$\,\pm\,$0.004$^{*}$ \\
HbA1c (\%) & 0.4172 & $-$0.0105$\,\pm\,$0.0004$^{*}$ & +0.0016$\,\pm\,$0.0003$^{*}$ & +0.0009$\,\pm\,$0.0003$^{*}$ \\
total cholesterol (mg/dL) & 21.61 & +0.07$\,\pm\,$0.02$^{*}$ & +0.08$\,\pm\,$0.01$^{*}$ & +0.03$\,\pm\,$0.01$^{*}$ \\
HDL cholesterol (mg/dL) & 7.939 & +0.139$\,\pm\,$0.006$^{*}$ & +0.033$\,\pm\,$0.004$^{*}$ & +0.017$\,\pm\,$0.003$^{*}$ \\
\hline
\end{tabular}}
\end{table}

\begin{table}[t]
\centering
\caption{Bernoulli log-loss of the five binary outcomes, from the smoothed marginal
probabilities $\hat p_j(x)=(n_+ + 1)/(M+2)$ of the $M=500$ draws. Columns as in
Table~\ref{tab:nhanes_paired_crps}: a positive entry means GEM
has the lower log-loss; $^{*}$ marks $|\Delta| > 2\,\mathrm{SE}$.}
\label{tab:nhanes_paired_logloss}
{\small\setlength{\tabcolsep}{4pt}
\begin{tabular}{l r r r r }
\hline
outcome & GEM score & DRF & Copula & Stacked \\
\hline
diabetes diagnosis & 0.3316 & +0.0011$\,\pm\,$0.0007 & +0.0027$\,\pm\,$0.0007$^{*}$ & +0.0026$\,\pm\,$0.0006$^{*}$ \\
hypertension diagnosis & 0.5403 & +0.0033$\,\pm\,$0.0006$^{*}$ & +0.0019$\,\pm\,$0.0004$^{*}$ & +0.0018$\,\pm\,$0.0005$^{*}$ \\
BP medication & 0.4476 & +0.0037$\,\pm\,$0.0006$^{*}$ & +0.0039$\,\pm\,$0.0005$^{*}$ & +0.0040$\,\pm\,$0.0005$^{*}$ \\
high-cholesterol diagnosis & 0.5587 & +0.0005$\,\pm\,$0.0005 & +0.0041$\,\pm\,$0.0006$^{*}$ & +0.0042$\,\pm\,$0.0007$^{*}$ \\
cholesterol medication & 0.3887 & $-$0.0011$\,\pm\,$0.0006 & +0.0040$\,\pm\,$0.0008$^{*}$ & +0.0043$\,\pm\,$0.0007$^{*}$ \\
\hline
\end{tabular}}
\end{table}

Table~\ref{tab:nhanes_absolute} gives the absolute values of the four metrics in
Figure~\ref{fig:nhanes}, together with the fit times. Tables~\ref{tab:nhanes_paired_crps} and~\ref{tab:nhanes_paired_logloss} break the
marginal comparison down by outcome. Because all methods are evaluated on the same
20 train/validation/test splits, we report paired differences. The split-to-split variation in the absolute scores is shared by all methods and cancels in the difference, so the standard errors of $\Delta$ are about two to eight times smaller than those of the scores. Two caveats apply to the stars: the splits re-partition the same participants, so the standard errors somewhat understate the uncertainty, and with 36 comparisons some stars are expected by chance. We therefore rely on the sign of the differences and its consistency across outcomes rather than on single entries.

On the continuous outcomes, GEM has a lower CRPS than every baseline for every variable except HbA1c, where DRF is better by $2.5\%$ of the score. DRF resamples observed outcome vectors and so reproduces the long right tail of HbA1c among diagnosed diabetics, which GEM does not fully capture. Elsewhere the margin over DRF reaches
$2.3\%$ (BMI, waist) and $1.8\%$ (HDL). Against the copula and stacked marginals
the gains are small, $0.1$--$0.4\%$ of the score, but consistent across all seven
outcomes. On the binary outcomes, GEM has the lower
log-loss for every label against the copula and stacked marginals, whose binary
marginals are fitted by independent classifiers. Against DRF it is better on four labels, on two of them by more than two standard errors, while cholesterol medication is slightly in DRF's favor, by less than two standard errors. Against
the marginal baselines the improvements are largest for the medication indicators
and diagnosed diabetes ($0.8$--$1.1\%$ of the score), where the labels depend strongly
on one another (blood-pressure medication implies a hypertension diagnosis) or, for diagnosed diabetes, on the measured HbA1c.

\end{document}